\newtheorem{thm} {\textcolor{black}{Theorem}}
\newtheorem{lem}[thm] {\textcolor{black}{Lemma}}
\newtheorem{prp}[thm] {\textcolor{black}{Proposition}}
\newtheorem{rem}[thm] {\textcolor{black}{Remark}}
\newtheorem{df}[thm]{\textcolor{black}{Definition}}
\def\QED{\mbox{\rule[0pt]{1.5ex}{1.5ex}}}
\def\endproof{\hspace*{\fill}~\QED\par\endtrivlist\unskip}
\def\R{{\mathbb{R}}}
\def\N{{\mathbb{N}}}
\def\C{{\mathbb{C}}}
\def\R{\mathbb{R}}
\def\C{\mathbb{C}}
\newcommand{\calP}{{\cal P}}
\newcommand{\calX}{{\cal X}}
\newcommand{\calY}{{\cal Y}}
\def\QED{\mbox{\rule[0pt]{1.5ex}{1.5ex}}}
\def\endproof{\hspace*{\fill}~\QED\par\endtrivlist\unskip}
\def\Label#1{\label{#1}\ [\ #1\ ]\ }
\def\Label{\label}
\begin{document}
\title{Second-Order Asymptotics of Conversions of Distributions and Entangled States
Based on Rayleigh-Normal Probability Distributions}

\author{Wataru~Kumagai,~
        Masahito~Hayashi~\IEEEmembership{Fellow, IEEE.}
\thanks{W. Kumagai is with Faculty of Engineering, Kanagawa University. e-mail: kumagai@kanagawa-u.ac.jp}
\thanks{M. Hayashi is with Nagoya University and National University of Singapore. e-mail: masahito@math.nagoya-u.ac.jp}
\thanks{This paper was presented in part at ISIT 2013.}
}

\maketitle

\begin{abstract}

We discuss the asymptotic behavior of conversions between two independent and identical distributions up to the second-order conversion rate when the conversion is produced by a deterministic function from the input probability space to the output probability space.
To derive the second-order conversion rate,
we introduce new probability distributions named Rayleigh-normal distributions.
The family of Rayleigh-normal distributions includes a Rayleigh distribution and coincides with the standard normal distribution in the limit case.
Using this family of probability distributions, we represent the asymptotic second-order rates for the distribution conversion.
As an application,
we also consider the asymptotic behavior of conversions between the multiple copies of two pure entangled states in quantum systems 
when only local operations and classical communications (LOCC) are allowed.
This problem contains entanglement concentration, entanglement dilution and a kind of cloning problem with LOCC restriction as special cases.


\end{abstract}

\begin{IEEEkeywords}
Random number conversion, 
LOCC conversion, 
Second-order asymptotics, 
Rayleigh-normal distribution. 
\end{IEEEkeywords}

\IEEEpeerreviewmaketitle

\section{Introduction}\label{sec:introduction}

The topics of this paper cover three different areas:
probability theory,
information theory
and quantum information theory.
We give below concise reasons why our results affect the three areas.
First, we derive a new family of probability distributions called Rayleigh-normal distributions as a solution of a simple optimal approximation condition of a normal distribution. 
The Rayleigh-normal distributions are parameterized by a positive real number and connect a Rayleigh distribution and the standard normal distribution via these parameters.
Since this distribution family is a new object and these properties are meaningful to handle these two kinds of distributions, 
this result has importance in probability theory. 
Second, thanks to the central limit theorem, 
an optimal conversion problem for probability distributions can be translated into the approximation condition of a normal distribution. 
Then, 
as a contribution to information theory, we find that the Rayleigh-normal distributions 
determine the second-order asymptotic behavior of conversion between probability distributions.
Third, conversions between quantum pure entangled states can be reduced to conversions between probability distributions.
In particular, as a contribution to quantum information theory, the Rayleigh-normal distributions determine the second-order asymptotic behavior of conversion between pure entangled states.

In the rest of this section,
we describe our contributions to each area and their relation  in more detail.


\subsection{Contribution to Probability Theory}
To characterize the second-order asymptotics of conversions between probability distributions and entangled states,
we introduce a family of new probability distributions on real numbers called a Rayleigh-normal distribution.
Besides its operational meaning,
it has its own interesting mathematical properties. 
The family of Rayleigh-normal distributions
is parameterized by a positive real value $v$,
and contains a Rayleigh distribution with a specific parameter
at $v=1$ 
and the standard normal distribution at $v=0$.
Also, it coincides with
the standard normal distribution in the limit as the parameter $v$ tends to infinity.
That is,
the family connects a Rayleigh distribution and the standard normal distribution, which is the origin of the name "Rayleigh-normal distribution."
The Rayleigh-normal distribution is defined as the solution of an optimization problem for continuous probability distributions on real numbers as in (\ref{eq:rn}). 
While the definition seems very abstract, 
we give the explicit form of its cumulative distribution function $Z_v$ for each parameter $0\le v<\infty$ in Theorem \ref{Zform}.
The explicit form of the Rayleigh-normal distribution is numerically computable and has four different expressions 
depending on the cases when $v=0$, $0<v<1$, $v=1$ and $v>1$.
Then, we can plot the graphs of the cumulative distribution functions of the Rayleigh-normal distributions as in Fig. \ref{RN},
and it is shown that  
the family of Rayleigh-normal distributions has
a kind of symmetry with respect to the parameter $v$
and some useful properties.

\subsection{Contribution to Conventional Information Theory}
In information theory, statistics and computer science, 
it is an important task to generate a random number, 
which is required for  stochastic simulation and information-theoretic security.
To guarantee its quality, we need to avoid pseudo-random random numbers.
In this case, we need to convert a physically generated random number to another random number.
Hence, the problem of random number conversion 
has been studied as one of the main topics in information theory \cite{VV95,VKV98,Han05,Hay08,NH13}.
The most important task is the conversion from a non-uniform random number to a uniform random number.
This problem has been discussed in information-theoretic security
because uniform random numbers are used as a resource for information-theoretic security.
However, for stochastic simulation, the required random number is not necessarily the uniform random number 
because many stochastic simulations use random seeds subject to non-uniform distributions, which depend on the purpose of the simulation \cite{BFS12,AT89}. 
For this demand, so many algorithms were developed to generate non-uniform random numbers \cite{Dev86,HLD04}.
Thus, this paper addresses conversions of general random numbers that are not necessarily uniform.
In the following, for a precise description of the problem, we discuss this problem as a conversion of distributions
because the difficulty of the problem depends on the distributions of the initial and target random numbers.

In this paper, we focus on conversions between the $n$-fold independent and identical distributions of two different distributions, and investigate its asymptotic conversion rate.
While the first-order conversion rate is known to be the ratio of the Shannon entropies \cite{Han03}, 
the second-order conversion rate has not been revealed.
One of our aims is showing the attainability and the optimality up to the second-order conversion rate. 
In the following, given a map $W$ from a finite set $\mathcal{X}$ to another finite set $\mathcal{Y}$,
we define a conversion $W$ from the set of probability distributions on $\mathcal{X}$ to that on $\mathcal{Y}$
as $W(P)(y):=P(W^{-1}(y))$ for a probability distribution $P$ on $\mathcal{X}$.
This conversion is called the {\it deterministic} conversion induced by $W$,
where the word 'deterministic' comes from the non-probabilistic property of $W$.
That is, a deterministic conversion describes our possible operation for conversion.

If we need to show only the attainability, it is enough to 
simply discuss only the class of deterministic conversions.
However, to show the optimality, we need to consider a larger class of conversions that contains the class of 
deterministic conversions.
A map $W'$ from the set of probability distributions on $\mathcal{X}$ to that on $\mathcal{Y}$
is called a {\it majorization} conversion 
if the majorization relation $P\prec W'(P)$ holds for any probability distribution $P$ on $\mathcal{X}$.
Interestingly, the precision of the class of majorization conversions is more easily upper bounded
than that of the class of deterministic conversions
because the property of majorization effectively works for the evaluation of the optimality.
Since the class of majorization conversions contains the class of deterministic conversions,
we focus on the class of majorization conversions in the proof of the optimality.
Further, the class of majorization conversions plays an important role in quantum information theory as well
because a quantum operation called an LOCC conversion for pure states is mathematically reduced to the majorization conversion of probability distribution. 

Throughout this paper, we consider both kinds of conversions 
between two independent and identical distributions of two given distributions $P$ and $Q$,
and employ the fidelity (or Bhattacharyya coefficient) $F$ as the measure of conversion accuracy.
When $P^n$ denotes the $n$-fold independent and identical distribution of 
the distribution $P$,
we mainly focus on the following integers, i.e.,
the maximum conversion number 
from $P$ to $Q$ under a permissible accuracy $0<\tau<1$ 
by deterministic conversions
\begin{eqnarray*}
L^{\cal D}_n(P, Q|\tau)
:=\max\{L\in\N~|
~\exists~ \hbox{deterministic~conversion } W \hbox{ s.t. }
F(W(P^n), Q^L)\ge\tau\},
\end{eqnarray*}
and
that by majorization conversions 
\begin{eqnarray*}
L^{\cal M}_n(P, Q|\tau)
:=\max \{L\in\N~|
~\exists~ \hbox{majorization~conversion } W' \hbox{ s.t. }F(W'(P^n), Q^L)\ge\tau\}.
\end{eqnarray*}
Those numbers represent how many copies of the target probability distribution $Q$ can be generated from the initial  probability distribution $P^n$ under the accuracy constraint $\tau$.
It is known that the first order coefficient of $L^{\cal D}_n(P, Q|\tau)$
is the ratio of the Shannon entropies $H(P)$ and $H(Q)$ \cite{Han03}
and does not depend on the accuracy $\tau$. 
Recently, as a more precise asymptotic characterization,
the second-order asymptotics attracts much attention \cite{Hay09,PPV10,Hay08}.
When either initial or target probability distribution is uniform,
the asymptotic expansions of these numbers are solved up to the second-order $\sqrt{n}$, whose coefficient depends on the constraint of the accuracy \cite{Hay08,NH13}.
However, the derivation of the second-order conversion rate has remained open for the non-uniform case (i.e. neither given nor target probability distribution is uniform).

In this paper, we show that the second-order asymptotics of two kinds of conversions can be essentially reduced into an optimal approximation problem of a normal distribution.
Moreover, we reveal that the Rayleigh-normal distribution is obtained by the solution of the optimal approximation problem
and
the asymptotic behavior of the maximum conversion numbers is described by the inverse function of a cumulative Rayleigh-normal distribution $Z_{v}$ with certain constants $D_{P,Q}$ and $v=C_{P,Q}$ as follows:
\begin{eqnarray}
L^{\cal D}_n(P, Q|\tau)
~\cong~ L^{\cal M}_n(P, Q|\tau)
~=~ \frac{H(P)}{H(Q)}n + \frac{Z_{C_{P,Q}}^{-1}(1-\tau^2)}{D_{P,Q}}\sqrt{n}
+o(\sqrt{n}),
\Label{exp.gen0}
\end{eqnarray}
where 
$\cong$ shows that the difference between the left and the right side terms is $o(\sqrt{n})$.
The asymptotic expansion (\ref{exp.gen0}) gives an operational meaning of the Rayleigh-normal distribution, i.e., it characterizes how the second-order conversion rate depends on the constraint for the accuracy of the conversion.

\subsection{Contribution to Quantum Information Theory}

In quantum information theory,
various quantum tasks have been proposed
and a specific entangled state is often required to implement those tasks.
In such a situation,
maximally entangled states are used as typical resource of entanglement.
However, other kinds of entangled states can be also used for efficient quantum tasks.
For example, in port-based teleportation, the optimal entangled state to be  used as the resource is different from the maximally entangled state \cite{IH09}.
As other examples, measurement based quantum computation \cite{GFE09} 
and quantum channel estimation \cite{Hay11}
require entangled states that are not necessarily maximally entangled.

When some distant parties want to implement some quantum tasks,
they have to prepare the desired entangled state in advance.
Then,
the distant parties can perform only restricted operations named LOCC.
Here, LOCC is a combination of local operations (LO) and classical communication (CC),
where 
LO represents quantum operations on each individual party and CC represents sharing of classical information described by bits between parties. 
LOCC is a fundamental method to convert a given entangled state into a desired entangled state shared between distant places.

Based on the motivation,
we consider LOCC conversion between multiple copies of general pure entangled states on bipartite systems in this paper.
We especially focus on the following integer, i.e., the maximum conversion number from $\psi$ to $\omega$ by LOCC under a permissible accuracy $0<\tau<1$
\begin{eqnarray}
L_n(\psi, \omega|\tau)
&:=&\max\{L\in\N~|
~\exists~ \hbox{LOCC } \Gamma \hbox{ s.t. }
F(\Gamma(\psi^{\otimes n}), \omega^{\otimes L})\ge\tau\}\Label{L0},
\end{eqnarray}
where $F$ is the fidelity between quantum states.
This number represents how many copies of the target entangled state $\omega$ can be generated from a given entangled state $\psi^{\otimes n}$ by LOCC under the accuracy constraint.
As a fundamental result of LOCC conversion,
Bennett {\it et. al.} \cite{BBPS96} showed that the first-order optimal LOCC conversion rate from a pure entangled state $\psi$ to another one $\omega$ is the ratio  of von Neumann entropies $S_{\psi}$ and $S_{\omega}$ of their reduced density matrices.
Moreover,
Kumagai and Hayashi \cite{KH13} derived 
the second-order conversion rate for entanglement dilution and entanglement concentration, which corresponds to the case when either $\psi$ or $\omega$ is the maximally entangled state.
The result of \cite{KH13} implies that the second-order asymptotic expansion of $L_n(\psi, \omega|\tau)$ in entanglement dilution and entanglement concentration are represented by the cumulative standard normal distribution function $\Phi$ as 
\begin{eqnarray}
L_n(\psi, \omega|\tau)
~=~ \frac{S_{\psi}}{S_{\omega}}n+ {const}_{\psi,\omega} ~\Phi^{-1}(1-\tau^2)\sqrt{n}+o(\sqrt{n}),
\Label{L1}
\end{eqnarray}
where 
$ {const}_{\psi,\omega}$ is a constant given in (\ref{qua.dil}) and (\ref{qua.con}).
In fact, 
besides entanglement dilution and entanglement concentration,
the cumulative standard normal distribution function $\Phi$ commonly appears in 
the second-order rates for typical quantum information-processing tasks including quantum hypothesis testing \cite{Li12,TH12}, 
classical-quantum channel coding \cite{TT13},
quantum fixed-length source coding \cite{TH12,DL14}, 
data compression with quantum side information \cite{TH12},
randomness extraction against quantum side information \cite{TH12}
and noisy dense coding \cite{DL14}.

In this paper,
we consider LOCC conversion when $\omega$ or $\psi$ are not necessarily maximally entangled.
This setting is more important when the entangled states are used for quantum tasks which require non-maximally entangled states such as measurement-based quantum computation \cite{GFE09} and quantum channel estimation \cite{Hay11}.
Thus, these tasks require us to efficiently generate non-maximally entangled states by LOCC conversion.
Surprisingly,
it is shown that the second-order optimal LOCC conversion rate between general pure states $\psi$ and $\omega$ cannot be represented by the cumulative distribution function of the standard normal distribution but by that of  the Rayleigh-normal distribution as follows:
\begin{eqnarray}
L_n(\psi, \omega|\tau)
~=~ \frac{S_{\psi}}{S_{\omega}}n + \frac{Z_{C_{\psi,\omega}}^{-1}(1-\tau^2)}{D_{\psi,\omega}}\sqrt{n} +o(\sqrt{n}),
\Label{exp.LOCC0}
\end{eqnarray}
where $D_{\psi,\omega}$ and $C_{\psi,\omega}$ are certain constants.
Since (\ref{exp.LOCC0}) is not contained in (\ref{L1}) in general,
our result is different from conventional behavior of second-order rates
and is quite nontrivial.
In particular,
it is clarified that 
the Rayleigh-normal distribution has an operational meaning also in quantum information theory from (\ref{exp.LOCC0}).
When either the initial state $\psi$ or the target entangled state $\omega$ is a maximally entangled state,
the cumulative Rayleigh-normal distribution function coincides with the cumulative standard normal distribution function $\Phi$
and the above expansion (\ref{exp.LOCC0}) recovers (\ref{L1}).
The asymptotic expansion (\ref{exp.LOCC0}) is similar to the form in (\ref{exp.gen0}).
In fact, 
it is shown that (\ref{exp.LOCC0}) is essentially equivalent to (\ref{exp.gen0}) in Section \ref{sec:application}.

Next, as a special situation of LOCC conversion, we focus on the case when the target
entangled state is the same with the given entangled state  (i.e. $\omega=\psi$).
In this special case, the formula (3) can be simplified to 
\begin{eqnarray}
L_n(\psi,\psi|\tau)
~=~ n+{\frac{\sqrt{8V_{\psi}\mathrm{log}\tau^{-1}}}{S_{\psi}}}\sqrt{n} +o(\sqrt{n}),
\Label{exp.clone0}
\end{eqnarray}
where $V_{\psi}$ is a constant depending on $\psi$.
This case can be regarded as a special type of asymptotic cloning problem, which garnered some interest recently.
However, our problem is different from conventional settings of cloning
in the following points.
While the knowledge of the state to be cloned is not perfect in the conventional setting, 
our setting assumes the perfect knowledge for the entangled state to be cloned. 
The essential point of our setting is
that our operations are restricted to LOCC operations and  
no additional entangled resource are used.
We note that the papers \cite{ACP04,OH06} also treat cloning problems under LOCC operations,
however,  their setting assumes an imperfect knowledge for the entangled state to be cloned and additional limited entangled resource unlike our setting.
To distinguish their setting, we call our setting the LOCC cloning with perfect knowledge, and call their setting the LOCC cloning with imperfect knowledge.

To characterize the performance of cloning, 
Chiribella et al.  \cite{CYY13} introduced the replication rate as the order of the number of the incremental copies after cloning.
When we apply their definition to the case of LOCC cloning with imperfect knowledge although they discussed the replication rate in the case of another type of cloning,
the replication rate is the order of the number $L_n(\psi,\psi|\tau)-n$ of the incremental copies in the optimal LOCC cloning. 
Then, the formula (\ref{exp.clone0}) shows that 
the replication rate of the LOCC cloning with perfect knowledge is 1/2.

\subsection{Outline of This Paper}

The paper is organized as follows.
In Section \ref{sec:family}, 
we introduce a new family of probability distributions on real numbers.
It describes the optimal conversion rate under the accuracy constraint in Section \ref{sec:asymptotic}.
In Section \ref{sec:non-asymptotic}, 
as problems in conventional information theory,
we formulate two kinds of approximate conversion problems between two probability distributions by using the deterministic transformation and  the majorization condition, respectively.
Then, we define the maximum conversion numbers 
and describe their properties in non-asymptotic setting.
In Section \ref{sec:asymptotic}, 
we derive the asymptotic expansion of these numbers up to the second-order $\sqrt{n}$.
In these derivations, we divide our setting into two cases: uniform case and non-uniform case. The non-uniform case itself
does not contain the uniform case; however, we show that the results in the
uniform case can be regarded as the limit of the results in the non-uniform case.
In Section \ref{sec:quantum}, 
we apply the results of Sections \ref{sec:asymptotic} to the LOCC conversion.
Then, we obtain the optimal LOCC conversion rate between general pure states up to the second-order $\sqrt{n}$. 
As a special case, we derive the rate of the incremental copies and the optimal coefficient for the LOCC cloning with the perfect knowledge.
In Section \ref{sec:conclusion}, 
we give the conclusion.

We give an outline of relations of our results.
In conversion to or from uniform distributions,
only quantile function of an initial or target distribution is important.
However, in conversion between general probability distributions,
we have to focus on the total behavior of distributions.
Applying the central limit theorem,
the problem in the second-order asymptotics can be reduced into an optimal approximation problem of  a normal distribution given by (\ref{eq:rn}). 
Then, we define new probability distributions called Rayleigh-normal distributions by (\ref{eq:rn}) 
and show their essential properties which are inevitable to discuss how the second-order asymptotics for conversion of probability distributions can be reduced to the optimal approximation problem  in (\ref{eq:rn}).
In particular, 
we show that the second-order performance of conversion is described by the Rayleigh-normal distribution.

\section{Rayleigh-Normal Distribution}\Label{sec:family}

We treat an optimal approximation problem 
with the continuous fidelity as an approximation measure 
and define a new class of probability distributions called the Rayleigh-normal distributions in (\ref{eq:rn}) of subsection \ref{sec:intro}.
As shown in Section \ref{sec:asymptotic},
the second-order asymptotics of conversions between probability distributions can be reduced to the optimal approximation problem of a normal distribution 
by the central limit theorem.
Since properties of the Rayleigh-normal distributions essentially determine the second-order conversion rate,
we give some properties in subsection \ref{sec:property}.

\subsection{Introduction of Rayleigh-Normal Distribution}
\Label{sec:intro}

In this subsection,
we introduce a new probability distribution family on $\R$ with one parameter which connects the standard normal distribution and a Rayleigh distribution with a specific parameter.
A function $Z$ on $\R$ is generally called a cumulative distribution function if $Z$ is right continuous, monotonically increasing  and satisfies $\displaystyle\lim_{x\to-\infty} Z(x)=0$ and $\displaystyle\lim_{x\to \infty} Z(x)=1$.
Then, there uniquely exists a probability distribution on $\R$ whose cumulative distribution coincides with $Z$.
That is, 
given a cumulative distribution function in the above sense,
it determines a probability distribution on $\R$.
To define the new probability distribution family,
we give its cumulative distribution functions.

For $\mu\in\R$ and $v>0$,
let $\Phi_{\mu,v}$ and $\phi_{\mu,v}$ be the cumulative distribution function and the probability density function of the normal distribution with the mean $\mu$ and the variance $v$.
We denote $\Phi_{0,1}$ and $\phi_{0,1}$ simply by $\Phi$ and $\phi$.
Using the continuous fidelity (or the Bhattacharyya coefficient) for continuous probability distributions $p$ and $q$ on $\R$ defined by 
\begin{eqnarray}
{\cal F}(p,q):=\int_{\R}\sqrt{p(x)q(x)}dx,
\end{eqnarray} 
we define the following function.
\begin{df}\Label{rn}
For $v>0$, 
the Rayleigh-normal distribution function $Z_v$ on $\R$ is defined by
\begin{eqnarray}
Z_{v}(\mu)
=1-\sup_{A}{\cal F}\left(\frac{dA}{dx}, \phi_{\mu,v}\right)^2,
\Label{eq:rn}
\end{eqnarray}
where $A:\R\to[0,1]$ runs over continuously differentiable monotone increasing functions satisfying $\Phi\le A\le1$ in the right hand side.
For $v=0$, 
the Rayleigh-normal distribution function $Z_0$ on $\R$ is defined to be the cumulative distribution function $\Phi$ of the standard normal distribution.
\end{df}

The Rayleigh-normal distribution function is proven to be a cumulative distribution function later, 
and thus, 
it determines a probability distribution on $\R$.
In addition, the right continuity of $Z_{v}$ for $v$ at $v=0$ is also shown latter.
The graphs of the Rayleigh-normal distribution functions can be described as in Fig. \ref{RN} by Proposition \ref{Zform}.
\begin{figure}[t]
 \begin{center}
 \hspace*{0em}\includegraphics[width=100mm, height=60mm]{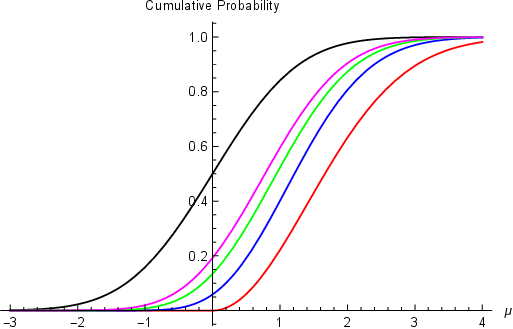}
 \end{center}
 \caption{
The graphs of the Rayleigh-normal distribution functions.
The black, purple, green, blue and red lines 
are displayed from the upper to the lower and
represent the Rayleigh-normal distributions with parameter $v=0$, $1/10$, $1/6$, $1/3$ and $1$.
}
 \Label{RN}
\end{figure}

The second order optimization problem discussed in Section \ref{sec:asymptotic}
is essentially reduced to 
the problem with respect to normal distributions.
Hence, our asymptotic conversion problem 
is essentially reduced to the conversion problem between two normal distributions, which is the right hand side of (\ref{eq:rn}).
Therefore, our Rayleigh-normal distribution function $Z_v$ plays an essential role in 
second-order asymptotics of conversion of distributions.

We note that Rayleigh distributions are included in Weibull distributions and Weibull-normal distribution is already proposed \cite{BSC14}, 
however, our Rayleigh-normal distribution is different from the Weibull-normal distribution in \cite{BSC14}
because a Rayleigh distribution with a specific scale parameter is included in the family of Rayleigh-normal distributions and is not in that of the Weibull-normal distributions.
Thus, the notion of the Rayleigh-normal distribution is first introduced in this paper.

\subsection{Properties of Rayleigh-Normal Distribution}
\Label{sec:property}

In this subsection,
we solve the optimization approximation problem of  a normal distribution in (\ref{eq:rn})
and give some useful properties of the Rayleigh-normal distributions.
In particular, 
we show how the family of Rayleigh-normal distributions connects a Rayleigh distribution and the standard normal distribution.

To give an explicit form of the Rayleigh-normal distribution functions,
we prepare four lemmas.
Their proofs are given in Appendix \ref{app.RN}.

\begin{lem}\Label{sol2}
When $0<v<1$,
the equation with respect to $x$
\begin{eqnarray}\Label{threshold2}
\frac{\phi(x)}{\phi_{\mu,v}(x)}=\frac{1-\Phi(x)}{1-\Phi_{\mu,v}(x)}
\end{eqnarray}
has the unique solution $\beta_{\mu,v}$.
Then, the inequality $\beta_{\mu,v}<\frac{\mu}{1-v}$ holds and $\beta_{\mu,v}$ is differentiable and monotonically increasing with respect to $\mu$. 
\end{lem}

\begin{lem}\Label{sol1}
When $v>1$, 
the equation with respect to $x$
\begin{eqnarray}
\frac{\phi(x)}{\phi_{\mu,v}(x)}=\frac{\Phi(x)}{\Phi_{\mu,v}(x)}\Label{threshold1}
\end{eqnarray}
has the unique solution $\alpha_{\mu,v} $.
Then, the inequality $\alpha_{\mu,v} >\frac{\mu}{1-v}$ holds and $\alpha_{\mu,v}$ is differentiable and monotonically decreasing with respect to $\mu$. 
\end{lem}

\begin{lem}\Label{monotone}
For $v>0$ and $\mu \in \mathbb{R}$,
the ratio $\frac{\phi(x)}{\phi_{\mu,v}(x)}$ is strictly monotonically decreasing only on the interval $\mathcal{I}_{\mu,v}$ 
defined by 
\begin{eqnarray}
\mathcal{I}_{\mu,v}
=\left\{
\begin{array}{cll}
\R&\hbox{ if }&v=1~ {\rm and}~ \mu>0\\
 \emptyset &\hbox{ if }&v=1~ {\rm and}~ \mu\le 0\\
(\frac{\mu}{1-v},\infty)&\hbox{ if }&v>1\\
(-\infty,\frac{\mu}{1-v})&\hbox{ if }&v<1,
\end{array}
\right.
\Label{area}
\end{eqnarray}
where $ \emptyset$ is the empty set.
\end{lem}

\begin{lem}\Label{lem.converse}
Assume that real numbers $t\le t'$ satisfy the following condition {\rm ($\star$)}:\\
{\rm ($\star$)} There exist $s$ and $s'$ that satisfy the following three conditions:
\begin{eqnarray}
&{\rm (I)}&s\le t\le t'\le s',\label{I}\nonumber\\
&{\rm (II)}&\frac{{\Phi}(t)}{{\Phi}_{\mu,v}(t)}=\frac{\phi(s)}{\phi_{\mu,v}(s)},~~\frac{1-{\Phi}(t')}{1-{\Phi}_{\mu,v}(t')}=\frac{\phi(s')}{\phi_{\mu,v}(s')},\label{II}\nonumber\\
&{\rm (III)}&\frac{\phi(x)}{\phi_{\mu,v}(x)}~is~strictly ~monotonically~decreasing~on~the~interval~(s,s').\label{III}\nonumber
\end{eqnarray}
Then the following inequality holds
\begin{eqnarray}
&&\sup_{A}{\cal F}\left(\frac{dA}{dx},\phi_{\mu,v}\right)\nonumber\\
&\le&
\sqrt{{\Phi}(t)} \sqrt{{\Phi}_{\mu,v}(t)}
+
\int_{t}^{t'}\sqrt{\phi(x)} \sqrt{\phi_{\mu,v}(x)}dx 
+\sqrt{1-{\Phi}(t')} \sqrt{1-{\Phi}_{\mu,v}(t')},\Label{lem.con.ineq2}
\end{eqnarray}
where $A:\R\to[0,1]$ in the left hand side runs over continuously differentiable monotone increasing functions satisfying $\Phi\le A\le1$.
\end{lem}

Here we introduce a function $A_{\mu,v}:\R\to[0,1]$ with parameters $\mu\in\R$ and $v>0$ which is separately defined with respect to the value of $v$ as follows.
When $v=1$,
\begin{eqnarray}
&&A_{\mu,v}(x) ~=~ A_{\mu,1}(x) ~:=~  \left\{
\begin{array}{ll}
\Phi_{\mu,1}(x) &\hbox{ if }~\mu<0 \\
\Phi(x) &\hbox{ if }~\mu\ge0.
\end{array}
\right. \Label{A_1}
\end{eqnarray}
When $v>1$,
\begin{eqnarray}
&A_{\mu,v}(x)
~:=~ \left\{
\begin{array}{cl}
\frac{\Phi(\alpha_{\mu,v})}{\Phi_{\mu,v}(\alpha_{\mu,v})}\Phi_{\mu,v}(x) & \hbox{ if }~x\le \alpha_{\mu,v} \\
\Phi(x) & \hbox{ if }~\alpha_{\mu,v}\le x.
\end{array}
\right.&
\Label{A_2}
\end{eqnarray}
When $0<v<1$,
\begin{eqnarray}
&A_{\mu,v}(x)
~:=~ \left\{
\begin{array}{cl}
\Phi(x) &\hbox{ if }~x\le \beta_{\mu,v} \\
1-\frac{1-\Phi(\beta_{\mu,v})}{1-\Phi_{\mu,v}(\beta_{\mu,v})}(1-\Phi_{\mu,v}(x)) & \hbox{ if }~\beta_{\mu,v}\le x.
\end{array}
\right.&
\Label{A_3}
\end{eqnarray}
The function $A_{\mu,v}$ is represented in Figs. \ref{Gauss1} and \ref{Gauss2}.
\begin{figure}[t]
 \begin{center}
 \hspace*{0em}\includegraphics[width=100mm, height=60mm]{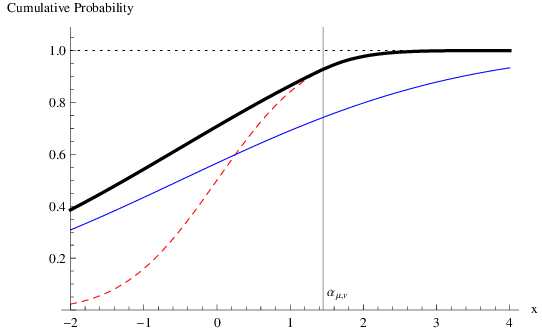}
 \end{center}
 \caption{
Let $v>1$. The dashed, the normal and the thick lines show $\Phi$, $\Phi_{\mu,v}$ and $A_{\mu,v}$, respectively. 
Then,
$A=A_{\mu,v}$ attains the supremum in (\ref{eq:rn}).
}
 \Label{Gauss1}
\end{figure}
%
%
\begin{figure}[t]
 \begin{center}
 \hspace*{0em}\includegraphics[width=100mm, height=60mm]{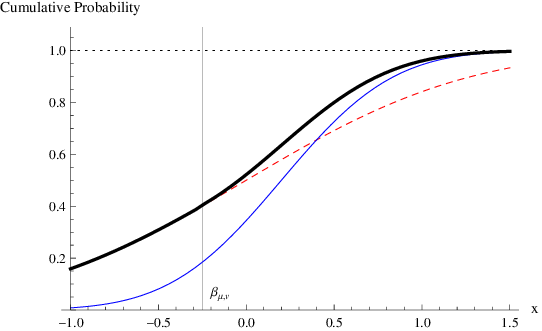}
 \end{center}
 \caption{
Let $0<v<1$. The dashed, the normal and the thick lines show $\Phi$, $\Phi_{\mu,v}$ and $A_{\mu,v}$, respectively. 
Then,
$A=A_{\mu,v}$ attains the supremum in (\ref{eq:rn}).
}
 \Label{Gauss2}
\end{figure}
%
\begin{lem}\Label{lem.converse2}
For an arbitrary $\epsilon>0$,
there exist real numbers $t\le t'$ which satisfy the condition {\rm ($\star$)} in Lemma \ref{lem.converse} and the following inequality
\begin{eqnarray}
&&\sqrt{{\Phi}(t)} \sqrt{{\Phi}_{\mu,v}(t)}
+
\int_{t}^{t'}\sqrt{\phi(x)} \sqrt{\phi_{\mu,v}(x)}dx +\sqrt{1-{\Phi}(t')} \sqrt{1-{\Phi}_{\mu,v}(t')}\nonumber\\
&\le&{\cal F}\left(\frac{dA_{\mu,v}}{dx},\phi_{\mu,v}\right) +\epsilon 
\Label{lem.con.ineq3}
\end{eqnarray}
\end{lem}

We denote the cumulative distribution function of the Rayleigh distribution with scale parameter $\sigma>0$ by
\begin{eqnarray}
R_{\sigma}(x)
~=~ \left\{
\begin{array}{ll}
1-e^{-\frac{x^2}{2\sigma^2}} &\hbox{ if } ~x>0\\
0 &\hbox{ if } ~x\le0.
\end{array}
\right.&
\Label{Ray}
\end{eqnarray}
Then,
a family of Rayleigh-normal distribution functions is represented as follows.
In particular, it includes the Rayleigh distribution with scale parameter $\sigma=\sqrt{2}$ when $v=1$.
\begin{thm}\Label{Zform}
For $v \ge 0$,
the following holds
\begin{eqnarray}
Z_{v}(\mu)
~=~ \left\{
\begin{array}{ll}
\Phi(\mu) &\hbox{ if }~~v=0 \\
1-(\sqrt{1-\Phi(\beta_{\mu,v})}\sqrt{1-\Phi_{\mu,v}(\beta_{\mu,v})} + I_{\mu,v}(\beta_{\mu,v}))^2 &\hbox{ if }~~0< v<1 \\
R_{\sqrt{2}}(\mu) &\hbox{ if }~~ v=1\\
1-(\sqrt{\Phi(\alpha_{\mu,v}) \Phi_{\mu,v}(\alpha_{\mu,v})} 
 + I_{\mu,v}(\infty) - I_{\mu,v}(\alpha_{\mu,v}))^2 &\hbox{ if }~~ v>1,
\end{array}
\right.&
\Label{Z}
\end{eqnarray}
where 
\begin{eqnarray}
\hspace{-2em}I_{\mu,v}(x)
&:=&\sqrt{\frac{2\sqrt{v}}{1+v}}e^{-\frac{\mu^2}{4(1+v)}} \Phi_{\frac{\mu}{1+v}, \frac{2v}{1+v}}\left(x\right),
\Label{Ix}\\
\hspace{-2em}I_{\mu,v}(\infty)
&:=&\lim_{x\to\infty} I_{\mu,v}(x)
=\sqrt{\frac{2\sqrt{v}}{1+v}}e^{-\frac{\mu^2}{4(1+v)}}.
\Label{Iinfty}
\end{eqnarray}
\end{thm}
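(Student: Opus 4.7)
The plan is to solve the variational problem in Definition \ref{rn} by a standard obstacle-type analysis. The workhorse inequality is Cauchy--Schwarz: on any interval $[a,b]$ on which $A$ is strictly between its bounds,
\begin{equation*}
\int_a^b \sqrt{A'(x)\,N_{\mu,v}(x)}\,dx \le \sqrt{(A(b)-A(a))(\Phi_{\mu,v}(b)-\Phi_{\mu,v}(a))},
\end{equation*}
with equality iff $A'=c\,N_{\mu,v}$ on $[a,b]$ for a constant $c$. On any interval where the obstacle $A=\Phi$ is active one has $A'=N$, so the integrand becomes $\sqrt{N\,N_{\mu,v}}$; completing the square in the exponent rewrites this as the Gaussian density $N_{\mu/(1+v),\,2v/(1+v)}$ multiplied by the scalar appearing in (\ref{Iinfty}), and integrating reproduces the formula for $I_{\mu,v}$ in (\ref{Ix}).

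First I would identify the qualitative shape of the optimizer in each regime. For $v<1$ the density $N_{\mu,v}$ is more peaked than $N$, so any free profile with $A'=c\,N_{\mu,v}$ starting near $0$ at $-\infty$ drops below $\Phi$ in the left tail; hence the obstacle $A=\Phi$ is active on some $(-\infty,\beta]$ and the free regime takes over on $[\beta,\infty)$, rising to $1$ at $+\infty$. For $v>1$ the density is flatter and the situation is mirrored: $A$ is free on $(-\infty,\alpha]$ and coincides with $\Phi$ on $[\alpha,\infty)$. At $v=1$ the free ansatz $A=\Phi_{\mu,v}$ satisfies $\Phi\le A$ only when $\mu\le 0$, yielding fidelity $1$ and $Z_1(\mu)=0=R_{\sqrt{2}}(\mu)$; for $\mu>0$ the obstacle must be active on all of $\R$, so $A=\Phi$ and the fidelity equals $I_{\mu,1}(\infty)=e^{-\mu^2/8}$, again matching $R_{\sqrt{2}}(\mu)$.

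Second, I would impose $C^1$ matching between the obstacle and the free regime. For $v<1$, continuity of $A$ at $\beta$ together with $A(\infty)=1$ forces $c=(1-\Phi(\beta))/(1-\Phi_{\mu,v}(\beta))$, while continuity of $A'$ forces $c=N(\beta)/N_{\mu,v}(\beta)$; equating the two recovers (\ref{threshold2}), whose unique root is $\beta_{\mu,v}$ by Lemma \ref{sol2}. The case $v>1$ is symmetric, producing (\ref{threshold1}) and $\alpha_{\mu,v}$ via Lemma \ref{sol1}. Substituting the resulting $c$ into the Cauchy--Schwarz equality gives the free-region fidelity $\sqrt{(1-\Phi(\beta_{\mu,v}))(1-\Phi_{\mu,v}(\beta_{\mu,v}))}$, respectively $\sqrt{\Phi(\alpha_{\mu,v})\Phi_{\mu,v}(\alpha_{\mu,v})}$, and adding the obstacle contribution $I_{\mu,v}(\beta_{\mu,v})$, respectively $I_{\mu,v}(\infty)-I_{\mu,v}(\alpha_{\mu,v})$, yields (\ref{Z}) after squaring.

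The hard part will be rigorously justifying the two-piece ansatz among \emph{all} admissible $A$: one must exclude oscillation between the obstacle and free regimes, and rule out the upper bound $A\le 1$ being active on a nontrivial interior interval. The latter is straightforward because $A\equiv 1$ on an interval contributes nothing to ${\cal F}$ and can always be strictly improved by substituting a free piece with $A'>0$. For the former, an over-determined-system argument suffices: any candidate free piece is forced to satisfy the smoothness condition $c=N(\beta_i)/N_{\mu,v}(\beta_i)$ at both of its endpoints together with the value-matching relation $\Phi(\beta_2)-\Phi(\beta_1)=c(\Phi_{\mu,v}(\beta_2)-\Phi_{\mu,v}(\beta_1))$, which generically over-determines the endpoints and permits at most one switching point, consistent with the one-root structure established in Lemmas \ref{sol2} and \ref{sol1}.
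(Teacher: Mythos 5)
Your direct half is sound and coincides with the paper's construction: the two\hbox{-}piece candidate you describe is exactly the function $A_{\mu,v}$ of (\ref{A_1})--(\ref{A_3}), the $C^1$ smooth-fit condition at the switching point is precisely equations (\ref{threshold2})/(\ref{threshold1}) whose unique roots $\beta_{\mu,v}$, $\alpha_{\mu,v}$ are supplied by Lemmas \ref{sol2} and \ref{sol1}, and evaluating ${\cal F}(dA_{\mu,v}/dx,N_{\mu,v})$ gives the three expressions in (\ref{Z}) (including the $v=1$ check). Since $A_{\mu,v}$ is admissible, this yields the inequality $\sup_A{\cal F}\ge{\cal F}(dA_{\mu,v}/dx,N_{\mu,v})$ rigorously.

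The genuine gap is the converse direction, i.e. that no admissible $A$ does better. Your argument reasons about ``the optimizer'': you assume a maximizer exists, is $C^1$, has a contact set that is a finite union of intervals, and is of the form $A'=cN_{\mu,v}$ on each free component. None of this is established: the admissible class is not compact, a Helly-type limit of a maximizing sequence need not be continuously differentiable (or even continuous), so existence of a maximizer inside the class cannot be taken for granted, and without it the ``qualitative shape'' discussion in your second paragraph is only a heuristic. Moreover, your exclusion of additional switching points is explicitly ``generic'': the smooth-fit conditions at two interior contact points plus value matching give three equations in the three unknowns $(\beta_1,\beta_2,c)$, so parameter counting alone rules out nothing; one actually needs the symmetry and strict monotonicity of $N/N_{\mu,v}$ (Lemma \ref{monotone}) together with a Cauchy mean-value argument, and one must also handle contact sets that are not finite unions of intervals, free intervals unbounded on one side, and portions where $A=1$ or $A$ is constant. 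The paper avoids all of these issues by proving a bound valid uniformly for \emph{every} admissible $A$: discretize $[t,t']$, apply the Schwarz inequality cell by cell, and invoke the rearrangement inequality of Lemma \ref{naiseki2} --- which uses only the partial-sum dominance coming from $A\ge\Phi$ and the monotone likelihood ratio of the increments --- to replace the increments of $A$ by those of $\Phi$; the limit $I\to\infty$ gives Lemma \ref{lem.converse}, and Lemma \ref{lem.converse2} chooses $t,t'$ (through $\beta_{\mu,v}$ or $\alpha_{\mu,v}$) so that this upper bound is within $\epsilon$ of ${\cal F}(dA_{\mu,v}/dx,N_{\mu,v})$. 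Supplying an argument of that uniform type (or a genuine existence-plus-regularity-plus-structure theorem for the variational problem) is exactly what is missing from your proposal.
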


\begin{proof}
Since the case of $v=0$ is trivial from the definition, we discuss the other cases.
The function $A_{\mu,v}$ defined in (\ref{A_1}), (\ref{A_2}) or (\ref{A_3}) is a continuous differentiable monotone increasing function satisfying $\Phi\le A\le1$.
Thus, we obtain 
\begin{eqnarray}
\sup_{A}{\cal F}\left(\frac{dA}{dx},\phi_{\mu,v}\right) 
&=&{\cal F}\left(\frac{dA_{\mu,v}}{dx},\phi_{\mu,v}\right) 
\Label{f-eq}
\end{eqnarray}
by Lemmas \ref{monotone}, \ref{lem.converse} and \ref{lem.converse2}, 
where $A:\R\to[0,1]$  in the left hand side runs over continuously differentiable monotone increasing functions satisfying $\Phi\le A\le1$.
From a direct calculation,
we can verify that 
\begin{eqnarray}
{\cal F}\left(\frac{dA_{\mu,v}}{dx},\phi_{\mu,v}\right) 
~=~ \left\{
\begin{array}{ll}
\sqrt{1-\Phi(\beta_{\mu,v})}\sqrt{1-\Phi_{\mu,v}(\beta_{\mu,v})} + I_{\mu,v}(\beta_{\mu,v}) &\hbox{ if }~~ 0<v<1 \\
1 &\hbox{ if }~~ v=1~\hbox{ and }~\mu\le0\\
e^{-\frac{\mu^2}{4\sigma^2}} &\hbox{ if }~~ v=1~\hbox{ and }~\mu>0\\
\sqrt{\Phi(\alpha_{\mu,v}) \Phi_{\mu,v}(\alpha_{\mu,v})} 
 + I_{\mu,v}(\infty) - I_{\mu,v}(\alpha_{\mu,v}) &\hbox{ if }~~ v>1.
\end{array}
\right.\nonumber
\end{eqnarray}
Therefore, the proof is completed.
\end{proof}

Then, we show some properties of the Rayleigh-normal distribution functions.
From Theorem \ref{Zform},
we can show a kind of symmetry of the family of the Rayleigh-normal distribution functions about the inversion of $v$ as follows.
These propositions are proven in Appendix \ref{app.properties}.
\begin{prp}\Label{Sym}
The following equation holds for $\mu\in\R$ and $v>0$:
\begin{eqnarray}
Z_v(\mu)=Z_{v^{-1}}\left(\frac{\mu}{\sqrt{v}}\right).
\Label{sym}
\end{eqnarray}
\end{prp}
By Proposition \ref{Sym},
the behavior of the Rayleigh-normal distribution function $Z_v$ for $v>1$ can be represented by that for $0< v<1$.

Next we show that the family of Rayleigh-normal distribution function includes the standard normal distribution function as its extreme case.
\begin{prp}\Label{F3lim}
The following equation holds for $\mu\in\R$:
\begin{eqnarray}
\lim_{v\to +0}Z_{v}(\mu)
~=~ \lim_{v\to \infty}Z_{v}(\sqrt{v}\mu)
~=~ \Phi\left(\mu\right).
\Label{Z-lim0}
\end{eqnarray}
\end{prp}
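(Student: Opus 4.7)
The plan is to use the symmetry Lemma \ref{Sym} to collapse the two limits to one, and then read off the limit from the explicit formula of Theorem \ref{Zform}. Applying $Z_v(\mu)=Z_{v^{-1}}(\mu/\sqrt{v})$ with $\mu$ replaced by $\sqrt{v}\mu$ gives $Z_v(\sqrt{v}\mu)=Z_{v^{-1}}(\mu)$, so sending $v\to\infty$ on the left is the same as sending $w:=v^{-1}\to 0^+$ on the right. It therefore suffices to prove $\lim_{v\to 0^+}Z_v(\mu)=\Phi(\mu)$ for each fixed $\mu\in\R$.

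For $v<1$, Theorem \ref{Zform} writes $Z_v(\mu)$ in terms of $\beta_{\mu,v}$ and $I_{\mu,v}$. The $I$-term is harmless: the prefactor $\sqrt{2\sqrt{v}/(1+v)}$ in (\ref{Ix}) tends to $0$ while the factor $\Phi_{\mu/(1+v),2v/(1+v)}(\beta_{\mu,v})$ is bounded by $1$, so $I_{\mu,v}(\beta_{\mu,v})\to 0$. Hence the problem reduces to proving $(1-\Phi(\beta_{\mu,v}))(1-\Phi_{\mu,v}(\beta_{\mu,v}))\to 1-\Phi(\mu)$.

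The crux is to pin down the asymptotics of $\beta_{\mu,v}$. I expect $\beta_{\mu,v}\to\mu$ together with $(\mu-\beta_{\mu,v})/\sqrt{v}\to+\infty$, and more precisely $\delta:=\mu-\beta_{\mu,v}\asymp\sqrt{v\log(1/v)}$. This heuristic comes from taking logarithms of (\ref{threshold2}): the leading balance forces $\delta^{2}/(2v)\sim\tfrac{1}{2}\log(1/v)$, while the right-hand side stays bounded since $1-\Phi(\mu-\delta)\to 1-\Phi(\mu)$ and $1-\Phi(-\delta/\sqrt{v})\to 1$. To turn the heuristic into a proof I would substitute the test values $\delta_{\pm}=\sqrt{v\log(1/v)}\,(1\pm\varepsilon)$ into (\ref{threshold2}), use standard Gaussian tail estimates to verify that for any fixed $\varepsilon>0$ the two sides of the equation are oppositely ordered at $\mu-\delta_{-}$ and at $\mu-\delta_{+}$ for all sufficiently small $v$, and then invoke uniqueness of the root (Lemma \ref{sol2}) to sandwich $\beta_{\mu,v}\in(\mu-\delta_{-},\mu-\delta_{+})$.

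Once this is in hand the conclusion is immediate: $\beta_{\mu,v}\to\mu$ gives $1-\Phi(\beta_{\mu,v})\to 1-\Phi(\mu)$, and $\Phi_{\mu,v}(\beta_{\mu,v})=\Phi((\beta_{\mu,v}-\mu)/\sqrt{v})\to 0$ gives $1-\Phi_{\mu,v}(\beta_{\mu,v})\to 1$. Combined with $I_{\mu,v}(\beta_{\mu,v})\to 0$, the formula of Theorem \ref{Zform} yields $Z_v(\mu)\to 1-(1-\Phi(\mu))\cdot 1=\Phi(\mu)$. The main obstacle is the implicit control of $\beta_{\mu,v}$; after that, every remaining step is a routine substitution or Gaussian tail bound.
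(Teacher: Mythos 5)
Your proposal is correct and follows the same skeleton as the paper's proof: reduce the $v\to\infty$ limit to the $v\to 0$ limit via Lemma \ref{Sym}, kill the $I_{\mu,v}(\beta_{\mu,v})$ term through the vanishing prefactor $\sqrt{2\sqrt{v}/(1+v)}$, and then reduce everything to the two facts $\beta_{\mu,v}\to\mu$ and $(\beta_{\mu,v}-\mu)/\sqrt{v}\to-\infty$. Where you differ is in how the crux, the control of $\beta_{\mu,v}$, is established. The paper argues softly: $\beta_{\mu,v}\to\mu$ comes from $\beta_{\mu,v}<\mu/(1-v)$ together with the observation that any fixed $x<\mu$ cannot be a zero of $f_{\mu,v}$ for small $v$; and the $\sqrt{v}$-scaled divergence comes directly from the defining equation (\ref{threshold2}), whose right-hand side is bounded below by $\Phi(-\gamma)$ once $\beta_{\mu,v}$ is bounded above, so taking logarithms of $N(\beta_{\mu,v})/N_{\mu,v}(\beta_{\mu,v})$ forces $\bigl(\beta_{\mu,v}-\mu(1-v)^{-1}\bigr)^2/v\to\infty$ without ever identifying the rate. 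Your route instead pins the rate $\mu-\beta_{\mu,v}\asymp\sqrt{v\log(1/v)}$ by a test-point sandwich at $\mu-\delta_{\pm}$ with $\delta_{\pm}=\sqrt{v\log(1/v)}\,(1\pm\varepsilon)$, using Gaussian tail bounds plus uniqueness of the root; this is more computational but yields sharper information than the paper needs, and the sign comparison you sketch does check out (at $\mu-\delta_-$ the density ratio is $O(v^{(1-(1-\varepsilon)^2)/2})\to 0$ while the tail ratio stays near $1-\Phi(\mu)>0$; at $\mu-\delta_+$ the density ratio blows up while the tail ratio stays bounded). Two small points to fix when writing it up: the sandwich interval should read $\beta_{\mu,v}\in(\mu-\delta_{+},\,\mu-\delta_{-})$, since $\delta_-<\delta_+$ makes your stated interval empty; and what you finally feed into Theorem \ref{Zform} is $\sqrt{1-\Phi(\beta_{\mu,v})}\sqrt{1-\Phi_{\mu,v}(\beta_{\mu,v})}+I_{\mu,v}(\beta_{\mu,v})\to\sqrt{1-\Phi(\mu)}$, which is equivalent to the product statement you wrote but worth phrasing in the form the theorem uses.
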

Thus, the Rayleigh-normal distribution function $Z_{v}$ is right continuous with respect to $v$ at $v=0$.

Finally, we give the following most basic property of the Rayleigh-normal distribution function.
\begin{prp}\Label{cum}
The Rayleigh-normal distribution function $Z_v$ is a cumulative distribution function for each $v\ge 0$.
\end{prp}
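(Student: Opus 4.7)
The plan is to verify the four defining properties of a cumulative distribution function---right continuity, monotone increase, and the limits $\lim_{\mu\to-\infty}Z_v(\mu)=0$ and $\lim_{\mu\to+\infty}Z_v(\mu)=1$---by splitting into cases in the parameter $v$. The cases $v=0$ and $v=1$ are immediate: $Z_0:=\Phi$ is the standard normal CDF by definition, and Theorem~\ref{Zform} gives $Z_1=R_{\sqrt 2}$, a Rayleigh CDF. The range $v>1$ reduces to $0<v<1$ via Lemma~\ref{Sym}: the identity $Z_v(\mu)=Z_{v^{-1}}(\mu/\sqrt v)$ realises $Z_v$ as the pullback of a CDF by the continuous increasing bijection $\mu\mapsto\mu/\sqrt v$, which preserves all four CDF properties. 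Thus the substantive work is confined to $0<v<1$.

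For monotonicity in this range, I would argue directly from the variational definition via a translation trick. Given $\mu_1<\mu_2$ and any admissible $A$ for the supremum defining $Z_v(\mu_2)$, set $B(x):=A(x+\mu_2-\mu_1)$. Then $B$ is continuous differentiable, monotone increasing, bounded by $1$, and the inequality $B(x)\ge\Phi(x+\mu_2-\mu_1)\ge\Phi(x)$ holds because $\mu_2-\mu_1>0$ and $\Phi$ is monotone, so $B$ is admissible. A change of variable together with the translation identity $N_{\mu_1,v}(y-(\mu_2-\mu_1))=N_{\mu_2,v}(y)$ gives ${\cal F}(dB/dx,N_{\mu_1,v})={\cal F}(dA/dx,N_{\mu_2,v})$; taking the supremum over $A$ yields $1-Z_v(\mu_1)\ge 1-Z_v(\mu_2)$, hence $Z_v$ is monotone increasing.

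Continuity in $\mu$ can be read off the closed-form expression in Theorem~\ref{Zform}: the proof of Lemma~\ref{sol2} shows that the derivative of the function whose unique zero defines $\beta_{\mu,v}$ is nonzero at that zero, so the implicit function theorem makes $\beta_{\mu,v}$ smooth in $\mu$, and the explicit formula then makes $Z_v$ smooth, in particular right continuous.

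For the limit at $+\infty$, I would split ${\cal F}(dA/dx,N_{\mu,v})$ at $x=\mu/2$ and apply Cauchy--Schwarz on each piece: the left piece is bounded by $\sqrt{\Phi_{\mu,v}(\mu/2)}=\sqrt{\Phi(-\mu/(2\sqrt v))}\to 0$, and the right piece by $\sqrt{1-A(\mu/2)}\le\sqrt{1-\Phi(\mu/2)}\to 0$ (using $A\ge\Phi$), producing a uniform-in-$A$ upper bound that vanishes, so $Z_v(\mu)\to 1$. For the limit at $-\infty$, I would build a nearly optimal test function by gluing $\Phi$ on the left to $\Phi_{\mu,v}$ on the right at their crossing point $x_0:=\mu/(1-\sqrt v)$, smoothed on a shrinking neighbourhood of $x_0$ to restore differentiability; since $\Phi_{\mu,v}(x_0)=\Phi(x_0)\to 0$ as $\mu\to-\infty$, this test function attains fidelity at least $1-\Phi_{\mu,v}(x_0)\to 1$, forcing $Z_v(\mu)\to 0$. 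The chief technical obstacle is this smoothing step: the patched function must be made differentiable with only $o(1)$ loss in fidelity while still satisfying $A\ge\Phi$; a standard mollifier with support shrinking to $\{x_0\}$ does the job.
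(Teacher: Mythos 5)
Your proposal is correct, and about half of it travels a genuinely different road from the paper. The reduction of $v>1$ to $v<1$ via Lemma~\ref{Sym}, the monotonicity argument (your translation of the test function $B(x)=A(x+\mu_2-\mu_1)$ is the same shift-invariance trick the paper phrases as an inclusion ${\cal A}(-\mu)\supset{\cal A}(-\tau)$ of admissible sets), and the continuity via differentiability of $\beta_{\mu,v}$ (the paper asserts it from Lemma~\ref{sol2}; you add the implicit-function-theorem justification, which is valid since $\frac{d}{dx}\bigl(N_{\mu,v}/N\bigr)\neq 0$ at $\beta_{\mu,v}<\frac{\mu}{1-v}$) all coincide with the paper. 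Where you diverge is the two tail limits: the paper reads them off the closed form of Theorem~\ref{Zform}, using $\beta_{\mu,v}<\frac{\mu}{1-v}$, the relation $\beta_{\mu,v}=-\sqrt{v}\,\alpha_{\frac{\mu}{\sqrt v},\frac1v}+\mu$ and $I_{\mu,v}(\infty)\to 0$, whereas you argue directly from the variational definition — a Cauchy--Schwarz splitting at $x=\mu/2$ giving a bound uniform in $A$ for $\mu\to+\infty$, and an explicit near-optimal test function glued at the crossing point $x_0=\mu/(1-\sqrt v)$ for $\mu\to-\infty$. Your route is more self-contained (it does not need Theorem~\ref{Zform} for the limits, nor the monotone dependence of $\alpha_{\mu,v}$ on $\mu$), at the price of the smoothing step you flag: your glued function $\max(\Phi,\Phi_{\mu,v})$ has a kink at $x_0$, and after mollification you must re-verify $A\ge\Phi$ (it does work — the kink is convex since $N_{\mu,v}(x_0)=N(x_0)/\sqrt v>N(x_0)$, and the fidelity loss over a shrinking neighbourhood of $x_0$ is $o(1)$ — but this needs a little care on the left piece where $A=\Phi$ exactly). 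Note you could avoid the mollification entirely by gluing at $\beta_{\mu,v}$ instead of $x_0$, i.e.\ by using the paper's function $A_{\mu,v}$ of (\ref{A_3}), which is $C^1$ by the defining equation (\ref{threshold2}) and gives fidelity at least $I_{\mu,v}(\beta_{\mu,v})+\sqrt{1-\Phi(\beta_{\mu,v})}\sqrt{1-\Phi_{\mu,v}(\beta_{\mu,v})}\to 1$; that would make your direct argument exactly as short as the paper's while keeping its independence from the closed-form computation.
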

%
%
By Proposition \ref{cum}, 
the set of the functions $Z_v$ determines a family of probability distributions on $\R$ 
with one parameter $v\ge 0$.
We call the probability distribution determined by $Z_v$ the Rayleigh-normal distribution.
As shown in Theorem \ref{thm.gen},
the family of probability distribution functions can represent the optimal conversion rate in the second-order asymptotics.

\section{Conversions for Probability Distributions: Non-Asymptotic Setting}\Label{sec:non-asymptotic}

In Sections \ref{sec:non-asymptotic} and \ref{sec:asymptotic},
we focus on information-theoretic aspects for two kinds of conversions called deterministic conversion and majorization conversion for probability distributions.
Their roles in quantum information theory will be explained in Section \ref{sec:application}.

\subsection{Deterministic Conversion}\Label{sec:deterministic}

Let $\calP(\calX)$ be the set of all probability distributions on a finite set $\calX$.
For $P\in\calP(\calX)$ and a map $f:\mathcal{X}\to\mathcal{Y}$, 
we define the probability distribution $W_f(P)\in\calP(\calY)$ by 
\begin{eqnarray}
W_f(P)(y):=\sum_{x\in W^{-1}(y)}P(x).
\Label{eq:deterministic}
\end{eqnarray}
We call a map $W_f:\calP(\mathcal{X}) \to \calP(\mathcal{Y})$ defined in (\ref{eq:deterministic}) 
a {\it deterministic conversion}.
Here, the word 'deterministic' comes from the non-probabilistic property of $f$.

In order to treat the quality of conversion,
we introduce the fidelity (or the Bhattacharyya coefficient) $F$ between two probability distributions over the same discrete set $\mathcal{Y}$ as
\begin{eqnarray}
F(Q, Q'):=\sum_{y\in\mathcal{Y}}\sqrt{Q(y)}\sqrt{Q'(y)}.
\end{eqnarray}
Since this value $F(Q, Q')$ relates to the Hellinger distance $d_H$ as $d_H(Q, Q')=\sqrt{1-F(Q, Q')}$ \cite{Vaa98},
it represents how close two probability distributions $Q$ and $Q'$.
When a permissible accuracy $0<\tau<1$ is fixed,
we define the maximal conversion number $L$ of copies of $Q$ 
by deterministic conversions with the initial distribution $P$ as
\begin{eqnarray*}
L^{\cal D}(P, Q|\tau)
:=\max
\{L\in\N ~|~
\exists~ f:\mathcal{X}\to\mathcal{Y}^L,~
F(W_f(P), Q^L)\ge\nu
\}.
\end{eqnarray*}
One of the main topics of the paper is to analyze the above maximum conversion number by deterministic conversions.
When we define the maximal fidelity $F^{\cal D}$ from $P\in\calP(\calX)$ to $Q\in\calP(\calY)$ 
among deterministic conversions by
\begin{eqnarray}
F^{\cal D}(P\to Q)
&:=&\sup_{W:\calP(\calX) \to \calP(\calY)}
\{F(W(P), Q)~|~ W~ \text{is a deterministic conversion}\}\\
&=&\sup_{f:\mathcal{X}\to\mathcal{Y}}
F(W_f(P), Q),\nonumber
\end{eqnarray}
the maximum conversion number $L^{\cal D}$ is rewritten as
\begin{eqnarray}
L^{\cal D}(P, Q|\tau)
~=~ \max\{L\in\N~|~F^{\cal D}(P\to Q^L)\ge\tau\}.
\end{eqnarray}
We denote the maximum conversion number from $n$-i.i.d. $P^n$ to i.i.d of $Q$ with a permissible accuracy $0<\tau<1$ by deterministic conversions as
\begin{eqnarray*}
L^{\cal D}_n(P, Q|\tau):=L^{\cal D}(P^n, Q|\tau).
\end{eqnarray*}

\subsection{Majorization Conversion}\Label{sec:majorization}

In order to relax the condition for the deterministic conversion, 
we introduce the concept of a majorization conversion.
For a probability distribution $P$ on a finite set $\mathcal{X}$, 
let $P^{\downarrow}=\{P^{\downarrow}_i\}_{i=1}^{\infty}$ be a probability distribution on the set $\N$ of natural numbers 
where 
$|\mathcal{X}|$ represents the cardinality of the set $\mathcal{X}$ and $P^{\downarrow}_i$ is the $i$-th largest element of $\{P(x)\}_{x\in\mathcal{X}}$ 
for $1\le i\le|\mathcal{X}|$
and 
$0$ for $i>|\mathcal{X}|$.
When probability distributions $P$ and $Q$ satisfy $\sum_{i=1}^lP^{\downarrow}_i \le \sum_{i=1}^lQ^{\downarrow}_i$ for any $l\in\N$, it is said that $P$ is majorized by $Q$ and written as $P\prec Q$.
Here, we note that the sets where $P$ and $Q$ are defined do not necessarily coincide with each other, and the majorization relation is a partial order on a set of probability distributions on finite sets \cite{MO79,Arn86}. 
Then a map $W'$ from the set of probability distributions on $\mathcal{X}$ to that on $\mathcal{Y}$
is called a majorization conversion 
if it satisfies $P\prec W'(P)$ for any probability distribution $P$ on $\mathcal{X}$\footnote{
We note that a majorization conversion is a generalization of deterministic conversion that prohibits a probabilistic mixture such as a doubly stochastic map. 
This is because a doubly stochastic map is NOT a majorization conversion in general as follows.
The input distribution is majorized by the output distribution in a majorization conversion while the output distribution is majorized by the input distribution in a doubly stochastic map.
}.

Majorization conversions have an operational meaning in secret correlation manipulation \cite{CP02}.
In the setting,
two parties secretly have a copy of a random variable distributed according to a probability distribution $P$ in the beginning 
and wish to generate a random variable distributed according to another one $Q$ 
without leaking any information about the generated random variable to an adversary. 
When they are allowed to use unlimited public communication, 
they can succeed at the above task with certainty if and only if $P$ is majorized by $Q$.
Majorization conversions have an operational meaning also in quantum settings as we will see in Section \ref{sec:quantum}.

We give two important remarks on the majorization. 
The first one is that 
a deterministic conversion is a majorization conversion, i.e., 
a deterministic conversion by a map $W:\mathcal{X}\to\mathcal{Y}$ satisfies 
the majorization relation $P\prec W(P)$ for any probability distribution $P$ on a finite set $\mathcal{X}$.
The second one is that, when the support size of a probability distribution $P$ is less than or equal to $m$ and $U_m$ is the uniform distribution with support size $m$, we have $U_m \prec P$.
This fact is necessary in the analysis for the quantum operation called entanglement concentration which will be treated in Section \ref{sec:quantum}.

Here, we define the maximum conversion number $L$ of $Q^L$ which can be approximated from $P$ under a permissible accuracy $0<\tau<1$ among majorization conversions as
\begin{eqnarray}
L^{\cal M}(P, Q|\tau)
&:=&\max \{L\in\N~|
~\exists~ \hbox{majorization~conversion } W \hbox{ s.t. }
F(W(P), Q^L)\ge\tau\}\\
&=&\max \{L\in\N~|
~\exists~ \hbox{probability~distribution } P'\succ P \hbox{ s.t. }
F(P', Q^L)\ge\tau\}.
\end{eqnarray}
The equality between (27) and (28) can be shown as follows.
As explained in the beginning of this subsection,
a majorization conversion is given as a assigning a probability distribution that majorizes the original distribution.
Hence, a majorization conversion $W$ of (27) to attain the maximum in (28)
is given by assigning the probability distribution $P' \succ P$ to attain the maximum in (28).
A probability distribution $P' \succ P$ to attain the maximum in (27)
is given by $W(P)$ when the majorization conversion $W$ attains the maximum in (27).
This discussion shows the equality between (27) and (28).

To analyze the above maximum conversion number by majorization conversion is also one of the main topics in this paper beside to treat that by deterministic conversions.
When we introduce the maximum fidelity among  the majorization conversions as
\begin{eqnarray}
F^{\cal M}(P\to Q)
~:=~ \max\{F(P', Q)~|~ P' \hbox{~is~a~probability~distribution~on~ }\mathcal{Y}, P'\succ P\}
\Label{FM}
\end{eqnarray}
where $P$ and $Q$ are probability distributions on $\mathcal{X}$ and $\mathcal{Y}$ respectively\footnote{We note that (\ref{FM}) can be reduced to a convex optimization problem in the following way.
First, we reorder the entries of $Q$ in decreasing order.
Since we discuss the maximum of $F(P',Q)$, we can restrict $P' \in\{P':P'\succ P\}$ to a distribution whose entries are in decreasing order.
Because the set of such distributions $P'$ is convex and the fidelity $F$ is concave with respect to each component,
(\ref{FM}) can be regarded as a convex optimization problem.},
the maximum conversion number $L^{\cal M}$ is rewritten as
\begin{eqnarray}
L^{\cal M}(P, Q|\tau) =\max\{L\in\N~|~F^{\cal M}(P\to Q^L)\ge\tau\}.
\end{eqnarray}
We also denote the maximum conversion number from $n$-i.i.d. of $P$ to i.i.d. of $Q$ under a permissible accuracy $0<\tau<1$ by majorization conversions as
\begin{eqnarray*}
L^{\cal M}_n(P, Q|\tau):=L^{\cal M}(P^n, Q|\tau).
\end{eqnarray*}
Then,
since a deterministic conversion is a kind of majorization conversion,
the following relations are derived:
\begin{eqnarray}
F^{\cal M}(P\to Q) &\ge& F^{\cal D}(P\to Q),\Label{fidelity ineq} \\
L^{\cal M}(P, Q|\tau) &\ge& L^{\cal D}(P, Q|\tau), \\
L^{\cal M}_n(P, Q|\tau) &\ge& L^{\cal D}_n(P, Q|\tau).\Label{number ineq}
\end{eqnarray}
These inequalities play an essential role in the asymptotics of the maximum conversion numbers by those conversions.

Next, we prepare two propositions for discussions in latter parts.
The following lemma gives the optimal majorization conversion accuracy from a uniform distribution to an arbitrary distribution.
\begin{prp}\cite{VJN00}\Label{opt dil}
For a probability distribution $Q$ and a natural number $m$, let 
${\cal D}_m(Q)$ be defined as follows:
\begin{eqnarray}
{\cal D}_m(Q)(j)
~:=~
\left\{\begin{array}{ccc}
\frac{Q^{\downarrow}(j)}{\sum_{i=1}^m Q^{\downarrow}(i)}&\hbox{ if }&1\le j\le m\\
0&\hbox{ if }&m+1\le j.
\end{array}
\right.
\Label{opt dil prob}
\end{eqnarray}
Then, for the uniform distribution $U_m$ whose support size is $L$, 
the following equation holds:
\begin{eqnarray}\Label{Deq}
F^{\mathcal{M}}(U_m\to Q)
~=~ F({\cal D}_m(Q), Q^{\downarrow})
~=~ \sqrt{\sum_{i=1}^m Q^{\downarrow}(i)}.
\end{eqnarray}
\end{prp}
Proposition \ref{opt dil} is easily proven by using the Schwarz inequality.

The following lemma gives the optimal majorization conversion accuracy from an arbitrary distribution to a uniform distribution.
\begin{prp}\Label{opt con}
For a probability distribution $P$ and a natural number $m$, 
we define the following distribution ${\cal C}_m(P)$ 
on $\{1, \ldots, m\}$ as a distribution approximating the uniform distribution:
\begin{eqnarray}
{\cal C}_m(P)(j)
~:=~ \left\{
\begin{array}{lll}
P^{\downarrow}(j)&\hbox{ if }&1\le j\le J_{P,m}-1\\
\frac{\sum_{i=J_{P,m}}^{|\mathcal{X}|} P^{\downarrow}(i)}{m+1-J_{P,m}}&\hbox{ if }&~J_{P,m}\le j\le m
\end{array}
\right.\Label{PL}
\end{eqnarray}
where 
\begin{eqnarray}
J_{P,m}
&:=&
\left\{
\begin{array}{ll}
1&\hbox{if } P^{\downarrow}(1)\le \frac{1}{m}\\
\max\left\{j \in \{2, \ldots, m\} 
\left|\frac{\sum_{i=j}^{|\mathcal{X}|} P^{\downarrow}(i)}{m+1-j}<P^{\downarrow}(j-1)\right.\right\} &\hbox{otherwise.}
\end{array}\right.
\Label{J}
\end{eqnarray}
Then, 
$P\prec {\cal C}_m(P)$ and 
the following equation hold:
\begin{eqnarray}
\hspace{-1.5em}F^{\mathcal{M}}(P\to U_m)
&\hspace{-0.1em}=&F({\cal C}_m(P), U_m)\nonumber\\
&\hspace{-0.1em}=&\sqrt{\frac{1}{m}}\left(\sum_{j=1}^{J_{P,m}-1}\sqrt{P^{\downarrow}(j)}+\sqrt{(m+1-J_{P,m})\sum_{i=J_{P,m}}^{|\mathcal{X}|} P^{\downarrow}(i)}\right).
\end{eqnarray}
\end{prp}
The proof of Proposition \ref{opt con} is given in Appendix \ref{app.OPT}.

We give an intuitive explanation for why ${\cal C}_m(P)$ appropriately approximates the target uniform distribution. 
To well approximate the uniform distribution $U_m$, we first reorder the probability weights such that they are in decreasing order.
Then, we need to move the probability weights smaller than the $m$-th largest weight because such events do not contribute to the fidelity with the uniform distribution.
To increase the fidelity, it is better to add these leftover weights to 
events with smaller weights so that
the resultant distribution is closer to the uniform distribution.
The best way is the following. We find a suitable threshold event, whose weight is the $J_{P,m}$-th largest. Then, we move the above leftover weights to the events from the $J_{P,m}$-th largest weight to the $m$-th largest weight so that the resultant weights are uniform on this part. Due to the choice of $J_{P,m}$, this conversion is available by majorization conversion.


\section{Conversions for Probability Distributions: Asymptotic Setting} \Label{sec:asymptotic}

We will derive the asymptotic expansion formulas for $L^{\cal D}_n(P, Q|\tau)$ and $L^{\cal M}_n(P, Q|\tau)$
up to the second-order term $\sqrt{n}$, which are called the second-order asymptotic expansions in information theory. 
Since the first and second orders are $n$ and $\sqrt{n}$,
their coefficients are called the first-order rate and the second-order rate, respectively. 
To begin with, we note that the first-order asymptotics of maximum conversion numbers  are potentially done as follows:
\begin{eqnarray}\Label{first}
\lim_{n\to\infty}\frac{L^{\cal D}_n(P, Q|\tau)}{n}
~=~ \lim_{n\to\infty}\frac{L^{\cal M}_n(P, Q|\tau)}{n}
~=~ \frac{H(P)}{H(Q)},
\end{eqnarray}
where $H(P)$ is the entropy of $P$, i.e., $H(P):=-\sum_{x\in{\cal X}} P(x)\log_2P(x)$.
The logarithmic function always has the base $2$ through this paper and we denote it simply by $\log$ in the following.
Throughout this paper, we assume a probability distribution has non-zero entropy, or equivalently, the support size of the probability distribution is not $1$.
The equation between the left hand side and the right hand side is obtained from the results about the intrinsic randomness and the resolvability in conventional information theory \cite{Han03}.
Similarly, the second equation is obtained from the results about entanglement concentration and dilution in quantum information theory \cite{BBPS96}.

\subsection{Asymptotic Expansion Formula}\Label{subsec:Gen}
To describe the asymptotic expansion formula,
we introduce three parameters as
\begin{eqnarray}
V(P)&:=&\sum_{x\in{\cal X}} P(x)(-\log P(x)-H(P))^2,\\
D_{P,Q}&:=&\frac{H(Q)}{\sqrt{V(P)}},\\
C_{P,Q}&:=&\frac{H(P)}{V(P)}\left(\frac{H(Q)}{V(Q)}\right)^{-1}.
\end{eqnarray}
In particular,
we call $C_{P,Q}$ the {\it conversion characteristics} between probability distributions $P$ and $Q$ in the following
because the explicit form of the second-order conversion rate and its derivation in Theorem \ref{thm.gen} differ depending on $C_{P,Q}$.

If both $P$ and $Q$ are non-uniform distributions,
the symmetry of Rayleigh-normal distributions represented by Proposition \ref{Sym} 
yields the relation
\begin{eqnarray}
\frac{Z_{C_{P,Q}}^{-1}(1-\tau^2)}{D_{P,Q}}
&=& \left(\frac{H(P)}{H(Q)}\right)^{\frac{3}{2}}
\frac{Z_{C_{Q,P}}^{-1}(1-\tau^2)}{D_{Q,P}}
\Label{another}
\end{eqnarray}
by using the equation $C_{P,Q}^{-1}=C_{Q,P}$.
Note that the left hand side of (\ref{another}) is well-defined when $Q$ is uniform (i.e. $C_{P,Q}=0$) since $Z_0$ is defined as $\Phi$
while it cannot be defined when $P$ is uniform (i.e. $C_{P,Q}=\infty$).
On the other hand, the right hand side of (\ref{another}) is well-defined when $P$ is uniform (i.e. $C_{Q,P}=0$).
Using these quantities and the relation \eqref{another},
as the main theorem, we obtain the following asymptotic expansion of two maximum conversion numbers with an accuracy constraint $\tau$,
which enables us to highly accurately evaluate the quantity $L_n(P,Q|\tau)$
although its direct calculation is very hard for a large number $n$. 

\begin{thm}\Label{thm.gen}
Let $P$ and $Q$ be arbitrary probability distributions on finite sets.
Then, 
the following asymptotic expansion holds for an arbitrary $\tau\in(0,1)$:
\begin{eqnarray}
L^{\cal D}_n(P, Q|\tau)
~\cong~ L^{\cal M}_n(P, Q|\tau)
~= 
\left\{
\begin{array}{ll}
\frac{H(P)}{H(Q)}n + \frac{Z_{C_{P,Q}}^{-1}(1-\tau^2)}{D_{P,Q}}\sqrt{n} +o(\sqrt{n})
&\hbox{ if } V(P)\neq 0 \\
\frac{H(P)}{H(Q)}n + 
\left(\frac{H(P)}{H(Q)}\right)^{\frac{3}{2}}
\frac{Z_{C_{Q,P}}^{-1}(1-\tau^2)}{D_{Q,P}} \sqrt{n} +o(\sqrt{n})
&\hbox{ if } V(Q)\neq 0 ,
\end{array}
\right.
\Label{exp.gen}
\end{eqnarray}
where $\cong$ shows that the difference between the left and the right side terms 
is $o(\sqrt{n})$ at most.
\end{thm}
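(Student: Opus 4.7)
The plan is to establish matching bounds on the two maximum convertible numbers. Since the inequality $L^{\cal M}_n(P,Q|\nu)\ge L^{\cal D}_n(P,Q|\nu)$ in (\ref{number ineq}) is already in hand, it suffices to prove an achievability bound for $L^{\cal D}_n$ from below and a converse bound for $L^{\cal M}_n$ from above. A single unified target quantity controls both: substituting $L_n=\frac{H(P)}{H(Q)}n+c\sqrt{n}$ into the definitions, the theorem reduces to proving
\begin{eqnarray*}
\lim_{n\to\infty}F^{\cal M}(P^n\to Q^{L_n})^2
= \lim_{n\to\infty}F^{\cal D}(P^n\to Q^{L_n})^2
= 1-Z_{C_{P,Q}}(c D_{P,Q}),
\end{eqnarray*}
and then applying the continuity and monotonicity of $Z_{C_{P,Q}}$ established in Section \ref{sec:family} to invert the accuracy constraint $F^2\ge \nu^2$.

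The key technical step is to relate the discrete fidelity $F^{\cal M}(P^n\to Q^{L_n})$ to the continuous optimization that defines the Rayleigh-normal distribution in Definition \ref{rn}. By the central limit theorem applied to $-\log_2 P^n(X_1,\ldots,X_n)=\sum_{i=1}^n(-\log_2 P(X_i))$, which has mean $nH(P)$ and variance $nV(P)$, the sorted probability mass of $P^n$ admits an information-spectrum description whose normalized CDF converges to $\Phi$; similarly, the sorted mass of $Q^{L_n}$ is asymptotically $N_{\mu,v}$ with $\mu = cD_{P,Q}$ and $v=C_{P,Q}$, since the first-order matching of $H(P)n\approx H(Q)L_n$ produces the mean shift $cD_{P,Q}$ and the variance ratio produces $C_{P,Q}$. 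The condition $P^n\prec P'$ for a candidate output $P'$ translates in this scaling limit to the pointwise inequality $\Phi\le A\le 1$ on the cumulative distribution $A$ of $P'$, and the discrete fidelity converges to the continuous fidelity ${\cal F}(dA/dx,N_{\mu,v})$ appearing in (\ref{eq:rn}). The supremum over majorization conversions therefore matches the supremum defining $Z_v$.

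I would carry this out in two directions. For the converse, given any majorization conversion $P^n\prec P'$ with $F(P',Q^{L_n})\ge\nu$, I would extract a continuous monotone function $A_n$ from the partial sums of $P'$ and use uniform convergence of the empirical log-likelihood CDFs (Berry--Esseen-type bounds) to pass to a limit point $A$ satisfying $\Phi\le A\le 1$ and ${\cal F}(dA/dx,N_{cD_{P,Q},C_{P,Q}})\ge\nu$, forcing $1-Z_{C_{P,Q}}(cD_{P,Q})\ge\nu^2$. For achievability, I would start from the explicit optimal $A$ for the Rayleigh-normal optimization (whose form is dictated by the proof of Theorem \ref{Zform} and by the extremal distributions ${\cal D}_L(Q)$, ${\cal C}_L(P)$ of Lemmas \ref{opt dil} and \ref{opt con}) and discretize it on the level sets of $-\log_2 P^n$, using type-class grouping to realize the approximation via an actual deterministic map $W:\mathcal{X}^n\to\mathcal{Y}^{L_n}$.

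The main obstacle will be the rigorous passage from the discrete to the continuous optimization in a way that controls the $o(\sqrt{n})$ error uniformly. Two points are delicate: first, the Berry--Esseen approximation degrades in the tails of the normal distribution, while the continuous fidelity ${\cal F}(dA/dx,N_{\mu,v})$ picks up nontrivial contributions from those very tails, so a tail truncation lemma is needed to show the truncated contribution is negligible in fidelity. Second, on the deterministic side, one must argue that restricting to deterministic (as opposed to general majorization) conversions loses at most $o(\sqrt{n})$ in $L_n$; this should follow from the fact that the number of types on $\mathcal{X}^n$ is polynomial in $n$, so rounding each type class to an integer assignment costs only a polynomial factor that is swamped by the $\sqrt{n}$ scale. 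Putting these ingredients together yields the identification of both second-order rates with the inverse Rayleigh-normal quantile.
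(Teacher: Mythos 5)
Your overall architecture coincides with the paper's: reduce the expansion to the limiting fidelity $F^{i}\bigl(P^{n}\to Q^{\frac{H(P)}{H(Q)}n+b\sqrt{n}}\bigr)$ as in Lemma \ref{lem.gen}, use (\ref{number ineq}) so that only deterministic achievability and a majorization converse are needed, identify the limiting spectral profiles $\Phi$ and $\Phi_{bD_{P,Q},C_{P,Q}}$ via the central limit theorem on $-\log_2 P^n$ (this is exactly Lemma \ref{lem.central}), translate the majorization constraint into $\Phi\le A\le 1$, and match the result with the optimization defining $Z_v$; your direct part (discretize a near-optimal $A$ on level sets of $-\log_2 P^n$ and realize it by a deterministic map up to the maximal atom size, plus a tail truncation) is essentially the paper's construction in Section \ref{sec:reduction.dir}, with Lemma \ref{alpha} playing the role of your rounding estimate.

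Two caveats. First, a genuine gap in coverage: the theorem is stated for arbitrary $P,Q$, but your scaling step presumes both information spectra are asymptotically Gaussian. When $P$ or $Q$ is uniform one has $V(P)=0$ or $V(Q)=0$, the constants $D_{P,Q},C_{P,Q}$ are undefined or degenerate ($v=0$), and the corresponding profile is a step function rather than $N_{\mu,v}$; the paper proves these cases separately in Section \ref{sec:uniform}, interpreting the formula through Lemma \ref{Sym} and the convention $Z_0=\Phi$, and for the converse with $Q=U_m$ it invokes a bound from \cite{Hay06} rather than the Gaussian-profile argument. Your proposal needs the same case split. Second, in the converse your Helly-extraction-plus-upper-semicontinuity route is workable but conceals the step the paper handles differently: the limit function $A$ need not be continuously differentiable (it can have jumps or a singular part), while $Z_v$ is defined as a supremum over $C^1$ functions, so you must add a mollification/approximation argument that preserves $\Phi\le A\le 1$ and does not decrease ${\cal F}$ by more than $\epsilon$. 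The paper instead bins the real line, applies the exchange inequality of Lemma \ref{naiseki2} -- which is valid only where $N/N_{\mu,v}$ is monotone, whence the condition {\rm($\star$)} of Lemma \ref{20} and the case analysis of Lemma \ref{lem.converse2} -- and compares directly with the explicit optimizer $A_{\mu,v}$. With these two items supplied (uniform cases and the regularization of the limit $A$), your plan yields the same proof as the paper's.
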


The graphs of the second-order rates of maximum conversion numbers are described as in Fig.  \ref{case3-1} for different conversion characteristics $C_{P,Q}$.
When $V(P)$ nor $V(Q)$ are not zero, 
both expressions are valid in the equation \eqref{exp.gen} due to the equation \eqref{another}.
Here, the conversion characteristics $C_{P,Q}$ corresponds to the parameter $v$ of the Rayleigh-normal distribution $Z_v$.
Even though we focus on the first case, i.e., $V(P)\neq 0$, 
the expression of the second-order rate is split to four cases depending on $C_{P,Q}$ as was shown in Theorem \ref{Zform}.
That is, we have five cases depending on the value of the conversion characteristics, totally.
In particular,
the case when $C_{P,Q}=1$ is singular since the second-order rate is non-negative even if the accuracy $\tau$ tends to $1$ unlike another parameter as shown in Fig. \ref{case3-1}.
On the other hand, the case when $C_{P,Q}=0$ is also singular in the sense that the second-order rate decays fastest as required accuracy gets higher.
Thus, the conversion characteristics $C_{P,Q}$ reflects the balance between $P$ and $Q$ and it can be regarded as a new kind of information quantity.
\begin{figure}[t]
 \begin{center}
 \hspace*{0em}\includegraphics[width=90mm, height=70mm]{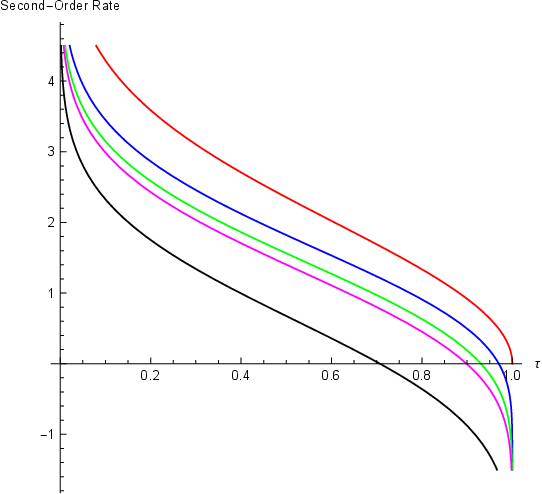}
 \end{center}
 \caption{
The second-order rates in (\ref{another}) of $L^{\cal D}_n(P, Q|\tau)$ behaves as above with respect to accuracy $\tau\in(0,1)$.
The black, purple, green, blue and red lines are displayed from the upper to the lower and  correspond to the cases when $C_{P,Q}=0$, $1/10$, $1/6$, $1/3$ and $1$ under $D_{P,Q}=1$.
The case $C_{P,Q}>1$
can be transformed to that with parameter $C_{P,Q}<1$
by Proposition \ref{Sym}.
Only when $C_{P,Q}=1$,  the second-order rate  is always non-negative and goes to $0$ when $\tau$ tends to $1$.
On the other hand, when $C_{P,Q}\ne 1$, the second-order rate goes to $-\infty$ when $\tau$ tends to $1$.
}
 \Label{case3-1}
\end{figure}

We give concrete forms of (\ref{exp.gen}) for three specific cases.
Here, let $U_m$ be the uniform distribution on the set $\{1, \ldots, m\}$.
When $P$ is the uniform distribution $U_m$, 
the second expression of (\ref{exp.gen}) coincides with the following concrete form by Theorem $\ref{Zform}$ with $v=0$:
\begin{eqnarray}
L^{\cal D}_n(U_m,Q|\tau)
~\cong~ L^{\cal M}_n(U_m,Q|\tau)
~\cong~ \frac{\log m}{H(Q)}n + \sqrt{\frac{V(Q)\log m}{H(Q)^3}}\Phi^{-1}(1-\tau^2)\sqrt{n}.
\Label{cla.dil}
\end{eqnarray}
When $Q$ is the uniform distribution $U_m$,
the first expression of (\ref{exp.gen}) coincides with the following concrete form by Theorem $\ref{Zform}$ with $v=0$:
\begin{eqnarray}
L^{\cal D}_n(P, U_m|\tau)
~\cong~ L^{\cal M}_n(P, U_m|\tau)
~\cong~ \frac{H(P)}{\log m}n + \frac{\sqrt{V(P)}}{\log m}\Phi^{-1}(1-\tau^2)\sqrt{n}.
\Label{cla.con}
\end{eqnarray}
When $Q$ is equal to $P$,
since $Z_{1}$ is the cumulative distribution function $R_{\sqrt{2}}$ of the Rayleigh distribution  by Theorem $\ref{Zform}$ with $v=1$ and $R_{\sqrt{2}}^{-1}(1-\tau^2)=\sqrt{8\log \tau^{-1}}$,
the formula (\ref{exp.gen}) is simplified to
\begin{eqnarray}
L^{\cal D}_n(P,P|\tau)
~\cong~ L^{\cal M}_n(P,P|\tau)
~\cong~ n + \frac{\sqrt{8V(P)\log \tau^{-1}}}{H(P)}\sqrt{n}.
\Label{cla.clone}
\end{eqnarray}
%


In the remaining part of this subsection,
we prepare a key lemma and give a proof of Theorem \ref{thm.gen}.
For this purpose, we introduce a notation 
\begin{eqnarray}
F_{P,Q}^i(b)
:=\lim_{n\to\infty} F^i\left(P^{n} \to Q^{\frac{H(P)}{H(Q)}n+b\sqrt{n}}\right)
 \Label{FPQ}
\end{eqnarray}
for $i={\cal D}$ or ${\cal M}$.
Then, (\ref{fidelity ineq}) implies that
\begin{eqnarray}
F_{P,Q}^{\cal D}(b) 
~\le~ F_{P,Q}^{\cal M}(b). \Label{H3}
\end{eqnarray}
We prepare the following lemma, which will be shown in the remaining subsections of this section.
\begin{lem}\Label{lem.gen}
Let $P$ and $Q$ be arbitrary probability distributions on finite sets.
Then, the following relations hold for any $b \in \R$:
\begin{eqnarray}
F_{P,Q}^{\cal D}(b) 
& \ge& 
\left\{
\begin{array}{ll}
\sqrt{1-Z_{C_{P,Q}}(bD_{P,Q})}
& \hbox{ if } V(P)\neq 0 \\
\sqrt{1-Z_{C_{Q,P}}\left(b\sqrt{\frac{H(Q)^3}{H(P)V(Q)}}\right)}
& \hbox{ if } V(Q)\neq 0 ,
\end{array}
\right.  \Label{H1}\\
F_{P,Q}^{\cal M}(b) &\le &
\left\{
\begin{array}{ll}
\sqrt{1-Z_{C_{P,Q}}(bD_{P,Q})}
& \hbox{ if } V(P)\neq 0 \\
\sqrt{1-Z_{C_{Q,P}}\left(b\sqrt{\frac{H(Q)^3}{H(P)V(Q)}}\right)}
& \hbox{ if } V(Q)\neq 0.
\end{array}
\right.
\Label{H2}
\end{eqnarray}
\end{lem}
When $V(Q)$ nor $V(P)$ is not zero,
the equation \eqref{another} guarantees that 
\begin{eqnarray*}
Z_{C_{P,Q}}(bD_{P,Q})
=Z_{C_{Q,P}}\left(b\sqrt{\frac{H(Q)^3}{H(P)V(Q)}}\right), 
\end{eqnarray*}
and thus, both expressions in the right hand sides of \eqref{H1} and \eqref{H2} give the same value. 

\noindent{\bf Proof of Theorem \ref{thm.gen}:}
Combining \eqref{H3}, \eqref{H1}, and \eqref{H2}, we obtain
\begin{eqnarray}
F_{P,Q}^{\cal D}(b)
~=~F_{P,Q}^{\cal M}(b)
~=~
\left\{
\begin{array}{ll}
\sqrt{1-Z_{C_{P,Q}}(bD_{P,Q})}
& \hbox{ if } V(P)\neq 0 \\
\sqrt{1-Z_{C_{Q,P}}\left(b\sqrt{\frac{H(Q)^3}{H(P)V(Q)}}\right)}
& \hbox{ if } V(Q)\neq 0 .
\end{array}
\right.
 \Label{2nd}
\end{eqnarray}
Theorem \ref{thm.gen} is obtained from \eqref{2nd} as follows.
For arbitrary $\tau\in(0,1)$,
we have 
\begin{eqnarray}
F_{P,Q}^{i~-1}(\tau)
=\frac{Z_{C_{P,Q}}^{-1}(1-\tau^2)}{D_{P,Q}}
\end{eqnarray}
for $i={\cal D}$ and ${\cal M}$
from (\ref{2nd}).
Since
\begin{eqnarray}
\lim_{n\to\infty} F^i\left(P^{n} \to Q^{\frac{H(P)}{H(Q)}n+\left(\frac{Z_{C_{P,Q}}^{-1}(1-\tau^2)}{D_{P,Q}}-\epsilon\right)\sqrt{n}}\right)
~=~F_{P,Q}^{i}(F_{P,Q}^{i~-1}(\tau)-\epsilon)
~>~ F_{P,Q}^{i}(F_{P,Q}^{i~-1}(\tau))
~=~ \tau
\end{eqnarray}
holds for $i={\cal D}$ and ${\cal M}$ and $\epsilon>0$,
$L^i_n(P, Q|\tau)$ is greater than or equal to the right side in (\ref{exp.gen}).
Similarly, 
\begin{eqnarray}
\lim_{n\to\infty} F^i\left(P^{n} \to Q^{\frac{H(P)}{H(Q)}n+\left(\frac{Z_{C_{P,Q}}^{-1}(1-\tau^2)}{D_{P,Q}}+\epsilon\right)\sqrt{n}}\right)
~=~ F_{P,Q}^{i}(F_{P,Q}^{i~-1}(\tau)+\epsilon)
~<~ F_{P,Q}^{i}(F_{P,Q}^{i~-1}(\tau))
~=~ \tau
\end{eqnarray}
holds.
Thus, $L^i_n(P, Q|\tau)$ for $i={\cal D}$ and ${\cal M}$ is less than or equal to the right hand side in (\ref{exp.gen}). 
Therefore, Theorem \ref{thm.gen} is obtained.
\endproof

From the above discussion, 
all we have to do is to show Lemma \ref{lem.gen}.
We separately prove the inequalities (\ref{H1}) and (\ref{H2}) 
for the uniform cases (i.e. $P$ or $Q$ is uniform) and the non-uniform case (i.e.
both $P$ and $Q$ are non-uniform).

\subsection{Limit of Tail Probability}

Before going to the proof of (\ref{H1}) and (\ref{H2}),
we prepare two useful lemmas to derive (\ref{H1}) and (\ref{H2}) in this subsection.
Let 
\begin{eqnarray}
S_n^Q(x)
:=\{1, 2, 3,  ..., \lceil 2^{H(Q)n+x\sqrt{n}} \rceil\}
\Label{S}
\end{eqnarray}
 and $S_n^Q(x, x'):= S_n^Q(x')\setminus S_n^Q(x)$.
Then, $ Q^{n\downarrow}(S_n^Q(x))$ represents the upper tail probability of $Q^{n\downarrow}$.

\begin{lem}\Label{lem.central}
For a non-uniform distribution $Q$ on a finite set and an arbitrary real number $x\in\R$,
\begin{eqnarray}
\displaystyle\lim_{n\to\infty} Q^{n\downarrow}(S_n^Q(x))
&=&\Phi\left(\frac{x}{\sqrt{V(Q)}}\right).
\Label{central1}
\end{eqnarray}
\end{lem}

The following lemma is a generalization of Lemma \ref{lem.central}.
\begin{lem}\Label{lem.central2}
Let $Q$ be a non-uniform distribution on a finite set.
For an arbitrary distribution $P$ on a finite set and arbitrary real numbers $x$ and $b\in\R$,
we have
\begin{eqnarray}
\displaystyle\lim_{n\to\infty} Q^{\frac{H(P)}{H(Q)}n+b\sqrt{n}\downarrow}(S_n^P(x))
&=&\Phi\left(\sqrt{\frac{H(Q)}{H(P)V(Q)}}(x-bH(Q))\right).
\Label{central2}
\end{eqnarray}
In particular, when $P$ is a non-uniform distribution as well as $Q$,
\begin{eqnarray}
\displaystyle\lim_{n\to\infty} Q^{\frac{H(P)}{H(Q)}n+b\sqrt{n}\downarrow}(S_n^P(x))
&=&\Phi_{P,Q,b}\left(\frac{x}{\sqrt{V(P)}}\right),
\Label{central3}
\end{eqnarray}
where $\Phi_{P,Q,b}:=\Phi_{b D_{P,Q}, C_{P,Q}}$.
\end{lem}
The proofs of Propositions \ref{lem.central} and \ref{lem.central2} are given in Appendix \ref{app.tail}.

\subsection{Uniform Distribution Cases}\Label{sec:uniform}

In this subsection, we prove Lemma \ref{lem.gen}, i.e., (\ref{H1}) and (\ref{H2}), 
for the uniform cases (i.e. $P$ or $Q$ is uniform).

\subsubsection{Source Distribution $P$ is Uniform}\Label{source}
We consider the case when 
$P$ is the uniform distribution $U_m$ and $Q$ is a non-uniform probability distribution on a finite set. 

\noindent{\it Sketch of proof of (49):}\quad
Firstly, we will show the existence of a sequence $\{f_n\}_{n=1}^{\infty}$ of
maps which satisfies 
\begin{eqnarray}
\liminf_{n\to\infty} F^{\mathcal{D}}(W_{f_n}(U_m^n), Q^{ \frac{\log m}{H(Q)}n+b\sqrt{n}\downarrow})
&\ge&
\sqrt{1-Z_{C_{Q,U_m}}\left(b\sqrt{\frac{H(Q)^3}{H(P)V(Q)}}\right)}.
\Label{Pineq1}
\end{eqnarray}
Since the definition of $F^{\cal D}$ implies 
\begin{eqnarray}
F^{\mathcal{D}}_{U_m,Q}(b)
~\ge~ F^{\mathcal{D}}(W_{f_n}(U_m^n), Q^{ \frac{\log m}{H(Q)}n+b\sqrt{n}\downarrow}),\Label{Pineq2}
\end{eqnarray}
the combination of (\ref{Pineq1}) and (\ref{Pineq2}) implies (\ref{H1}). 
Hence, the proof of (\ref{H1}) is reduced to the proof of (\ref{Pineq1}).

\noindent{\it Detailed proof of (49):}\quad
From the above sketch of proof, all we have to do is to show the existence of a sequence $\{W_n\}_{n=1}^{\infty}$ which satisfies (\ref{Pineq1}).
To show (\ref{Pineq1}), we prepare the following lemma.
\begin{lem}\Label{Wlem}
Let $S_1$ and $S_2$ be subsets of  the set $\N$ of natural numbers.
Suppose that $B=\{B(i)\}_{i\in S_1}$ and $C=\{C(j)\}_{j\in S_2}$ are non-negative real numbers in decreasing order  and  their sums of all components coincide with each other, i.e.  $\sum_{i\in S_1}B(i)=\sum_{i\in S_2}C(j)$.
Then, there exists a map $f:S_1\to S_2$ such that 
\begin{eqnarray}
B(i) &\le& W_f(C)(i) + \max_{j\in S_2}C(j)
\Label{Wmap2}
\end{eqnarray}
for any $i\in S_1$
where $W_f(C)(i):=\sum_{j\in f^{-1}(i)}C(j)$.
\end{lem}
Lemma \ref{Wlem} is proven in Appendix \ref{app.Exist}.

From Lemma \ref{Wlem},
there exists a map $f_n$ such that
\begin{eqnarray}
 Q^{ \frac{\log m}{H(Q)}n+b\sqrt{n}\downarrow}(j)
&\le& W_{f_n}(U_m^n)(j)+2^{-(\log m)n}
\Label{Wmap}
\end{eqnarray}
for $j\in S_n^{U_m}(0)=\{1,2,...,m^n\}$.
Since $S_n^{U_m}(-\gamma)\subset S_n^{U_m}(0)$ for an arbitrary $\gamma>0$,
the following inequalities are derived by the property (\ref{Wmap}):
\begin{eqnarray}
&&F^{\mathcal{D}}(W_{f_n}(U_m^n), Q^{ \frac{\log m}{H(Q)}n+b\sqrt{n}\downarrow})\nonumber\\
&\ge&\sum_{j\in S_n^{U_m}(-\gamma)}
\sqrt{W_{f_n}(U_m^{n})(j)} \sqrt{Q^{ \frac{\log m}{H(Q)}n+b\sqrt{n}\downarrow}(j)}
\nonumber\\
&\ge&\sum_{j\in S_n^{U_m}(-\gamma)}
\sqrt{\max\{Q^{ \frac{\log m}{H(Q)}n+b\sqrt{n}}(j)-2^{-(\log m)n}, 0\}} \sqrt{Q^{ \frac{\log m}{H(Q)}n+b\sqrt{n}\downarrow}(j)}
\nonumber\\
&\ge&\sum_{j\in S_n^{U_m}(-\gamma)}
\sqrt{Q^{ \frac{\log m}{H(Q)}n+b\sqrt{n}}(j)} \sqrt{Q^{ \frac{\log m}{H(Q)}n+b\sqrt{n}\downarrow}(j)}\Label{HH63}\\
&&-\sum_{j\in S_n^{U_m}(-\gamma)}
\sqrt{2^{-(\log m)n}} \sqrt{Q^{ \frac{\log m}{H(Q)}n+b\sqrt{n}\downarrow}(j)}
\nonumber\\
&\ge&
{Q^{ \frac{\log m}{H(Q)}n+b\sqrt{n}}(S_n^{U_m}(-\gamma))} 
-\sqrt{2^{-(\log m)n}} 
\sqrt{|S_n^{U_m}(-\gamma)|} 
\sqrt{Q^{ \frac{\log m}{H(Q)}n+b\sqrt{n}\downarrow}(S_n^{U_m}(-\gamma))}\Label{sch1}\\
&\ge&
{Q^{ \frac{\log m}{H(Q)}n+b\sqrt{n}}(S_n^{U_m}(-\gamma))} 
-\sqrt{2^{-(\log m)n}} 
\sqrt{2^{(\log m)n-\gamma\sqrt{n}}}\nonumber\\
&=&
{Q^{ \frac{\log m}{H(Q)}n+b\sqrt{n}}(S_n^{U_m}(-\gamma))} 
- \sqrt{2^{-\gamma\sqrt{n}}},
\Label{60}
\end{eqnarray}
where
 (\ref{HH63}) follows from $\sqrt{x-y} \ge \sqrt{x}- \sqrt{y}$ for any $x\ge y \ge 0$
 and the inequality (\ref{sch1}) is obtained by the Schwarz inequality.
Since the second term in (\ref{60}) goes to $0$ as $n$ tends to $\infty$,
\begin{eqnarray}
\liminf_{n\to\infty} F^{\mathcal{D}}(U_m^n\to Q^{ \frac{\log m}{H(Q)}n+b\sqrt{n}})
&\ge&
\lim_{\gamma\to0}\liminf_{n\to\infty} {Q^{ \frac{\log m}{H(Q)}n+b\sqrt{n}}(S_n^{U_m}(-\gamma))} \nonumber\\
&=&
\lim_{\gamma\to0}\Phi\left(\sqrt{\frac{H(Q)}{V(Q)\log m}}(\gamma-bH(Q))\right) \Label{philim1}\\
&=&\Phi\left(-\sqrt{\frac{H(Q)^3}{V(Q)\log m}}b\right)\nonumber\\
&=&\sqrt{1-Z_{C_{Q,U_m}}\left(b\sqrt{\frac{H(Q)^3}{H(P)V(Q)}}\right)},\Label{Z11}
\end{eqnarray}
where we used Lemma \ref{lem.central2} in (\ref{philim1}) and the definition $Z_{C_{Q,U_m}}=Z_0=\Phi$ in (\ref{Z11}).

\noindent{\it Proof of (\ref{H2})}:\quad
\begin{eqnarray}
\lim_{n\to\infty} F^{\cal M}\left(U_m^n\to Q^{ \frac{\log m}{H(Q)}n+b\sqrt{n}}\right)^2
&=&\lim_{n\to\infty} F^{\cal M}\left(U_2^{(\log m)n}\to Q^{ \frac{(\log m)n}{H(Q)}+\frac{b}{\sqrt{\log m}}\sqrt{(\log m)n}}\right)^2\Label{1}\\
&=&\lim_{k\to\infty} F^{\cal M}\left(U_2^{H(Q)k-\sqrt{\frac{H(Q)^3}{\log m}}b\sqrt{k}+o(\sqrt{k})}\to Q^{k}\right)^2
\Label{2}\\
&=&\lim_{k\to\infty} Q^{k\downarrow}\left(S^Q_k\left(-\sqrt{\frac{H(Q)^3}{\log m}}b\right)\right)
\Label{3}\\
&=&\Phi\left(-\sqrt{\frac{H(Q)^3}{V(Q)\log m}}b\right),\Label{4}\\
&=&\sqrt{1-Z_{C_{Q,P}}\left(b\sqrt{\frac{H(Q)^3}{H(P)V(Q)}}\right)}\Label{5}
\end{eqnarray}
where we used Proposition \ref{opt dil} in (\ref{3}) and Lemma \ref{lem.central2} in (\ref{4}).
In (\ref{2}), we replace the exponent of $Q$ with $k$  and represent the exponent $(\log m)n$ of $U_2$ by $k$.


~

\subsubsection{Target Distribution $Q$ is Uniform}\Label{target}
We consider the case when 
$Q$ is the uniform distribution $U_m$ and $P$ is a non-uniform probability distribution on a finite set. 

\noindent{\it Sketch of proof of (49):}\quad
We will first construct a sequence $\{P'_{n}\}_{n=1}^{\infty}$ of probability distributions such that
\begin{eqnarray}
\liminf_{n\to\infty} F(P'_{n}, U_m^{ \frac{H(P)}{\log m}n+b\sqrt{n}\downarrow})
&\ge&
\sqrt{1-Z_{C_{P,U_m}}\left(bD_{P,U_m}\right)}.
\Label{Qineq1}
\end{eqnarray}
Then, we will show the existence of a sequence $\{f_n\}_{n=1}^{\infty}$ of maps which satisfies 
\begin{eqnarray}
\liminf_{n\to\infty} F^{\mathcal{D}}(W_{f_n}(P^{n\downarrow}), U_m^{ \frac{H(P)}{\log m}n+b\sqrt{n}\downarrow})
&\ge&
\liminf_{n\to\infty} F(P'_{n}, U_m^{ \frac{H(P)}{\log m}n+b\sqrt{n}\downarrow}).
\Label{Qineq2}
\end{eqnarray}
Since we have the following inequality from the definition
\begin{eqnarray}
F^{\mathcal{D}}_{P,U_m}(b)
&\ge& F^{\mathcal{D}}(W_{f_n}(P^n), U_m^{ \frac{H(P)}{\log m}n+b\sqrt{n}\downarrow}),\Label{Qineq3}
\end{eqnarray}
the inequality (\ref{H1}) is derived by (\ref{Qineq1}), (\ref{Qineq2}) and (\ref{Qineq3}).

\noindent{\it Detailed proof of (\ref{H1}):}\quad
From the above sketch of proof, all we have to do  is to show (\ref{Qineq1}) and (\ref{Qineq2}).
We first construct a sequence $\{P'_{n}\}_{n=1}^{\infty}$ of probability distributions which satisfies (\ref{Qineq1}).
For an arbitrary $\epsilon>0$, 
we define a probability distribution $P'_{n}$ 
satisfying that
\begin{eqnarray}
P'_{n}(j)
&=&P^{n\downarrow}(S_n^P(b\log m+\epsilon, \infty)) U_m^{\frac{H(P)}{\log m}n+b\sqrt{n}}(j)\\
&=&P^{n\downarrow}(S_n^P(b\log m+\epsilon, \infty)) m^{-(\frac{H(P)}{\log m}n+b\sqrt{n})}
\Label{newprob-1}
\end{eqnarray}
for any $j\in S_n^P(b\log m)$.
Here, there is no constraint for $P'_{n}(j)$ with $j\in\N\setminus S_n^P(b\log m)$ 
as long as $P'_{n}$ is a probability distribution.
Note that $P'_{n}(j)$ is uniform on $S_n^P(b\log m)$. 
Then, we obtain
\begin{eqnarray}
\sum_{j \in S_n^P(b\log m)}P'_{n}(j)
~= \sum_{k \in S_n^P(b\log m+\epsilon, \infty)} P^{n\downarrow}(k).
\Label{assum1}
\end{eqnarray}

Then, the constructed sequence $\{P'_n\}_{n=1}^{\infty}$ satisfies (\ref{Qineq1}) as follows:
\begin{eqnarray}
&&\liminf_{n\to\infty} F(P'_{n}, U_m^{ \frac{H(P)}{\log m}n+b\sqrt{n}\downarrow})\\
&\ge&\liminf_{n\to\infty} 
\sum_{j\in S_n^P(b\log m)}
\sqrt{P'_{n}(j)}\sqrt{U_m^{ \frac{H(P)}{\log m}n+b\sqrt{n}\downarrow}(j)}
\nonumber\\
&=&\liminf_{n\to\infty} 
\sqrt{P^{n\downarrow}(S_n^P(b\log m+\epsilon, \infty))}
\sqrt{U_m^{ \frac{H(P)}{\log m}n+b\sqrt{n}\downarrow}(S_n^P(b\log m))}
\Label{H6-3}\\
&=&\liminf_{n\to\infty} 
\sqrt{P^{n\downarrow}(S_n^P(b\log m+\epsilon, \infty))}\\
&=&
\sqrt{1-\Phi\left(\frac{b\log m+\epsilon}{\sqrt{V(P)}}\right)} \Label{lim30}\\
&\overset{\epsilon\to0}{\to}&
\sqrt{1-\Phi\left(\frac{b\log m}{\sqrt{V(P)}}\right)} \\
&=&\sqrt{1-Z_{C_{P,U_m}}(bD_{P,U_m})},
\Label{lim31}
\end{eqnarray}
where (\ref{lim30}) follows from Lemma \ref{central1} and  (\ref{lim31}) follows from the definition $Z_{C_{P,U_m}}=Z_0=\Phi$.

Then, we will show the existence of a sequence $\{f_{n}\}_{n=1}^{\infty}$ of maps which satisfies (\ref{Qineq2}). 
From Lemma \ref{Wlem} and (\ref{assum1}), we choose a  map $f_{n}$ 
such that 
\begin{eqnarray}
&f_{n}(S_n^P(b\log m+\epsilon, \infty)) ~\subset~ S_n^P(b\log m),&
\Label{W2-1}\\
&P'_{n}(j)
~\le~ 
W_{f_n}(P^{n\downarrow})(j) 
+\max_{k\in S_n^P(b\log m+\epsilon, \infty)}P^{n\downarrow}(k)&
\Label{W1-1}
\end{eqnarray}
for any $j\in S_n^P(b\log m)$.
Since
\begin{eqnarray}
\max_{k\in S_n^P(b\log m+\epsilon, \infty)}P^{n\downarrow}(k)
~\le~ \min_{k\in S_n^P(b\log m+\epsilon)}P^{n\downarrow}(k),
\end{eqnarray}
we can evaluate as 
\begin{eqnarray}
W_{f_n}(P^{n\downarrow})(j) 
~\ge~
P'_{n}(j) - \min_{k\in S_n^P(b\log m+\epsilon)}P^{n\downarrow}(k).
\end{eqnarray}
Then, we have the following inequalities:
\begin{eqnarray}
&&F(W_{f_n}(P^{n\downarrow}), U_m^{ \frac{H(P)}{\log m}n+b\sqrt{n}\downarrow})\nonumber\\
&\ge&\sum_{j\in S_n^P(b\log m)}
\sqrt{W_{f_n}(P^{n\downarrow})(j)} \sqrt{U_m^{ \frac{H(P)}{\log m}n+b\sqrt{n}\downarrow}(j)}
\nonumber\\
&\ge&\sum_{j\in S_n^P(b\log m)}
\sqrt{\max\{P'_{n}(j)-\min_{k\in S_n^P(b\log m+\epsilon)}P^{n\downarrow}(k),0\}} \sqrt{U_m^{ \frac{H(P)}{\log m}n+b\sqrt{n}\downarrow}(j)}
\nonumber\\
&\ge&
\sum_{j\in S_n^P(b\log m)}
\sqrt{P'_{n}(j)}\sqrt{U_m^{ \frac{H(P)}{\log m}n+b\sqrt{n}\downarrow}(j)}
\Label{H6-2}\\
&&-
\sum_{j\in S_n^P(b\log m)}
\sqrt{\min_{k\in S_n^P(b\log m+\epsilon)}P^{n\downarrow}(k)} 
\sqrt{U_m^{ \frac{H(P)}{\log m}n+b\sqrt{n}\downarrow}(j)}\nonumber\\
&=&
F\left(P'_{n}, U_m^{ \frac{H(P)}{\log m}n+b\sqrt{n}\downarrow}\right)
-
\sum_{j\in S_n^P(b\log m)}
\sqrt{\min_{k\in S_n^P(b\log m+\epsilon)}P^{n\downarrow}(k)} 
\sqrt{U_m^{ \frac{H(P)}{\log m}n+b\sqrt{n}\downarrow}(j)}\\
&\ge&
F\left(P'_{n}, U_m^{ \frac{H(P)}{\log m}n+b\sqrt{n}\downarrow}\right)
-
\sqrt{\min_{k\in S_n^P(b\log m+\epsilon)}P^{n\downarrow}(k)} 
\sqrt{|S_n^P(b\log m)|},
\Label{HH90}
\end{eqnarray}
where (\ref{H6-2}) follows from $\sqrt{x-y} \ge \sqrt{x}- \sqrt{y}$ for any $x\ge y \ge 0$
and (\ref{HH90}) follows from the following inequality:
\begin{eqnarray*}
\sum_{j\in S_n^P(b\log m)} \sqrt{U_m^{ \frac{H(P)}{\log m}n+b\sqrt{n}\downarrow}(j)}
&\le&\sum_{j\in S_n^P(b\log m)} \sqrt{U_m^{ \frac{H(P)}{\log m}n+b\sqrt{n}\downarrow}(1)}\\
&=& |S_n^P(b\log m)| \sqrt{m^{-(\frac{H(P)}{\log m}n+b\sqrt{n})}}\\
&=& |S_n^P(b\log m)| \sqrt{2^{-(H(P)n+b\log m\sqrt{n})}}\\
&=& \sqrt{|S_n^P(b\log m)|}.
\end{eqnarray*}

To show (\ref{Qineq2}),
it is enough to show that the second term in (\ref{HH90}) goes to $0$ as $n$ goes to infinity.
To evaluate the second term in (\ref{HH90}),
we prepare the following lemma.
\begin{lem}\Label{alpha}
Let $P$ be a non-uniform distribution and $A$ be a continuous differentiable monotone increasing function satisfying $\Phi\le A\le1$.
When we set functions $y_{P,A}(x):\R\to\R$ and $\alpha_{n}^P(x)$ as 
\begin{eqnarray}
y_{P,A}(x)&:=&\sqrt{V(P)}\Phi^{-1}\left(A\left(\frac{x}{\sqrt{V(P)}}\right)\right),
\Label{y}\\
\alpha_{n}^P(x)
&:=&\min_{k \in S_n^P(y_{P,A}(x))} P^{n\downarrow}(k)
=P^{n\downarrow}(\lceil 2^{H(P)n+y_{P,A}(x)\sqrt{n}} \rceil),
\Label{def-alpha}
\end{eqnarray}
we have the following for $\epsilon>0$
\begin{eqnarray}
\Label{H10}
\alpha_{n}^P(x+\epsilon) |S_n^P(x)|
~\le~ 2^{-\epsilon\sqrt{n }}.
\end{eqnarray}
\end{lem}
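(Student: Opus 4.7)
\medskip

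\noindent\textbf{Proof proposal for Lemma \ref{alpha}.}

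The plan is to bound the two factors separately and use the key monotonicity $y(t) \ge t$ that comes from the hypothesis $\Phi \le A$.

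First I would verify that $y(t) \ge t$ for every $t \in \R$. Indeed, from $A \ge \Phi$ and the monotonicity of $\Phi^{-1}$ we get $\Phi^{-1}(A(t/\sqrt{V(P)})) \ge \Phi^{-1}(\Phi(t/\sqrt{V(P)})) = t/\sqrt{V(P)}$, and multiplying by $\sqrt{V(P)}$ yields $y(t) \ge t$. In particular $y(x+\epsilon) \ge x+\epsilon$, which is the single ingredient that produces the exponent $-\epsilon\sqrt{n}$ on the right-hand side.

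Second, I would bound $\alpha_n(x+\epsilon)$ by a pigeonhole argument. Set $k_n^* := \lceil 2^{H(P)n + y(x+\epsilon)\sqrt{n}} \rceil$. Since $P^{n\downarrow}(1) \ge P^{n\downarrow}(2) \ge \cdots \ge P^{n\downarrow}(k_n^*) = \alpha_n(x+\epsilon)$ and these $k_n^*$ values are part of a probability distribution, their sum is at most $1$, so
\begin{equation*}
\alpha_n(x+\epsilon) \;\le\; \frac{1}{k_n^*} \;\le\; 2^{-H(P)n - y(x+\epsilon)\sqrt{n}}.
\end{equation*}
Third, I would bound the cardinality trivially by $|S_n^P(x-p,x)| \le |S_n^P(x)| = \lceil 2^{H(P)n + x\sqrt{n}} \rceil$, which for $n$ large enough is at most $2^{H(P)n + x\sqrt{n}}$ (absorbing the ceiling into the exponent by taking $n$ so large that $2^{H(P)n + x\sqrt{n}} \ge 1$ dominates the additive $+1$, or equivalently shaving an infinitesimal amount off $\epsilon$). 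Multiplying the two bounds gives
\begin{equation*}
\alpha_n(x+\epsilon)\,|S_n^P(x-p,x)| \;\le\; 2^{(x - y(x+\epsilon))\sqrt{n}} \;\le\; 2^{-\epsilon\sqrt{n}},
\end{equation*}
where the last inequality uses $y(x+\epsilon) \ge x+\epsilon$ from Step 1.

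The only mild obstacle is the ceiling in the definition of $S_n^P$ and $k_n^*$, which produces an extra multiplicative factor of order $1$; since the claimed bound is stated without such a constant, I would handle it by asserting the inequality only for $n$ sufficiently large (so that the ceiling correction is dominated by an arbitrarily small fraction of the exponential factor $2^{-\epsilon\sqrt{n}}$), which is the regime in which the lemma is invoked in the subsequent asymptotic argument. Note that $p$ never enters the estimate at all, since the bound $|S_n^P(x-p,x)| \le |S_n^P(x)|$ discards it, which is harmless because the right-hand side of the lemma is also $p$-independent.
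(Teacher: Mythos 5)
Your argument is correct and is essentially the paper's own proof: the same pigeonhole bound $\alpha_n(x+\epsilon)\le 1/\lceil 2^{H(P)n+y(x+\epsilon)\sqrt{n}}\rceil\le 2^{-H(P)n-y(x+\epsilon)\sqrt{n}}$, the same cardinality bound $|S_n^P(x-p,x)|\le 2^{H(P)n+x\sqrt{n}}$, and the same final step $y(x+\epsilon)\ge x+\epsilon$ deduced from $A\ge\Phi$. The only cosmetic difference is your large-$n$ (or shaved-$\epsilon$) caveat for the ceiling, which can be avoided entirely by using the set-difference structure: since $S_n^P(x-p)$ contains at least one element, $|S_n^P(x-p,x)|\le \lceil 2^{H(P)n+x\sqrt{n}}\rceil-1\le 2^{H(P)n+x\sqrt{n}}$, so the stated inequality holds for every $n$.
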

\begin{proof}
The definition of $\alpha_n^P$ implies that
\begin{align}
\alpha_{n}(x+\epsilon)
~\le~ 2^{-(n H(P)+\sqrt{n}y_{P,A}(x+\epsilon))}.
\end{align}
Since 
$|S_n^P(x)|
\le 2^{n H(P)+ \sqrt{n}x}$,
we obtain 
\begin{eqnarray}
\alpha_{n}(x+\epsilon) |S_n^P(x)|
~\le~ 2^{-\sqrt{n}(y_{P,A}(x+\epsilon)-x)}.
\Label{HH96}
\end{eqnarray}
Since $A \ge \Phi$, we have
$x \le y_{P,A}(x) $.
Hence, we have
\begin{eqnarray}
2^{-\sqrt{n}(y_{P,A}(x+\epsilon)-x)}
~\le~ 2^{-\sqrt{n}((x+\epsilon)-x)}
= 2^{-\epsilon\sqrt{n}}.
\Label{HH97}
\end{eqnarray}
The inequalities (\ref{HH96}) and (\ref{HH97}) derives (\ref{H10}).
\end{proof}
By Lemma \ref{alpha} with $A=\Phi$,
the second term in (\ref{HH90}) is evaluated as 
\begin{eqnarray}
\sqrt{\min_{k\in S_n^P(b\log m+\epsilon)}P^{n\downarrow}(k)} 
\sqrt{|S_n^P(b\log m)|}
=\sqrt{\alpha_n^P(b\log m+\epsilon)
|S_n^P(b\log m)|}
\le \sqrt{2^{-\epsilon\sqrt{n}}}
\end{eqnarray}
and thus,
it goes to zero as $n$ goes to infinity.
%

\noindent{\it Sketch of proof of (50):}\quad
%
We introduce a notation for a real number $y$ as 
\begin{eqnarray}
\{P^{\downarrow}>y\}:=\{i\in\N~|~P^{\downarrow}(i)> y\}.
\end{eqnarray} 
Then, we will show the following inequality for $m_n=m^{\frac{H(P)}{\log m}n+b\sqrt{n}}$ and $m'_n=m^{\frac{H(P)}{\log m}n+(b-\lambda)\sqrt{n}}$ with $b\in\R$ and $\lambda>0$:
\begin{eqnarray}
F^{\cal M}(P^n\to U_m^{\frac{H(P)}{\log m}n+b\sqrt{n}})
&=&F^{\cal M}(P^n\to U_{m_n})\nonumber\\
&\le& \sqrt{\frac{|\{P^{n \downarrow}> 1/{m'_n}\}|-1}{m_n}}\sqrt{P^{n \downarrow}(\{P^{n \downarrow}>{1}/{m'_n}\})}\Label{finitelem3}\\
&&\hspace{0em}+\sqrt{1+\frac{1-|\{P^{n \downarrow}>{1}/{m'_n}\}|}{m_n}} \sqrt{1-P^{n \downarrow}(\{P^{n \downarrow}>{1}/{m'_n}\})}.
\nonumber
\end{eqnarray}
Moreover, we will show that
\begin{eqnarray}
\lim_{n\to\infty}\frac{|\{P^{n \downarrow}>{1}/{m'_n}\}|}{m_n} &=&0,\Label{lim1}\\
\lim_{\lambda\to+0}\lim_{n\to\infty} P^{n \downarrow}(\{P^{n \downarrow}>{1}/{m'_n}\}) &=&\Phi\left(\frac{b\log{m}}{\sqrt{V(P)}}\right). \Label{lim2}
\end{eqnarray}
Then, from (\ref{finitelem3}), (\ref{lim1}) and (\ref{lim2}),
we obtain (\ref{H2}) as follows:
\begin{eqnarray}
\lim_{n\to\infty}F^{\cal M}(P^n\to U_m^{\frac{H(P)}{\log m}n+b\sqrt{n}})
=\sqrt{1-\Phi\left(\frac{b\log{m}}{\sqrt{V(P)}}\right)}
=\sqrt{1-Z_{C_{P,U_m}}(bD_{P,U_m})},
\end{eqnarray}
where the last equality follows from the definition $Z_{C_{P,U_m}}=Z_0=\Phi$.

\noindent{\it Detailed proof of (50):}\quad
From the above sketch of proof,
all we have to do is to show (\ref{finitelem3}), (\ref{lim1}) and (\ref{lim2}).
First, we show (\ref{finitelem3}).
For an arbitrary positive integer ${L}$, 
we define as  $I_{P,m}:=|\{P>1/L\}|$.
Here, we show $I_{P,m}<J_{P,m}$.
To do so,
we assume that $I_{P,m}\ge J_{P,m}$ and derive a contradiction in the following.
Since 
\begin{eqnarray}
P^{\downarrow}(J_{P,m})
&\le& \frac{\sum_{i=J_{P,m}+1}^{|\mathcal{X}|} P^{\downarrow}(i)}{L-J_{P,m}}
\nonumber
\end{eqnarray}
holds by the definition of $J_{P,m}$,
we have
\begin{eqnarray}
P^{\downarrow}(J_{P,m})
&\le& \frac{\sum_{i=J_{P,m}}^{|\mathcal{X}|} P^{\downarrow}(i)}{m+1-J_{P,m}}
~=~ {\cal C}_m(P)(j)
\Label{Cineq}
\end{eqnarray}
for $J_{P,m}\le j \le L$.
Since we assume that $I_{P,m}\ge J_{P,m}$,
$P^{\downarrow}(J_{P,m})> L^{-1}$ holds by the definition of $I_{P,m}$
and it follows that all components of ${\cal C}_m(P)$ are strictly greater than  $L^{-1}$ by (\ref{Cineq}).
Then, ${\cal C}_m(P)$ cannot be a probability distribution because the total sum of its components is greater than $1$
and this is a contradiction.
For arbitrary positive integers $m\ge m'$,  
the inequalities $I_{P,m'}\le I_{P,m}<J_{P,m}$ hold by the definition of $I_{P,m}$.
Then, the following inequality holds:
\begin{eqnarray}
F^{\cal M}(P\to U_m)
&=&\sqrt{\frac{1}{m}}\left(\sum_{j=1}^{J_{P,m}-1}\sqrt{P^{\downarrow}(j)}+\sqrt{(m+1-J_{P,m})\sum_{i=J_{P,m}}^{|\mathcal{X}|} P^{\downarrow}(i)}\right)\Label{103}\\
&=&\sqrt{\frac{1}{m}}\left(\sum_{j=1}^{I_{P,m'}-1}\sqrt{P^{\downarrow}(j)} + \sum_{j=I_{P,m'}}^{J_{P,m}-1}\sqrt{P^{\downarrow}(j)} + \sqrt{(m+1-J_{P,m})\sum_{i=J_{P,m}}^{|\mathcal{X}|} P^{\downarrow}(i)}\right)\nonumber\\
&\le&\sqrt{\frac{1}{m}}\left(\sqrt{I_{P,m'}-1}\sqrt{\sum_{j=1}^{I_{P,m'}-1}P^{\downarrow}(j)} 
+\sqrt{m+1-I_{P,m'}} \sqrt{\sum_{i=I_{P,m'}}^{|\mathcal{X}|} P^{\downarrow}(i)}\right)\Label{104}\\
&=&\sqrt{\frac{1}{m}} \Big(\sqrt{|\{P^{\downarrow}> 1/{m'}\}|-1}\sqrt{P^{\downarrow}(\{P^{\downarrow}>{1}/{m'}\})}\nonumber\\
&&\hspace{2em}+\sqrt{m+1-|\{P^{\downarrow}>{1}/{m'}\}|} \sqrt{1-P^{\downarrow}(\{P^{\downarrow}>{1}/{m'}\})}\Big)\nonumber\\
&=& \sqrt{\frac{|\{P^{\downarrow}> 1/{m'}\}|-1}{m}}\sqrt{P^{\downarrow}(\{P^{\downarrow}>{1}/{m'}\})}\Label{spectrum}\\
&&\hspace{1em}+\sqrt{1+\frac{1-|\{P^{\downarrow}>{1}/{m'}\}|}{m}} \sqrt{1-P^{\downarrow}(\{P^{\downarrow}>{1}/{m'}\})},
\nonumber
\end{eqnarray}
where the equality (\ref{103}) is due to Proposition \ref{opt con} and the inequality (\ref{104}) is obtained by the Schwarz inequality.
Thus, we obtain (\ref{finitelem3}) by substituting $P=P^n$, $m=m_n$ and $m'=m'_n$ into (\ref{spectrum}).

Second, we  show (\ref{lim1}).
If $|\{P^{n \downarrow}>{1}/{m'_n}\}| > m'_n$,
the total sum of components $P^{n \downarrow}$ is greater than $1$ and it is a contradiction to the property of the probability distribution $P^{n \downarrow}$.
Thus, $|\{P^{n \downarrow}>{1}/{m'_n}\}| \le m'_n$ holds.
Then, (\ref{lim1}) follows as 
\begin{eqnarray}
\lim_{n\to\infty}\frac{|\{P^{n \downarrow}>{1}/{m'_n}\}|}{m_n} 
&\le& \lim_{n\to\infty}\frac{m'_n}{m_n}
~=~ \lim_{n\to\infty}m^{-\lambda\sqrt{n}}
~=~ 0.
\end{eqnarray}

Third, we show (\ref{lim2}).
Here, we have
\begin{eqnarray}
\displaystyle\lim_{n\to\infty} P^{n\downarrow}(\{P^{n \downarrow}>{1}/{m'_n}\})
&=&\displaystyle\lim_{n\to\infty} P^{n\downarrow}(\{P^{n \downarrow}>m^{-\frac{H(P)}{\log m}n-(b-\lambda)\sqrt{n}}\})\nonumber\\
&=&\displaystyle\lim_{n\to\infty} P^{n\downarrow}(\{P^{n \downarrow}>2^{-H(P)n-(b-\lambda)\log{m}\sqrt{n}}\})\nonumber\\
&=&\Phi\left(\frac{(b-\lambda)\log{m}}{\sqrt{V(P)}}\right)\Label{lamda-lim}
\end{eqnarray}
where the last equality will be proven in (\ref{central1'}) of Proof of Proposition \ref{lem.central}.
Thus, we obtain (\ref{lim2}) from (\ref{lamda-lim}).

\begin{rem}
For probability distributions $P$ and $Q$ on finite sets $\mathcal{X}$ and $\mathcal{Y}$, 
we have discussed the approximate conversion problem from the i.i.d. of $P$ to that of $Q$.
In particular, when $P$ or $Q$ is a uniform distribution, 
the problems have been well-known as the resolvability problem and the intrinsic randomness problem respectively \cite{Han03}.
Hayashi \cite{Hay08} treated the intrinsic randomness and Nomura and Han \cite{NH13} treated the resolvability besides the intrinsic randomness in the framework of the second-order asymptotics.
Their formulation is different from our formulation
because their analyses are based on the total variation distance not on the Hellinger distance,
which has the one-to-one correspondence to the fidelity.
Hence, our results of these special cases are not contained in their results.
On the other hand,
Tomamichel and Hayashi \cite{TH12} considered randomness extraction against quantum side information in the second-order asymptotics and adopted the fidelity to measure accuracy of the operations.
Since the intrinsic randomness in this paper is regarded as randomness extraction without quantum side information in \cite{TH12},
(\ref{cla.con}) can be directly obtained from Lemma $16$ in \cite{TH12}.

\end{rem}

\subsection{Non-Uniform Distribution Case}

In this subsection, we prove Lemma \ref{lem.gen}, i.e. (\ref{H1}) and (\ref{H2}), 
for the non-uniform case (i.e.
both $P$ and $Q$ are non-uniform).
For a preparation of our proof, depending on distributions $P$, $Q$ and a real number $b$,
we choose the probability density function of the normal distribution 
$\phi_{P, Q, b}:=\phi_{b D_{P,Q}, C_{P,Q}}$.
Then,
note that  the right hand side of (\ref{H1}) has another form as
\begin{eqnarray}
\sqrt{1-Z_{C_{P,Q}}(bD_{P,Q})}
=\sup_{A}{\cal F}\left(\frac{dA}{dx}, \phi_{P, Q, b}\right)
\Label{d-c5}
\end{eqnarray}
by the definition of the Rayleigh-normal distribution,
where supremum is taken over the functions satisfying the conditions in Definition \ref{rn}. 
Instead of the left hand side of (\ref{d-c5}),
we evaluate the right hand side of (\ref{d-c5}) in the proofs of (\ref{H1}) and (\ref{H2}).

\subsubsection{Direct Part} \Label{sec:reduction.dir}
In this subsection,
we prove (\ref{H1}) for the non-uniform case.
\par
\noindent{\it Sketch of proof of (49):}\quad
To prove (\ref{H1}),
it is enough to show 
\begin{align}
F_{P,Q}^{\cal D}(b) 
~\ge~ {\cal F}\left(\frac{dA}{dx}, \phi_{P, Q, b}\right) - \epsilon
\Label{ineq109}
\end{align}
for an arbitrary continuous differentiable monotone increasing function $A$ satisfying $\Phi\le A\le1$ and an arbitrary $\epsilon>0$
because of (\ref{d-c5}).

Given a continuous differentiable monotone increasing function $A$ satisfying $\Phi \le A \le 1$,
we will first construct a sequence $\{P'_{n,I}\}_{n=1}^{\infty}$ of probability distributions  for each $I\in\N$ such that
\begin{eqnarray}
\liminf_{n\to\infty} F(P'_{n,I}, Q^{ \frac{H(P)}{H(Q)}n+b\sqrt{n}\downarrow})
&\ge&
{\cal F}\left(\frac{dA}{dx}, \phi_{P, Q, b}\right) - \epsilon.
\Label{Gineq1}
\end{eqnarray}
Then, we will show the existence of a sequence $\{f_n\}_{n=1}^{\infty}$ of maps which satisfies 
\begin{eqnarray}
\liminf_{n\to\infty} F^{\mathcal{D}}(W_{f_n}(P^{n\downarrow}), Q^{ \frac{H(P)}{H(Q)}n+b\sqrt{n}\downarrow})
&\ge&
\liminf_{n\to\infty} F(P'_{n,I}, Q^{ \frac{H(P)}{H(Q)}n+b\sqrt{n}\downarrow}).
\Label{Gineq2}
\end{eqnarray}
Since we have the inequality from the definition
\begin{eqnarray}
F_{P,Q}^{\cal D}(b) 
\ge F^{\mathcal{D}}(W_{f_n}(P^n), Q^{ \frac{H(P)}{H(Q)}n+b\sqrt{n}\downarrow}),\Label{Gineq3}
\end{eqnarray}
the inequality (\ref{ineq109}) is derived by (\ref{Gineq1}), (\ref{Gineq2}) and (\ref{Gineq3}).

\noindent{\it Detailed proof of (49):}\quad
From the above sketch of proof, all we have to do  is to show (\ref{Gineq1}) and (\ref{Gineq2}).
%
We first construct a sequence $\{P'_{n,I}\}_{n=1}^{\infty}$ of probability distributions which satisfies (\ref{Gineq1}).
For an arbitrary $\epsilon>0$, we choose $\lambda>0$ which satisfies
\begin{eqnarray}
\int_{(-\infty,-\lambda)\cup(\lambda,\infty)}\sqrt{\frac{dA}{dx}(x)}\sqrt{\phi_{P,Q,b}(x)}dx
&\le & \epsilon.
\Label{bishou0}
\end{eqnarray}
For $I\in\N$, $0\le i\le I$, and $\lambda > 0$, 
we set sequences as 
\begin{eqnarray}
x_{i}^I:=\sqrt{V(P)}\left(-\lambda+\frac{2\lambda}{I}i\right), 
~~y_i^I:=y_{P,A}(x_{i}^I),
\end{eqnarray}
where the function $y_{P,A}(x)$ was defined in (\ref{y}).
Here we introduce a probability distribution $P'_{n, I}$.
For any $j\in\cup_{i=1}^I S_n^P(x_{i-1}^I, x_{i}^I)$, 
we note that there uniquely exists $i$ such that $j\in S_n^P(x_{i-1}^I, x_{i}^I)$.
Then we define $P'_{n, I}$ as 
\begin{eqnarray}
P'_{n, I}(j)
&=&\frac{P^{n\downarrow}(S_n^P(y_{i+1}^I, y_{i+2}^I))}{Q^{\frac{H(P)}{H(Q)}n+b\sqrt{n}\downarrow}( S_n^P(x_{i-1}^I, x_{i}^I))}Q^{\frac{H(P)}{H(Q)}n+b\sqrt{n}\downarrow}(j)
\Label{newprob}
\end{eqnarray}
for $1\le i\le I-2$ and $j\in S_n^P(x_{i-1}^I, x_{i}^I)$.
%
%
Here, there is no constraint for $P'_{n, I}(j)$ with $j\in\N\setminus S_n^P(x_{0}^I, x_{I-2}^I)$ as long as $P'_{n, I}$ is a probability distribution.
Then, the following holds:
\begin{eqnarray}
\sum_{j \in S_n^P(x_{i-1}^I,x_{i}^I)}P'_{n, I}(j)
~= \sum_{k \in S_n^P(y_{i+1}^I, y_{i+2}^I)} P^{n\downarrow}(k)
\Label{assum4}
\end{eqnarray}
for $1\le i\le I-2$.
Using the definition (\ref{newprob}) of $ P'_{n, I}(j)$, 
we have (\ref{Gineq1}) as follows:
\begin{eqnarray}
&&\liminf_{n\to\infty} F(P'_{n,I}, Q^{ \frac{H(P)}{H(Q)}n+b\sqrt{n}\downarrow})\nonumber\\
&\ge&\liminf_{n\to\infty}
\sum_{i=1}^{I-2}\sum_{j\in S_n^P(x_{i-1}^I, x_{i}^I)}
\sqrt{P'_{n, I}(j)}\sqrt{Q^{ \frac{H(P)}{H(Q)}n+b\sqrt{n}\downarrow}(j)}
\nonumber \\
&=&\liminf_{n\to\infty}
\sum_{i=1}^{I-2}\sum_{j\in S_n^P(x_{i-1}^I, x_{i}^I)}
\sqrt{
\frac{P^{n\downarrow}(S_n^P(y_{i+1}^I, y_{i+2}^I))}{Q^{ \frac{H(P)}{H(Q)}n+b\sqrt{n}\downarrow}( S_n^P(x_{i-1}^I, x_{i}^I))}}
Q^{ \frac{H(P)}{H(Q)}n+b\sqrt{n}\downarrow}(j) \nonumber \\
&=&\liminf_{n\to\infty}
\sum_{i=1}^{I-2}
\sqrt{
\frac{P^{n\downarrow}(S_n^P(y_{i+1}^I, y_{i+2}^I))}
{Q^{ \frac{H(P)}{H(Q)}n+b\sqrt{n}\downarrow}( S_n^P(x_{i-1}^I, x_{i}^I))}}
Q^{ \frac{H(P)}{H(Q)}n+b\sqrt{n}\downarrow}( S_n^P(x_{i-1}^I, x_{i}^I))
\nonumber \\
&=&\liminf_{n\to\infty}
\sum_{i=1}^{I-2}\sqrt{P^{n\downarrow}(S_n^P(y_{i+1}^I, y_{i+2}^I))} 
\sqrt{Q^{ \frac{H(P)}{H(Q)}n+b\sqrt{n}\downarrow}( S_n^P(x_{i-1}^I, x_{i}^I))}\Label{H6-16-8}\\
&=&\sum_{i=1}^{I-2}\sqrt{\Phi\left(\frac{y_{i+2}^I}{\sqrt{V(P)}}\right)-\Phi\left(\frac{y_{i+1}^I}{\sqrt{V(P)}}\right)} \sqrt{\Phi_{P,Q,b}\left(\frac{x_{i}^I}{\sqrt{V(P)}}\right)-\Phi_{P,Q,b}\left(\frac{x_{i-1}^I}{\sqrt{V(P)}}\right)}\nonumber\\
&&\Label{lim.ineq}\\
&=&\sum_{i=1}^{I-2}\sqrt{A\left(-\lambda+\frac{2\lambda}{I}(i+2)\right)-A\left(-\lambda+\frac{2\lambda}{I}(i+1)\right)} \nonumber\\
&&~~~~\times
\sqrt{\Phi_{P,Q,b}\left(-\lambda+\frac{2\lambda}{I}i\right)-\Phi_{P,Q,b}\left(-\lambda+\frac{2\lambda}{I}(i-1)\right)}
\nonumber\\
&=&
\sum_{i=1}^{I-2}\sqrt{\int_{-\lambda+\frac{2\lambda}{I}i}^{-\lambda+\frac{2\lambda}{I}(i+1)}
\frac{dA}{dx}\left(x+\frac{2\lambda}{I}\right)dx} 
\sqrt{\int_{-\lambda+\frac{2\lambda}{I}i}^{-\lambda+\frac{2\lambda}{I}(i+1)}
\phi_{P,Q,b}\left(x-\frac{2\lambda}{I}\right)dx}\nonumber \\
&\ge&\sum_{i=1}^{I-2}
\int_{-\lambda+\frac{2\lambda}{I}i}^{-\lambda+\frac{2\lambda}{I}(i+1)}
\sqrt{\frac{dA}{dx}\left(x+\frac{2\lambda}{I}\right)}
\sqrt{\phi_{P,Q,b}\left(x-\frac{2\lambda}{I}\right)}
dx\Label{Schwarz}
\end{eqnarray}
\begin{eqnarray}
&=&
\int_{-\lambda+\frac{2\lambda}{I}}^{-\lambda+\frac{2\lambda}{I}(I-1)}
\sqrt{\frac{dA}{dx}\left(x+\frac{2\lambda}{I}\right)}
\sqrt{\phi_{P,Q,b}\left(x-\frac{2\lambda}{I}\right)}
dx\nonumber \\
&\overset{I\to\infty}{\longrightarrow}&\int_{-\lambda}^{\lambda}\sqrt{\frac{dA}{dx}(x)}\sqrt{\phi_{P,Q,b}(x)}dx\nonumber\\
&\ge& {\cal F}\left(\frac{dA}{dx}, \phi_{P, Q, b}\right) - \epsilon,
\Label{H6-16-9},
\end{eqnarray}
where 
(\ref{lim.ineq}) follows from Lemma \ref{lem.central},
and
(\ref{H6-16-9})  follows from (\ref{bishou0}).

Then, we show the existence of a sequence $\{f_n\}_{n=1}^{\infty}$ of maps which satisfies (\ref{Gineq2}).
From Lemma \ref{Wlem},
we choose a map $f_{n,I}:\N\to\N$ 
for $2<I\in\N$ and $n\in\N$ such that 
\begin{eqnarray}
W_{f_{n,I}}(S_n^P(y_{i+1}^I, y_{i+2}^I))
&\subset& S_n^P(x_{i-1}^I, x_{i}^I),
\Label{W2}\\
P'_{n, I}(j)
&\le& W_{f_{n,I}}(P^{n\downarrow})(j) 
+\max_{k\in S_n^P(y_{i+1}^I, y_{i+2}^I)}P^{n\downarrow}(k)
\Label{W1}
\end{eqnarray}
for any $1\le i\le I-2$ and $j\in S_n^P(x_{i-1}^I, x_{i}^I)$.
Then,
combining (\ref{W1}) with the inequality
\begin{eqnarray}
\max_{k\in S_n^P(y_{i+1}^I, y_{i+2}^I)}P^{n\downarrow}(k)
~\le~ \min_{k\in S_n^P(y_{i+1}^I)}P^{n\downarrow}(k)
= \alpha_n^P(x_{i+1}^I),
\end{eqnarray}
where $\alpha_n^P$ was defined in (\ref{def-alpha}), 
we have
\begin{eqnarray}
&& F(W_{f_{n,I}}(P^{n\downarrow}), Q^{ \frac{H(P)}{H(Q)}n+b\sqrt{n}\downarrow})\nonumber\\
&\ge&\sum_{i=1}^{I-2}\sum_{j\in S_n^P(x_{i-1}^I, x_{i}^I)}
\sqrt{W_{f_{n,I}}(P^{n\downarrow})(j)} \sqrt{Q^{ \frac{H(P)}{H(Q)}n+b\sqrt{n}\downarrow}(j)}
\nonumber\\
&\ge&\sum_{i=1}^{I-2}\sum_{j\in S_n^P(x_{i-1}^I, x_{i}^I)}
\sqrt{\max\{P'_{n, I}(j)-\alpha_n^P(x_{i+1}^I),0\}} \sqrt{Q^{ \frac{H(P)}{H(Q)}n+b\sqrt{n}\downarrow}(j)}
\nonumber\\
&\ge&
\sum_{i=1}^{I-2}\sum_{j\in S_n^P(x_{i-1}^I, x_{i}^I)}
\sqrt{P'_{n, I}(j)}\sqrt{Q^{ \frac{H(P)}{H(Q)}n+b\sqrt{n}\downarrow}(j)}\Label{H6-16-4}\\
&&-
\sum_{i=1}^{I-2}\sum_{j\in S_n^P(x_{i-1}^I, x_{i}^I)}
\sqrt{\alpha_n^P(x_{i+1}^I)} 
\sqrt{Q^{ \frac{H(P)}{H(Q)}n+b\sqrt{n}\downarrow}(j)}\nonumber\\
&=&
F(P'_{n,I}, Q^{ \frac{H(P)}{H(Q)}n+b\sqrt{n}\downarrow})\Label{H6-16-5}\\
&&-
\sum_{i=1}^{I-2}\sum_{j\in S_n^P(x_{i-1}^I, x_{i}^I)}
\sqrt{\alpha_n^P(x_{i+1}^I)} 
\sqrt{Q^{ \frac{H(P)}{H(Q)}n+b\sqrt{n}\downarrow}(j)},\nonumber
\end{eqnarray}
where (\ref{H6-16-4}) follows from (\ref{W1}) and  the last inequality follows from $\sqrt{x-y} \ge \sqrt{x}- \sqrt{y}$ for any $x\ge y \ge 0$.
%
Using the Schwarz inequality,
the second term of (\ref{H6-16-4}) can be evaluated as follows:
\begin{eqnarray}
&&\sum_{i=1}^{I-2}\sum_{j\in S_n^P(x_{i-1}^I, x_{i}^I)}
\sqrt{\alpha_n^P(x_{i+1}^I)} 
\sqrt{Q^{ \frac{H(P)}{H(Q)}n+b\sqrt{n}\downarrow}(j)}
\nonumber\\
&\le&
\sum_{i=1}^{I-2}
\sqrt{\alpha_n^P(x_{i+1}^I)} 
\sqrt{|S_n^P(x_{i-1}^I, x_{i}^I)|}
\sqrt{
\sum_{j\in S_n^P(x_{i-1}^I, x_{i}^I)}
Q^{ \frac{H(P)}{H(Q)}n+b\sqrt{n}\downarrow}(j)}
\nonumber\\
&\le&
\sum_{i=1}^{I-2}
\sqrt{\alpha_n^P(x_{i+1}^I)} 
\sqrt{|S_n^P(x_{i}^I)|}
\nonumber\\
&\le & 
\sum_{i=1}^{I-2}
\sqrt{ 2^{-2\lambda\sqrt{V(P)n}/I} }
=
(I-2) 2^{-\lambda\sqrt{V(P)n}/I} 
\Label{H6-16-7}\\
&\overset{n\to\infty}{\to}&0, \Label{H6-16-9}
\end{eqnarray}
where the inequality (\ref{H6-16-7}) follows from Lemma \ref{alpha}.
Thus, we obtain (\ref{Gineq2}) from (\ref{H6-16-5}) and (\ref{H6-16-9}).

\subsubsection{Converse Part} \Label{sec:reduction.con}

In this subsection,
we prove (\ref{H2}) for the non-uniform case.
From (\ref{d-c5}), it is enough to show 
\begin{align}
F_{P,Q}^{\cal M}(b) 
 ~\le~\sup_{A} {\cal F}\left(\frac{dA}{dx}, \phi_{P, Q, b}\right) +\epsilon
 \Label{HH134}
\end{align}
for an arbitrary $\epsilon>0$.
To show (\ref{HH134}), we prepare the following lemma.
\begin{lem}\Label{20}
Assume that real numbers $t\le t'$ satisfy the condition {\rm ($\star$)}
in Lemma \ref{lem.converse}.
Then the following inequality holds
\begin{eqnarray}
F_{P,Q}^{\cal M}(b) 
&\le& \sqrt{\Phi(t)}\sqrt{\Phi_{P, Q, b}(t)}
+\int_{t}^{t'}\sqrt{\phi(x)} \sqrt{\phi_{P, Q, b}(x)}dx \nonumber\\
&&+\sqrt{1-{\Phi}(t')} \sqrt{1-{\Phi}_{P, Q, b}(t')}.\nonumber
\end{eqnarray}
\end{lem}
The proof of Lemmas \ref{20} is given in Appendix \ref{app.Converse}.
Then we obtain (\ref{HH134}) as follows:
\begin{align}
F_{P,Q}^{\cal M}(b) 
 ~\le~ {\cal F}\left(\frac{dA_{bD_{P,Q},C_{P,Q}}}{dx}, \phi_{P, Q, b}\right) +\epsilon
 ~\le~\sup_{A} {\cal F}\left(\frac{dA}{dx}, \phi_{P, Q, b}\right) +\epsilon,
\end{align}
where the function $A_{\mu,v}$ was defined in (\ref{A_1}), (\ref{A_2}) and (\ref{A_3}) and the first inequality follows from Lemmas \ref{lem.converse2} and \ref{20}.

\section{Application to Quantum Information Theory} \Label{sec:quantum}

In this section, we apply the second-order asymptotics to the approximate conversion between two bipartite pure entangled states by LOCC and the cloning for a known entangled pure state by LOCC.

\subsection{Entangled State and LOCC Conversion} \Label{sec:application}

We first briefly introduce some notions of quantum information theory which are used in this section.
In quantum information theory,
a quantum system is described by a Hilbert space.
Then, 
a quantum state on the quantum system is defined by a density operator on the Hilbert space, 
i.e. a positive semidefinite operator whose trace is one,
and in particular, a quantum state whose rank is one is called a pure state.
A collection of some quantum systems is called
a composite system and is described by the tensor product of Hilbert spaces of the quantum systems which constitute the composite system. 
Then, a tensor product state is defined by a quantum state which is represented by a tensor product of density operators on each quantum systems
and 
a quantum state is called separable when the quantum state can be represented by a convex combination of separable states.
On the other hand,
a quantum state which is not separable is said to be entangled.
In this paper,
we treat only finite-dimensional bipartite composite systems
and assume that quantum states are pure entangled states.

Entanglement is used in several quantum informational operations \cite{HHHH09,OP93,BZ06,BDSW96,DHR02,BS98}.
Pure entangled states can be expressed by the Schmidt decomposition 
and the coefficients are called the Schmidt coefficients of the entangled state.
Squared Schmidt coefficients consist of a probability distribution from the property of a pure entangled state 
and is helpful to describe characteristics about entanglement.
In various quantum operations,
most entangled states are often focused on.
Such a state is called a maximally entangled state 
and defined by an entangled state whose Schmidt coefficients are all equivalent to each other.
In particular, a maximally entangled state on a two-qubit system is called the EPR state.
Entanglement of a pure state $\psi$
is characterized by the von Neumann entropy $S_{\psi}$ of its partial density matrix,
which coincides with the Shannon entropy of squared Schmidt coefficients of $\psi$.
For example, the pure state $\psi$ is entangled if and only if $S_{\psi}\ne 0$.
Since several values cannot be defined for the singular case $S_{\psi}= 0$, 
we assume that pure states are entangled in this section.

The conversion of entangled states by LOCC has been studied in both the non-asymptotic case \cite{Vid99,Nie99,VJN00} and the asymptotic case \cite{BBPS96,BPRST00,HHT01}.
In this section, as an application to quantum information theory, we treat problems of the approximate conversion between pure entangled states by LOCC.
As a typical LOCC conversion, 
we focus on entanglement concentration, in which, 
an i.i.d. pure state of $\psi$ is converted to multiple copies of the EPR state.
It is known that the optimal first-order conversion rate is the von Neumann entropy $S_{\psi}$ of its partial density matrix \cite{BBPS96}.
Then, 
it is possible to approximately generate multiple copies $\psi_{EPR}^{S_{\psi}n+o(n)}$ of the EPR state from  $n$-copies $\psi^{\otimes n}$ of a given state $\psi$
under the condition that
the fidelity between the generated state and the target state asymptotically goes to $1$.
However, the converse does not holds, that is, 
even when the number of EPR states to be generated has the asymptotic expansion of the form of $S_{\psi}n+o(n)$,
it is not necessarily possible to generate them from  $n$-copies $\psi^{\otimes n}$ 
under the condition that the fidelity between the generated state and the target state asymptotically goes to $1$.
In order to treat the error of LOCC conversion more precisely, we need to deal with the second-order asymptotics.
That is, the asymptotically achievable fidelity between the generated state and the target state
depends on the coefficient of the order $\sqrt{n}$.
A similar problem occurs in entanglement dilution, in which, the multiple copies of the EPR state are 
converted to the multiple copies of a target pure entangled state.
That is, in entanglement dilution,
the asymptotically achievable fidelity between the generated state and the target state 
also depends on the coefficient of the order $\sqrt{n}$.
Such relations in entanglement concentration and dilution were studied in \cite{HL04,HW03,KH13}.
However, the existing studies dealt with 
the relation between the asymptotic fidelity and the coefficient of the order $\sqrt{n}$
only when the initial or the target state is the EPR state, and
thus they did not investigate this relation when both of the initial and the target states are non-EPR states.
In the following, we treat more general LOCC conversions including entanglement concentration 
and entanglement dilution under the fidelity constraint, 
and clarify the relation between the second-order rate of the conversion and the asymptotically achievable fidelity between the generated state and the target state.


Before going to the asymptotics of LOCC conversion, 
we give some notations and remarks.
In the following, we employ the fidelity 
\begin{eqnarray}
F(\psi, \omega) = \langle \psi, \omega  \rangle
\Label{fidelity}
\end{eqnarray}
to describe the accuracy of LOCC conversions, where the right hand side of (\ref{fidelity}) is the inner product between pure states $\psi$ and $\omega$. 
The following value represents the maximum fidelity of LOCC conversion for states $\psi$ and $\omega$:
\begin{eqnarray}
F(\psi\to\omega)
:=\max\{F(\Gamma(\psi), \omega) ~|~
 \Gamma \hbox{: LOCC}
\}.
\end{eqnarray}
Let $P_{\psi}$ and $P_{\omega}$ be the probability distributions which consist of the squared Schmidt coefficients for pure entangled states $\psi$ and $\omega$, respectively.
Then, it was shown in Lemma $1$ of \cite{VJN00} that  the fidelity $F(\psi, \omega)$ between pure entangled states relates with the fidelity $F(P_\psi, P_\omega)$ between probability distributions as
\begin{align}
F(P_\psi^{\downarrow}, P_\omega^{\downarrow})= 
\max
\{F((U_A \otimes U_B) \psi, \omega)~|~
U_A,U_B \hbox{: unitary~operations}\}.
\label{6-14-1}
\end{align}
Since $\psi$ is transformed to $\omega$ by LOCC if and only if $P_{\psi}\prec P_{\omega}$ \cite{Nie99} where $\prec$ is the majorization relation given in Subsection 2.2, 
(\ref{6-14-1}) implies the following relation for pure states $\psi$ and $\omega$:
\begin{eqnarray}
F(\psi\to\omega)=F^{\cal M}(P_{\psi}\to P_{\omega}).\Label{dd}
\end{eqnarray}
We define the maximum conversion number for $\omega$ from $n$-copies of $\psi$ by LOCC under a permissible accuracy $0<\tau<1$ as follows:
\begin{eqnarray}
L_n(\psi, \omega|\tau)
&:=&\max\{L\in\N~|
~\exists~ \hbox{LOCC } \Gamma \hbox{ s.t. }
F(\Gamma(\psi^{\otimes n}), \omega^{\otimes L})\ge\tau\}\Label{L}\\
&=&\max\{L\in\N~|~F(\psi^{\otimes n}\to \omega^{\otimes L})\ge\tau\}.
\end{eqnarray}
Since $P_{\psi^{\otimes n}}=P_{\psi}^n$, the following holds:
\begin{eqnarray}
L_n(\psi, \omega|\tau)
=L^{\cal M}_n(P_{\psi}, P_{\omega}|\tau).\Label{ML}
\end{eqnarray}
Let $V_{\psi}:=V(P_{\psi})$,
$D_{\psi,\omega}:=D_{P_\psi,P_\omega}$ and $C_{\psi,\omega}:=C_{P_\psi,P_\omega}$.
Then we call $C_{\psi,\omega}$ the {\it LOCC conversion characteristics} between pure states $\psi$ and $\omega$. 

Since $H(P_{\psi})=S_{\psi}$ and $H(P_{\omega})=S_{\omega}$,
the asymptotic expansion of the maximum conversion number $L_n(\psi, \omega|\tau)$ is obtained from Theorem \ref{thm.gen} as follows.
\begin{thm}
\Label{thm.gen.LOCC}
\begin{eqnarray}
L_n(\psi, \omega|\tau)
&=& \frac{S_{\psi}}{S_{\omega}}n + \frac{Z_{C_{\psi,\omega}}^{-1}(1-\tau^2)}{D_{\psi,\omega}}\sqrt{n} +o(\sqrt{n}).
\Label{exp.LOCC}
\end{eqnarray}
\end{thm}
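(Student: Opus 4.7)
The plan is to obtain Theorem \ref{thm.gen.LOCC} as a direct corollary of Theorem \ref{thm.gen}, by translating the LOCC problem into the majorization problem for the Schmidt distributions and matching the constants on both sides of the correspondence. The key input is that for bipartite pure states the LOCC convertibility is characterized by Nielsen's theorem, i.e., $\psi$ is convertible to $\phi$ by LOCC if and only if $P_\psi \prec P_\phi$, which is the content of the identification already recorded in (\ref{dd}) and (\ref{ML}).

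First I would verify (or quote) the two bookkeeping facts already set up in the excerpt: (i) maximal LOCC fidelity equals maximal majorization fidelity, $F(\psi \to \phi) = F^{\mathcal M}(P_\psi \to P_\phi)$, which follows from Nielsen's theorem together with the reduction (\ref{6-14-1}) of the pure-state fidelity to the fidelity of sorted Schmidt distributions; and (ii) the tensor power identity $P_{\psi^{\otimes n}} = P_\psi^n$ for Schmidt squared coefficients. Together these yield the crucial equality
\begin{eqnarray}
L_n(\psi,\phi|\nu) = L^{\mathcal M}_n(P_\psi, P_\phi|\nu),
\end{eqnarray}
so that the quantum problem is reduced to the classical majorization problem studied in Section \ref{sec:asymptotic}.

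Next I would simply invoke Theorem \ref{thm.gen} for the probability distributions $P_\psi$ and $P_\phi$ (both supported on finite sets since $\psi,\phi$ live in finite-dimensional bipartite spaces, and both having nonzero entropy by the standing assumption that $\psi,\phi$ are entangled, i.e., $S_\psi, S_\phi > 0$). This gives
\begin{eqnarray}
L^{\mathcal M}_n(P_\psi, P_\phi|\nu) \cong \frac{H(P_\psi)}{H(P_\phi)} n + \frac{Z_{C_{P_\psi,P_\phi}}^{-1}(1-\nu^2)}{D_{P_\psi,P_\phi}} \sqrt{n}.
\end{eqnarray}
Substituting the identifications $H(P_\psi) = S_\psi$, $H(P_\phi) = S_\phi$ (the defining property of entanglement entropy as the Shannon entropy of the Schmidt squared coefficients) and the notational conventions $D_{\psi,\phi} := D_{P_\psi,P_\phi}$ and $C_{\psi,\phi} := C_{P_\psi,P_\phi}$ introduced just before the theorem statement yields exactly (\ref{exp.LOCC}).

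There is essentially no analytic obstacle here since all the heavy lifting has been done in Theorem \ref{thm.gen}; the only thing to double-check is that the assumptions of Theorem \ref{thm.gen} are verified for $(P_\psi, P_\phi)$, in particular that neither distribution is deterministic (guaranteed by entanglement) and that the uniform/non-uniform dichotomy is handled transparently, since the statement of Theorem \ref{thm.gen} already accommodates the case when $P_\psi$ or $P_\phi$ is uniform (which corresponds, in quantum terms, to entanglement dilution or concentration with respect to the maximally entangled state). Thus the proof of Theorem \ref{thm.gen.LOCC} is essentially a single-line application of Theorem \ref{thm.gen} combined with Nielsen's majorization characterization.
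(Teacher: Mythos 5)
Your proposal is correct and matches the paper's own argument: the paper likewise uses Nielsen's majorization characterization (via (\ref{6-14-1}), (\ref{dd}) and (\ref{ML})) together with $P_{\psi^{\otimes n}}=P_{\psi}^n$, $H(P_\psi)=S_\psi$, $H(P_\phi)=S_\phi$, and then applies Theorem \ref{thm.gen} directly. Nothing essential is missing.
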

In particular,
when the initial state is the maximally entangled state $\psi_m^{max}$ on $\C^m\otimes\C^m$,
Theorem \ref{thm.gen.LOCC} implies
\begin{eqnarray}
L_n(\psi_m^{max},\omega|\tau)
&=& \frac{\log m}{S_{\omega}}n + \sqrt{\frac{V_{\omega}\log m}{S_{\omega}^3}}\Phi^{-1}(1-\tau^2)\sqrt{n}  +o(\sqrt{n}).
\Label{qua.dil}
\end{eqnarray}
Similarly,
when the target state is the maximally entangled state $\psi_m^{max}$,
we have 
\begin{eqnarray}
L_n(\psi,\psi_m^{max}|\tau)
&=& \frac{S_{\psi}}{\log m}n + \frac{\sqrt{V_{\psi}}}{\log m}\Phi^{-1}(1-\tau^2)\sqrt{n} +o(\sqrt{n}).
\Label{qua.con}
\end{eqnarray}

\begin{rem}
Bennett et al.  \cite{BBPS96} gave the first-order conversion rate of $L_n(\psi, \omega|\tau)$.
Moreover, when $\psi$ or $\omega$ is the EPR state (i.e. the cases of entanglement dilution or entanglement concentration), 
Hayden and Winter \cite{HW03} and Harrow and Lo \cite{HL04} pointed out that 
 the second-order of $L_n(\psi, \omega|\tau)$ is $\sqrt{n}$ and its second-order rate depends on the permissible accuracy for those operations.
However, the explicit form of the second-order rate for entanglement dilution and concentration was not obtained in their work. 
On the other hand, 
when $\psi$ or $\omega$ is the EPR state,
Theorem \ref{thm.gen.LOCC} gives the explicit  second-order rate of $L_n(\psi, \omega|\tau)$ in (\ref{exp.gen}), 
which coincides with the result in \cite{KH13},
and hence, our results provide a refinement of the existing studies.
Moreover, we also derived the second-order rates of $L_n(\psi, \omega|\tau)$ when both $\psi$ and $\omega$ are non-EPR pure states.
Therefore, we obtain the second-order expansion of  $L_n(\psi, \omega|\tau)$ in all cases as long as both $\psi$ and $\omega$ are entangled pure state.
\end{rem}


\begin{rem}
We mention the relation between conversion problems in conventional and quantum information theory.
Due to the results of Nielsen \cite{Nie99}, the approximate conversion problem between pure states on bipartite systems 
is induced into that of probability distribution under the majorization condition by considering the squared Schmidt coefficients of the pure states.
Since the squared Schmidt coefficients of a maximally entangled state form a uniform distribution, 
in particular, those of the EPR state form the uniform distribution over $\{0,1\}$,
it is thought that entanglement dilution and concentration in quantum information theory correspond to the resolvability 
and the intrinsic randomness in conventional information theory.
\end{rem}

\begin{rem}
Here, we remark the relation with a variable-rate protocol.
In this paper, our protocol fixes the conversion rate between the numbers of initial and target states.
Hence, such a protocol is called a fixed-rate protocol.
However, we can decide the rate depending on our measurement outcome during our protocol.
When we generate the maximally entangled state from a partially entangled state, the papers \cite{BBPS96,MH07} discuss this problem.
Such a protocol is often called ``entanglement gambling" \cite{LP01}.
To interpret $C_{\psi,\omega}$ as the variance, we need to consider a variable-rate protocol.
Unfortunately, our result gives the relation between the error and the conversion rate only for a fixed-rate protocol,
and cannot be applied to a variable-rate protocol.
Such an extension remains as a future study.
\end{rem}

\subsection{LOCC Cloning with Perfect Knowledge} \Label{sec:application2}
Due to the no-cloning theorem, we cannot generate a complete copy of an unknown quantum state.
Then, in studies of the cloning of an unknown quantum state, an approximate cloning method and the evaluation of its  accuracy 
have been mainly treated \cite{Wer98,BH98,Fil04}.
On the other hand, 
even when the state to be copied is known,
it is impossible to perfectly copy the state
when the state is entangled and our operations are limited to LOCC.
In the following, we treat such a case.
Thus, we assume that we know entangled state to be copied, but, we can use only LOCC for cloning.
We note that existing studies \cite{ACP04,OH06} discussed similar cloning problem\footnote{The papers \cite{ACP04,OH06} discussed local copying,
and a limited amount of the EPR states are prepared as a resource for copying,
only LOCC is allowed for our operation,
and 
we only know that the state to be copied belongs to the set of candidate of the states.
It is required to copy the unknown state perfectly by using the same amount of the EPR states
as the number of required clones.}, however,
the setting is different from ours
because their setting assumes an imperfect knowledge for the entangled state to be cloned and additional limited entangled resource.
To distinguish their setting, we call our setting the LOCC cloning with perfect knowledge, and call their setting the LOCC cloning with imperfect knowledge.
%
\begin{figure}[t]
 \begin{center}
\vspace*{-2em}
 \hspace*{-18em}
 \mbox{\raisebox{-0mm}{
 \includegraphics[width=70mm, height=37mm, bb=20 20 330 210]{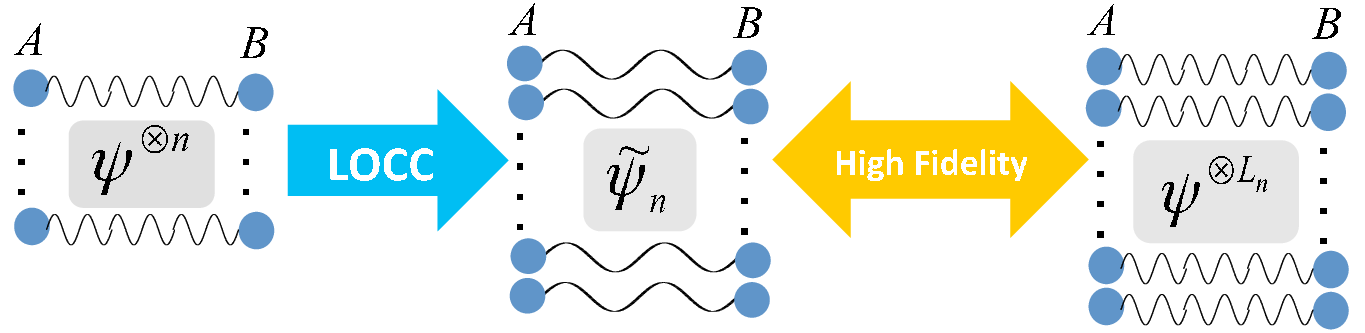}
 }}
 \end{center}
 \caption{
In approximate LOCC cloning, an i.i.d. entangled state $\psi^{\otimes n}$ is transformed by LOCC to a quantum state $\tilde{\psi}_n$ which has high fidelity for $\psi^{\otimes L_n}$.
Under the condition that the fidelity between $\tilde{\psi}_n$ and $\psi^{\otimes L_n}$ is greater than $\tau$, the maximal number of copies of  $\psi^{\otimes L_n}$ is $L_n(\psi, \psi|\tau)$.
}
 \Label{cloning}
\end{figure}
%
In this paper, 
we investigate LOCC cloning with perfect knowledge when the initial entangled state is $n$-copies of $\psi$ 
and the target state is $L_n$-copies of $\psi$ with $L_n\ge n$.
That is, we analyze how large number $L_n$ of copies we can generate under the condition that the fidelity between the transformed state from the initial state by LOCC and the target entangled state is greater than a permissible accuracy $\tau$.
The maximal number $L_n(\psi|\tau)$ of $L_n$ given above is formulated by
\begin{eqnarray}
L_n(\psi|\tau)
:=\max\{L\in\N ~|~ F(\Gamma(\psi^{\otimes n}),\psi^{\otimes L})\ge\tau,~ \Gamma:LOCC\},
\end{eqnarray}
which equals $L_n(\psi, \psi|\tau)$ by the definition in (\ref{L}).
Then we obtain the following asymptotic expansion from (\ref{cla.clone}).
\begin{thm}
For a pure entangled state $\psi$ and $0<\tau<1$,
\begin{eqnarray}
L_n(\psi|\tau)
= n+{\frac{\sqrt{8V_{\psi}\mathrm{log}\tau^{-1}}}{S_{\psi}}}\sqrt{n}  +o(\sqrt{n}).
\end{eqnarray}
\end{thm}
Thus, 
when the initial state is the i.i.d. entangled state $\psi^{\otimes n}$ of a non-maximally entangled state $\psi$,
the incremental number $L_n(\psi|\tau)-n$ of copies by LOCC cloning with perfect knowledge has the order of $\sqrt{n}$.
On the other hand, when $\psi$ is a maximally entangled state, $V_{\psi}=0$ and thus the incremental number of copies by LOCC cloning with perfect knowledge does not have the order of $\sqrt{n}$.
Indeed, since 
\begin{eqnarray}
\max_{\Gamma:LOCC}F(\Gamma(\psi_{EPR}^{\otimes n}),\psi_{EPR}^{\otimes L})=\sqrt{2^{n-L}}.
\end{eqnarray}
holds by Lemma \ref{opt dil}, we obtain
\begin{eqnarray}
L_n(\psi_{EPR}|\tau)= \lfloor n+2\log \tau^{-1} \rfloor,
\end{eqnarray}
where $\lfloor\cdot\rfloor$ represents the floor function, and the incremental number $L_n(\psi|\tau)-n$ is bounded by a constant $2\log \tau^{-1}$ for any $n$ unlike a non-maximally state.

According to Chiribella et al. \cite{CYY13}, 
we define the replication rate as the limit
\begin{eqnarray*}
r(\psi,\tau):= \displaystyle\lim_{n \to \infty}\log \frac{L_n(\psi|\tau)}{n}.
\end{eqnarray*}
Then, the rate $r(\psi,\tau)$ can be characterized as follows
\begin{align}
r(\psi,\tau)=
\left\{
\begin{array}{ccl}
1/2 & \hbox{ if }& V_{\psi}\neq 0 \\
0 & \hbox{ if }& V_{\psi}= 0.
\end{array}
\right.
\end{align}

\section{Conclusion}\Label{sec:conclusion}

We have addressed approximation conversion problems of probability distributions by deterministic and majorization conversions.
We have found that two conversion methods are related as in (\ref{fidelity ineq})-(\ref{number ineq}) 
and have derived the asymptotic expansion of the maximum conversion number up to the order $\sqrt{n}$
for the both kinds of conversion problems between two i.i.d. probability distributions.
To derive the computable form of the second-order rate of the asymptotic expansion, 
the problem has been divided into the uniform case and the non-uniform case.
However, we note that the maximum conversion numbers for two kinds of conversions are equivalent to each other in all cases up to the order $\sqrt{n}$ as stated in Theorem \ref{thm.gen}. 
A key to derive the asymptotic expansion is
to introduce Rayleigh-normal distribution and to investigate its properties.
In particular, 
the optimal second-order conversion rate is described by the Rayleigh-normal distribution function for the non-uniform case.
Thereafter, as applications to quantum information theory,
we have addressed LOCC conversion problem between bipartite pure entangled states including entanglement concentration and entanglement dilution.
Then, we have derived the asymptotic expansion of the maximum conversion number using the results for majorization conversion of probability distributions.
In particular, 
we have clarified the relation between the second-order conversion rate and the accuracy of LOCC conversion.
As a special case, we have introduced the notion of LOCC cloning with the perfect knowledge.
Using the results for LOCC conversion, we have derived the rate of the incremental copies and the optimal coefficient in this setting. 

The following problems can be considered as future problems.
First, this paper assumes the independent and identical distributed condition for the sequences of distributions or pure entangled states to be converted.
However, the actual sequences of distributions or pure entangled states might have correlation in practice.
Hence, it is an interesting open problem to extend the obtained result to the case of correlated sequences of distributions or pure entangled states  \cite{MPSVW10}, e.g., the Markovian case. 
Next, only pure states have been treated in quantum information setting 
although mixed entangled states may appear in practice.
So, an extension to the case of mixed states is required as a future study.
Finally, we point out the significance of analysis in a finite-length setting.
We have analyzed the asymptotic performance of approximate conversions in this paper.
On the other hand, we can operate an input state only with a finite length.
Therefore, it is needed to analyze the approximate conversion problems in a finite-length setting.
For entanglement dilution and the concentration,
the recent paper \cite{KH13} dealt with an analysis in a finite-length setting and derived its numerical results. 
However, 
no result investigates the finite-length setting of the case when the initial state and the target state are non-EPR states\footnote{After submitting this paper,
 \cite{Ren15} also discussed a similar problem mainly with the variational distance. See section $4$ in \cite{Ren15} in detail.}.

\section*{Acknowledgments}
We would like to thank Dr. Hiroyasu Tajima for his helpful comments, anonymous reviewers for their valuable comments and suggestions to improve the quality of the paper
and Dr. Marco Tomamichel and Mr. Christopher Thomas Chubb for pointing out the error in Fig. 1.
WK was partially supported by Grant-in-Aid for JSPS Fellows No. 233283. 
MH is partially supported by a MEXT Grant-in-Aid for Scientific Research (A) No. 23246071 and the National Institute of Information and Communication
Technology (NICT), Japan.
The Centre for Quantum Technologies is funded by the Singapore Ministry of Education and the National Research Foundation
as part of the Research Centres of Excellence programme.


\appendix

\subsection{Proof of Lemmas: The Explicit Form of The Rayleigh-normal Distributions}
\Label{app.RN}
We prove Lemmas \ref{sol2}, \ref{sol1}, \ref{monotone} and \ref{lem.converse} used to derive Theorem \ref{Zform} which shows the explicit form of the Rayleigh-normal distributions.

~


\noindent{\bf Proof of Lemma \ref{sol2}:}
We first show the inequality $\beta_{\mu,v}<\frac{\mu}{1-v}$.
The existence of the unique solution $\beta_{\mu,v}$ is equivalent to the existence of the unique zero point of the following function with respect to $x$:
\begin{eqnarray}
f_{\mu,v}(x):=(1-\Phi_{\mu,v}(x))-(1-\Phi(x))\frac{\phi_{\mu,v}(x)}{\phi(x)}.
\end{eqnarray}
Then we have
\begin{eqnarray}
\frac{\partial f_{\mu,v}}{\partial x}(x)
&=&-\frac{\partial \Phi_{\mu,v}}{\partial x}(x) + \frac{\partial \Phi}{\partial x}(x)\frac{\phi_{\mu,v}(x)}{\phi(x)} -(1-\Phi(x))\frac{\partial}{\partial x}\left(\frac{\phi_{\mu,v}}{\phi}\right)(x)\\
&=&-\phi_{\mu,v}(x)+\phi(x) \frac{\phi_{\mu,v}(x)}{\phi(x)} -(1-\Phi(x))\frac{\partial}{\partial x}\left(\frac{\phi_{\mu,v}}{\phi}\right)(x)\\
&=&-(1-\Phi)\frac{\partial}{\partial x}\left(\frac{\phi_{\mu,v}}{\phi}\right)(x)\\
&=&-(1-\Phi(x))\frac{\frac{\partial \phi_{\mu,v}}{\partial x}(x)\phi(x)-\phi_{\mu,v}(x)\frac{\partial \phi}{\partial x}(x)}{\phi(x)^2}\\
&=&-(1-\Phi(x))\frac{-\frac{x-\mu}{v}\phi_{\mu,v}(x)\phi(x)+x\phi_{\mu,v}(x)\phi(x)}{\phi(x)^2}\\
&=&-\frac{1-v}{v}\left(\frac{\mu}{1-v}-x\right)\frac{\phi_{\mu,v}(x)}{\phi(x)}(1-\Phi(x)). \Label{f-x}
\end{eqnarray}
Thus, the function $f_{\mu,v}$ is strictly monotonically decreasing when $x<\frac{\mu}{1-v}$ and is strictly monotonically increasing when $x>\frac{\mu}{1-v}$.
Since
\begin{eqnarray}
\lim_{x\to-\infty}f_{\mu,v}(x)=1, ~~\lim_{x\to\infty}f(x)=0, 
\end{eqnarray}
the function $f_{\mu,v}$ has the unique zero point $\beta_{\mu,v}<\frac{\mu}{1-v}$ due to the intermediate value theorem.

Next we show that $\beta_{\mu,v}$ is differentiable monotonically increasing with respect to $\mu$.
Since $f_{\mu,v}$ is continuously differentiable with respect to $x\in\R$ and $\mu\in\R$,
the implicit function theorem derives that $\beta_{\mu,v}$ is differentiable with respect to $\mu$ and
\begin{eqnarray}
\frac{\partial\beta_{\mu,v}}{\partial\mu}
=-\frac{\frac{\partial f_{\mu,v}}{\partial \mu}(\beta_{\mu,v})}{\frac{\partial f_{\mu,v}}{\partial x}( \beta_{\mu,v})}.
\end{eqnarray}
To show that $\beta_{\mu,v}$ is monotonically increasing with respect to $\mu$,
it is enough to prove $\frac{\partial f_{\mu,v}}{\partial x}(\beta_{\mu,v})<0$ and  $\frac{\partial f_{\mu,v}}{\partial \mu}(\beta_{\mu,v})>0$.
From (\ref{f-x}), $\beta_{\mu,v}<\frac{\mu}{1-v}$ and the definition of $\beta_{\mu,v}$,
we have
\begin{eqnarray}
\frac{\partial f_{\mu,v}}{\partial x}(\beta_{\mu,v})
=-\frac{1-v}{v}\left(\frac{\mu}{1-v}-\beta_{\mu,v}\right)(1-\Phi_{\mu,v}(\beta_{\mu,v}))
<0. 
\Label{x-ineq}
\end{eqnarray}
In addition, since  $\phi(y)+y\Phi(y)>0$ on $\R$,\footnote{The function $g(y):=\phi(y)+y\Phi(y)$ is proven to be positive on $\R$ as follows. 
Since $\frac{dg}{dy}(y)=\Phi(y)$ is strictly positive on $\R$, it is enough to show that $\displaystyle\lim_{y\to-\infty}g(y)=0$.
Using  L'H\^opital's rule, $\displaystyle\lim_{y\to-\infty}g(y)=\displaystyle\lim_{y\to-\infty}(\Phi(y)/y^{-1})=\displaystyle\lim_{y\to-\infty}(\phi(y)/(-y^{-2}))=-\displaystyle\lim_{y\to-\infty}y^2\phi(y)=0$.}
\begin{eqnarray}
\frac{\partial f_{\mu,v}}{\partial \mu}(\beta_{\mu,v})
&=&-\frac{\partial \Phi_{\mu,v}}{\partial \mu}(\beta_{\mu,v}) -(1-\Phi(\beta_{\mu,v}))\frac{\frac{\partial \phi_{\mu,v}}{\partial \mu}(\beta_{\mu,v})}{\phi(\beta_{\mu,v})} \\
&=&\frac{1}{\sqrt{v}}\phi\left(\frac{\beta_{\mu,v}-\mu}{\sqrt{v}}\right) - \frac{\beta_{\mu,v}-\mu}{v}(1-\Phi_{\mu,v}(\beta_{\mu,v}))
\\
&=&\frac{1}{\sqrt{v}}
\left(
\phi\left(\frac{\beta_{\mu,v}-\mu}{\sqrt{v}}\right)-\frac{\beta_{\mu,v}-\mu}{\sqrt{v}}\left(1-\Phi\left(\frac{\beta_{\mu,v}-\mu}{\sqrt{v}}\right)\right)\right)\\
&=&\frac{1}{\sqrt{v}}
\left(
\phi\left(-\frac{\beta_{\mu,v}-\mu}{\sqrt{v}}\right)+
\left(-\frac{\beta_{\mu,v}-\mu}{\sqrt{v}}\right)\left(\Phi\left(-\frac{\beta_{\mu,v}-\mu}{\sqrt{v}}\right)\right)\right)
>0. 
\Label{mu-ineq}
\end{eqnarray}
Thus $\beta_{\mu,v}$ is proved to be monotonically increasing with respect to $\mu$.
\endproof

~

\noindent{\bf Proof of Lemma \ref{sol1}:}
We first show the inequality $\alpha_{\mu,v} >\frac{\mu}{1-v}$.
The existence of the unique solution $\alpha_{\mu,v}$ is equivalent to the existence of the unique zero point of the function with respect to $x$
\begin{eqnarray}
f_{\mu,v}(x)=\frac{\phi(x)}{\phi_{\mu,v}(x)}\Phi_{\mu,v}(x)-\Phi(x).
\end{eqnarray}
Then, similar to the derivation of (\ref{f-x}), we have 
\begin{eqnarray}
\frac{\partial f_{\mu,v}}{\partial x}(x)
=-\frac{v-1}{v}\left(x-\frac{\mu}{1-v}\right)\frac{\phi(x)}{\phi_{\mu,v}(x)}\Phi_{\mu,v}(x).
\Label{f-x2}
\end{eqnarray}
Thus,
the function $f_{\mu,v}$ is strictly monotonically increasing when $x<\frac{\mu}{1-v}$ and is strictly monotonically decreasing when $x>\frac{\mu}{1-v}$.
Since
\begin{eqnarray}
\lim_{x\to-\infty}f(x)=0, ~~\lim_{x\to\infty}f(x)=-1, 
\end{eqnarray}
the function $f_{\mu,v}$ has the unique zero point $\alpha_{\mu,v}>\frac{\mu}{1-v}$ due to the intermediate value theorem.

Next we show that $\alpha_{\mu,v}$ is differentiable monotonically decreasing with respect to $\mu$.
Since $f_{\mu,v}$ is continuously differentiable with respect to $x\in\R$ and $\mu\in\R$,
the implicit function theorem derives that $\alpha_{\mu,v}$ is differentiable with respect to $\mu$ and
\begin{eqnarray}
\frac{\partial\alpha_{\mu,v}}{\partial\mu}
=-\frac{\frac{\partial f_{\mu,v}}{\partial \mu}(\alpha_{\mu,v})}{\frac{\partial f_{\mu,v}}{\partial x}(\alpha_{\mu,v})}.
\end{eqnarray}
To show that $\alpha_{\mu,v}$ is monotonically decreasing with respect to $\mu$,
it is enough to prove $\frac{\partial f_{\mu,v}}{\partial x}(\alpha_{\mu,v})<0$ and  $\frac{\partial f_{\mu,v}}{\partial \mu}(\alpha_{\mu,v})<0$.
From (\ref{f-x2}), $\alpha_{\mu,v}>\frac{\mu}{1-v}$ and the definition of $\alpha_{\mu,v}$,
we have
\begin{eqnarray}
\frac{\partial f_{\mu,v}}{\partial x}(\alpha_{\mu,v})
~=~ -\frac{v-1}{v}\left(\alpha_{\mu,v}-\frac{\mu}{1-v}\right)\Phi(\alpha_{\mu,v})
~<~ 0. 
\Label{x-ineq2}
\end{eqnarray}
In addition, since  $\phi(y)+y\Phi(y)>0$ for any $y\in\R$,
\begin{eqnarray}
\frac{\partial f_{\mu,v}}{\partial \mu}(\alpha_{\mu,v})
&=&\frac{-\phi(\alpha_{\mu,v})}{\sqrt{v}\phi_{\mu,v}(\alpha_{\mu,v})}
\left(
\phi\left(\frac{\alpha_{\mu,v}-\mu}{\sqrt{v}}\right)+\frac{\alpha_{\mu,v}-\mu}{\sqrt{v}}\Phi\left(\frac{\alpha_{\mu,v}-\mu}{\sqrt{v}}\right)\right)
~<~ 0. 
\Label{mu-ineq2}
\end{eqnarray}
Thus $\alpha_{\mu,v}$ is proved to be monotonically decreasing with respect to $\mu$.
\endproof

~

\noindent{\bf Proof of Lemma \ref{monotone}:}
The interval $\mathcal{I}_{\mu,v}$ is easily derived from
\begin{eqnarray}
\frac{\phi(x)}{\phi_{\mu,v}(x)}
=\left\{
\begin{array}{ll}
e^{\frac{\mu^2}{2}}e^{-\mu x}&\hbox{ if }v=1\\
\sqrt{v}e^{\frac{-\mu^2}{2(1-v)}}e^{\frac{1-v}{2v}\left(x-\frac{\mu}{1-v}\right)^2}&\hbox{ if }v\ne1.
\end{array}
\right.
\Label{Nratio}
\end{eqnarray}
\endproof

~

To show Lemma \ref{lem.converse},
we prepare the following lemma.
\begin{lem}\Label{naiseki2}
Let  ${\bf a}=\{a_i\}_{i=0}^{I}$ and ${\bf b}=\{b_i\}_{i=0}^{I}$ be probability distributions and satisfy 
\begin{eqnarray}
\frac{a_{i-1}}{b_{i-1}} ~>~ \frac{a_i}{b_i}
\Label{ratio}
\end{eqnarray}
for $i=1,2,...,I$.
When ${\bf c}=\{c_i\}_{i=0}^{I}$ is a probability distribution and satisfies 
\begin{eqnarray}
\sum_{i=0}^k a_i ~\le~ \sum_{i=0}^k c_i~~~(k=0,1,...,I)
\Label{gijimajo}
\end{eqnarray}
the following holds:
\begin{eqnarray}
\sum_{i=0}^I \sqrt{a_i}\sqrt{b_i}
~\ge~\sum_{i=0}^I \sqrt{c_i}\sqrt{b_i}.
\Label{acineq}
\end{eqnarray}
Moreover, the equation holds for ${\bf c}$ if and only if ${\bf c}={\bf a}$.
\end{lem}

\begin{proof}
Let $D({\bf a})$ be the set of probability distributions on $\{0,1,...,I\}$ whose element ${\bf c}$ satisfies (\ref{gijimajo}).
Here, we define a function as $f_b({\bf c}):=F({\bf b},{\bf c})$ on $D({\bf a})$ by the fidelity $F$.
Then, to Lemma \ref{naiseki2},
it is enough to prove that ${\bf a}$ uniquely attains the maximum of $f_b$.
Since the function $f_{\bf b}$ is continuous, there exists a maximizer of $f_b$ in $D({\bf a})$.
In the following,
we will show that an arbitrary ${\bf c}\in D({\bf a})$ which is not ${\bf a}$ does not attain the maximum of $f_{\bf b}$.
Then, it implies that ${\bf a}$ is the unique maximizer of $f_{\bf b}$.

Note that there exist two natural numbers $i_0<l_0\in\{0,1,...,I\}$ such that
$a_{i_0}<c_{i_0}$ and $a_{j_0}< c_{j_0}$ hold
since ${\bf a}$ and ${\bf c}$ are different probability distributions and satisfy (\ref{gijimajo}).
Then we have 
\begin{eqnarray}
\frac{c_{i_0}}{b_{i_0}}>\frac{a_{i_0}}{b_{i_0}}>\frac{a_{j_0}}{b_{j_0}}>\frac{c_{j_0}}{b_{j_0}}.
\Label{ineqs}
\end{eqnarray}
Hence, 
for a constant $0<\epsilon_0<\min\{c_{i_0}-a_{i_0},a_{j_0}-c_{j_0}, \frac{1}{2}(c_{i_0}-c_{j_0})\}$,
the following holds\footnote{This fact follows from that for the function $g(\epsilon):=\sqrt{c_{i_0}-\epsilon}\sqrt{b_{i_0}}+\sqrt{c_{j_0}+\epsilon}\sqrt{b_{j_0}}$, its derivative $\frac{dg}{d\epsilon}(\epsilon)=\frac{1}{2}(\sqrt{\frac{b_{j_0}}{c_{j_0}+\epsilon}}-\sqrt{\frac{b_{i_0}}{c_{i_0}-\epsilon}})$ is positive for $\epsilon\in[0,\epsilon_0]$ due to (\ref{ineqs}).}:
\begin{eqnarray}
\sqrt{c_{i_0}}\sqrt{b_{i_0}}+\sqrt{c_{j_0}}\sqrt{b_{j_0}}
~<~\sqrt{c_{i_0}-\epsilon_0}\sqrt{b_{i_0}}+\sqrt{c_{j_0}+\epsilon_0}\sqrt{b_{j_0}}.
\Label{kaizen}
\end{eqnarray}
When we set ${\bf c}'$ as $c'_{i_0}:=c_{i_0}-\epsilon_0$, $c'_{j_0}:=c_{j_0}+\epsilon_0$ and $c'_k=c_k$ for $k\ne {i_0},{j_0}$,
${\bf c}'$ is in $D({\bf a})$ and $f_{\bf b}({\bf c}')>f_{\bf b}({\bf c})$ from (\ref{kaizen}).
Therefore, ${\bf c}$ does not attain the maximum of $f_ {\bf b}$, and only ${\bf a}$ attains the maximum.
\end{proof}

~

\noindent{\bf Proof of Lemma \ref{lem.converse}:}
When we set a sequence $\{x_i^I\}_{i=0}^{I}$ for $I\in\N$ as $x_i^I:=t+\frac{t'-t}{I}i$,
we have the following inequalities for an arbitrary $A$ satisfying the conditions in definition \ref{rn}:
\begin{eqnarray}
&&{\cal F}\left(\frac{dA}{dx},\phi_{\mu,v}\right) \nonumber\\
&=&
\int_{-\infty}^{t}\sqrt{\frac{dA}{dx}(x)} \sqrt{\phi_{\mu,v}(x)}dx
+\int_{t'}^{\infty}\sqrt{\frac{dA}{dx}(x)} \sqrt{\phi_{\mu,v}(x)}dx\\
&&\sum_{i=1}^{I} \int_{x_{i-1}^I}^{x_{i}^I}\sqrt{\frac{dA}{dx}(x)} \sqrt{\phi_{\mu,v}(x)}dx\nonumber\\
&\le&
\sqrt{A(t)} \sqrt{{\Phi}_{\mu,v}(t)} + \sqrt{1-A(t')} \sqrt{1-{\Phi}_{\mu,v}(\lambda)} \Label{ine1}\\
&&+\sum_{i=1}^{I} \sqrt{A(x_i^I)-A(x_{i-1}^I)} \sqrt{\Phi_{\mu,v}(x_i^I)-\Phi_{\mu,v}(x_{i-1}^I)},\nonumber\end{eqnarray}
where 
the inequality (\ref{ine1}) is obtained from the Schwarz inequality.

To evaluate (\ref{ine1}),
we will use Lemma \ref{naiseki2}.
To do so,
we set a sequence as 
\begin{eqnarray}
a_0:=\Phi(t),& a_i:=\Phi(x_i^I)-\Phi(x_{i-1}^I),& a_{I+1}:=1-\Phi(t'),\\
b_0:=\Phi_{\mu,v}(t),& b_i:=\Phi_{\mu,v}(x_i^I)-\Phi_{\mu,v}(x_{i-1}^I),& b_{I+1}:=1-\Phi_{\mu,v}(t'),\\
c_0:=A(t),& c_i:=A(x_i^I)-A(x_{i-1}^I),& c_{I+1}:=1-A(t'),
\end{eqnarray}
for $i=1,2,...,I$.
Then, we can verify that the sequences
${\bf a}=\{a_i\}_{i=0}^{I+1}$, ${\bf b}=\{b_i\}_{i=0}^{I+1}$ and ${\bf c}=\{c_i\}_{i=0}^{I+1}$ satisfy the assumptions in Lemma \ref{naiseki2} as follows. 
First, the sequences $a, b$ and $c$ are probability distributions by the definitions.
Second, ${\bf a}$ and ${\bf c}$ satisfy (\ref{gijimajo}) since the function $A$ satisfies $\Phi\le A$ by the definition.
In the following, we show (\ref{ratio}) for $i=1,2,...,I+1$.
For $i=1$, we have
\begin{eqnarray}
\frac{a_1}{b_1}
~=~\frac{\Phi(t+\frac{t'-t}{I})-\Phi(t)}{\Phi_{\mu,v}(t+\frac{t'-t}{I})-\Phi_{\mu,v}(t)}
~=~\frac{\phi(u_1)}{\phi_{\mu,v}(u_1)}
~<~\frac{\phi(s)}{\phi_{\mu,v}(s)}
~=~\frac{\Phi(t)}{\Phi_{\mu,v}(t)}
~=~\frac{a_0}{b_0}
\end{eqnarray}
where the existence of a real number $u_1\in(t,t+\frac{t'-t}{I})$ is guaranteed by the mean value theorem in the second equality,
the inequality follows from the conditions (I) and (III) in the assumption of Lemma \ref{lem.converse},
and the third equality comes from the conditions (II).
For $i=2,3,...,I$, we have
\begin{eqnarray*}
\frac{a_i}{b_i}
~=~\frac{\Phi(x_i^I)-\Phi(x_{i-1}^I)}{\Phi_{\mu,v}(x_i^I)-\Phi_{\mu,v}(x_{i-1}^I)}
~=~\frac{\phi(u_i)}{\phi_{\mu,v}(u_i)}
~<~\frac{\phi(u_{i-1})}{\phi_{\mu,v}(u_{i-1})}
~=~\frac{\Phi(x_{i-1}^I)-\Phi(x_{i-2}^I)}{\Phi_{\mu,v}(x_{i-1}^I)-\Phi_{\mu,v}(x_{i-2}^I)}
~=~\frac{a_{i-1}}{b_{i-1}}
\end{eqnarray*}
where the existence of real numbers $u_i\in(x_{i-1}^I,x_{i}^I)$ and $u_{i-1}\in(x_{i-2}^I,x_{i-1}^I)$ is guaranteed by the mean value theorem in the second and third equality,
the inequality follows from the conditions (I) and (III) in the assumption of Lemma \ref{lem.converse}.
%
For $i=I+1$, we have
\begin{eqnarray}
\frac{a_{I+1}}{b_{I+1}}
~=~\frac{1-\Phi(t')}{1-\Phi_{\mu,v}(t')}
~=~\frac{\phi(s')}{\phi_{\mu,v}(s')}
~<~\frac{\phi(u_{I+1})}{\phi_{\mu,v}(u_{I+1})}
~=~\frac{\Phi(t')-\Phi(t'-\frac{t'-t}{I})}{\Phi_{\mu,v}(t')-\Phi_{\mu,v}(t'-\frac{t'-t}{I})}
~=~\frac{a_I}{b_I}
\end{eqnarray}
where 
the second equality comes from the conditions (II),
the existence of a real number $u_{I+1}\in(t'-\frac{t'-t}{I},t')$ is guaranteed by the mean value theorem in the third equality
and 
the inequality follows from the conditions (I) and (III) in the assumption of Lemma \ref{lem.converse}.
From the above discussion,
the sequences
${\bf a}$, ${\bf b}$ and ${\bf c}$ satisfy the assumptions in Lemma \ref{naiseki2}.

Using Lemma \ref{naiseki2},
we obtain the following ineqality:
\begin{eqnarray}
&&\sqrt{A(t)} \sqrt{{\Phi}_{\mu,v}(t)} + \sqrt{1-A(t')} \sqrt{1-{\Phi}_{\mu,v}(\lambda)} \nonumber\\
&&+\sum_{i=1}^{I} \sqrt{A(x_i^I)-A(x_{i-1}^I)} \sqrt{\Phi_{\mu,v}(x_i^I)-\Phi_{\mu,v}(x_{i-1}^I)} \nonumber\\
&=&\sum_{i=0}^{I+1} \sqrt{c_i}\sqrt{b_i}\\
&\le&\sum_{i=0}^{I+1} \sqrt{a_i}\sqrt{b_i}\\
&=&
\sqrt{{\Phi}(t)} \sqrt{{\Phi}_{\mu,v}(t)} +\sqrt{1-{\Phi}(t')} \sqrt{1-{\Phi}_{\mu,v}(t')} \Label{ine2}\\
&&+\sum_{i=1}^{I} \sqrt{\Phi(x_i^I)-\Phi(x_{i-1}^I)} \sqrt{\Phi_{\mu,v}(x_i^I)-\Phi_{\mu,v}(x_{i-1}^I)} \nonumber\\
&=&
\sqrt{{\Phi}(t)} \sqrt{{\Phi}_{\mu,v}(t)} +\sqrt{1-{\Phi}(t')} \sqrt{1-{\Phi}_{\mu,v}(t')} \\
&&+\sum_{i=1}^{I} \sqrt{\frac{\Phi(x_i^I)-\Phi(x_{i-1}^I)}{x_{i}^I-x_{i-1}^I}} \sqrt{\frac{\Phi_{\mu,v}(x_i^I)-\Phi_{\mu,v}(x_{i-1}^I)}{x_{i}^I-x_{i-1}^I}}(x_{i}^I-x_{i-1}^I) \nonumber\\
&\overset{I\to\infty}{\longrightarrow}&
\sqrt{{\Phi}(t)} \sqrt{{\Phi}_{\mu,v}(t)}
+\sqrt{1-{\Phi}(t')} \sqrt{1-{\Phi}_{\mu,v}(t')}\\
&&~~~+\int_{t}^{t'}\sqrt{\phi(x)} \sqrt{\phi_{\mu,v}(x)}dx.\nonumber
\end{eqnarray}
\endproof

~

\noindent{\bf Proof of Lemma \ref{lem.converse2}:}
We separately prove Lemma \ref{lem.converse2} for four cases; (i) $v=1$ and $\mu\le0$, (ii) $v=1$ and $\mu>0$, (iii) $v>1$ and (iv) $v<1$.

\noindent{\it Proof of the case (i):}\quad
By the Schwarz inequality,
the left hand side of (\ref{lem.con.ineq3}) is less than or equal to $1$.
When $\mu\le0$,
since $A_{\mu,1}=\Phi_{\mu,1}$ satisfies ${\cal F}\left(\frac{dA_{\mu,1}}{dx}, \phi_{\mu,1}\right)=1$,
(\ref{lem.con.ineq3}) holds.

\vspace{0.5em}

\noindent{\it Proof of the case (ii):}\quad
For an arbitrary $\epsilon>0$,
we take an arbitrary real number $\lambda>0$ which satisfies
\begin{eqnarray}
\sqrt{{\Phi}(-\lambda)}\sqrt{{\Phi}_{\mu,v}(-\lambda)}
+\sqrt{1-{\Phi}(\lambda)}\sqrt{1-{\Phi}_{\mu,v}(\lambda)}
~<~ \epsilon
\end{eqnarray}
and show that $t:=-\lambda$ and $t':=\lambda$ satisfy the condition ($\star$) in Lemma \ref{lem.converse}  and (\ref{lem.con.ineq3}).

We have (\ref{lem.con.ineq3}) as follows:
\begin{eqnarray}
&&\sqrt{{\Phi}(-\lambda)} \sqrt{{\Phi}_{\mu,v}(-\lambda)}
+
\int_{-\lambda}^{\lambda}\sqrt{\phi(x)} \sqrt{\phi_{\mu,v}(x)}dx 
+\sqrt{1-{\Phi}(\lambda)} \sqrt{1-{\Phi}_{\mu,v}(\lambda)}\nonumber\\
&\le& \int_{\R}\sqrt{\phi(x)} \sqrt{\phi_{\mu,v}(x)}dx+\epsilon\\
&=&{\cal F}\left(\frac{dA_{\mu,1}}{dx}, \phi_{\mu,v}\right)+\epsilon,\Label{Aeq}
\end{eqnarray}
where we used $A_{\mu,1}=\Phi$ in (\ref{Aeq}).

Next, we show the condition in ($\star$) of Lemma \ref{lem.converse} for $t=-\lambda$ and $t'=\lambda$.
To do so,
it is enough to show that there exist $s$ and $s'$ which satisfy (I), (II) and (III) in ($\star$) of Lemma \ref{lem.converse}.
Since $\frac{\phi(x)}{\phi_{\mu,v}(x)}$ is continuous and strictly monotonically decreasing on $\R$ from Lemma \ref{monotone} 
and equations
\begin{eqnarray}
\lim_{x\to-\infty}\frac{\phi(x)}{\phi_{\mu,v}(x)}=\infty,~~ \lim_{x\to\infty}\frac{\phi(x)}{\phi_{\mu,v}(x)}=0
\end{eqnarray}
holds for $\mu>0$,
there uniquely exist real numbers $s$ and $s'$ such that \begin{eqnarray}
&\frac{{\Phi}(-\lambda)}{{\Phi}_{\mu,v}(-\lambda)}=\frac{\phi(s)}{\phi_{\mu,v}(s)},~~\frac{1-{\Phi}(\lambda)}{1-{\Phi}_{\mu,v}(\lambda)}=\frac{\phi(s')}{\phi_{\mu,v}(s')}&
\Label{defs}
\end{eqnarray}
by the intermediate value theorem.

In the following,
we prove that the above $s$ and $s'$ satisfy (I), (II) and (III) in ($\star$) of Lemma \ref{lem.converse}.
The condition (III) in ($\star$) is verified from Lemma \ref{monotone}.
The condition (II) in ($\star$) is obtained by the definition (\ref{defs}) of $s$ and $s'$.
The condition (I) in ($\star$), i.e. $s\le-\lambda<\lambda\le s'$ is obtained as follows.
To show $s<-\lambda$, it is enough to prove that $\frac{\phi(s)}{\phi_{\mu,v}(s)}\ge\frac{\phi(-\lambda)}{\phi_{\mu,v}(-\lambda)}$ because $\frac{\phi(x)}{\phi_{\mu,v}(x)}$ is monotonically decreasing on $\R$ by Lemma \ref{monotone}.
We have
\begin{eqnarray}
\frac{\phi(s)}{\phi_{\mu,v}(s)}
&=&\frac{{\Phi}(-\lambda)}{{\Phi}_{\mu,v}(-\lambda)}\nonumber\\
&=&\lim_{w\to-\infty}\frac{{\Phi}(-\lambda)-{\Phi}(w)}{{\Phi}_{\mu,v}(-\lambda)-{\Phi}_{\mu,v}(w)}\nonumber\\
&=&\lim_{w\to-\infty}\frac{\phi(s_w)}{\phi_{\mu,v}(s_w)}\Label{mean}\\
&\ge&\frac{\phi(-\lambda)}{\phi_{\mu,v}(-\lambda)}
\Label{phiine}
\end{eqnarray}
where the  existence of $s_w\in(w,-\lambda)$ in (\ref{mean}) is guaranteed by the mean value theorem and the inequality (\ref{phiine}) holds since $\frac{\phi(x)}{\phi_{\mu,v}(x)}$ is monotonically decreasing.
Thus, $s\le-\lambda$ holds. 
Similarly, $\lambda\le s'$ can be obtained by the following inequality:
\begin{eqnarray}
\frac{\phi(s')}{\phi_{\mu,v}(s')}
&=&\frac{{1-\Phi}(\lambda)}{1-{\Phi}_{\mu,v}(\lambda)}\nonumber\\
&=&\lim_{w\to\infty}\frac{{\Phi}(w)-{\Phi}(\lambda)}{{\Phi}_{\mu,v}(w)-{\Phi}_{\mu,v}(\lambda)}\nonumber\\
&=&\lim_{w\to\infty}\frac{\phi(s'_w)}{\phi_{\mu,v}(s'_w)}\Label{mean2}\\
&\le&\frac{\phi(\lambda)}{\phi_{\mu,v}(\lambda)}
\Label{phiine2}
\end{eqnarray}
where the  existence of $s'_w\in(\lambda,w)$ in (\ref{mean2}) is guaranteed by the mean value theorem and the inequality (\ref{phiine2}) holds since $\frac{\phi(x)}{\phi_{\mu,v}(x)}$ is monotonically decreasing.
Therefore, we obtained the condiiton (I) in ($\star$) of Lemma \ref{lem.converse}.
Thus, the proof is completed for the case (ii).

\vspace{0.5em}

\noindent{\it Proof of the case (iii):}\quad
We take a constant $\lambda>0$ which satisfies $\alpha_{\mu,v}<\lambda$ and 
\begin{eqnarray}
\sqrt{1-{\Phi}(\lambda)}\sqrt{1-{\Phi}_{\mu,v}(\lambda)}
~<~ \epsilon.
\Label{epine3}
\end{eqnarray}
We show that $t:=\alpha_{\mu,v}$ and $t':=\lambda$ satisfy the condition ($\star$) of Lemma \ref{lem.converse} and (\ref{lem.con.ineq2}).

We have (\ref{lem.con.ineq2}) as follows:
\begin{eqnarray}
&&\sqrt{\Phi(\alpha_{\mu,v})}\sqrt{\Phi_{\mu,v}(\alpha_{\mu,v})}
+\int_{\alpha_{\mu,v}}^{\lambda}\sqrt{\phi(x)}\sqrt{\phi_{\mu,v}(x)}dx 
+\sqrt{1-\Phi(\lambda)}\sqrt{1-\Phi_{\mu,v}(\lambda)}\nonumber\\
&\le&\sqrt{\Phi(\alpha_{\mu,v})}\sqrt{\Phi_{\mu,v}(\alpha_{\mu,v})}
+\int_{\alpha_{\mu,v}}^{\infty}\sqrt{\phi(x)}\sqrt{\phi_{\mu,v}(x)}dx+\epsilon, \nonumber\\
&=&{\cal F}\left(\frac{dA_{\mu,v}}{dx}, \phi_{\mu,v}\right)+\epsilon,
\Label{Aeq2}
\end{eqnarray}
where the equation (\ref{Aeq2}) comes from the definition (\ref{A_2}) of $A_{\mu,v}$.

Then, we show the condition in ($\star$) of Lemma \ref{lem.converse} for $t=\alpha_{\mu,v}$ and $t'=\lambda$.
To do so,
we assume the the existence of $s'$ which satisfies
\begin{eqnarray}
\frac{1-{\Phi}(\lambda)}{1-{\Phi}_{\mu,v}(\lambda)}
~=~ \frac{\phi(s')}{\phi_{\mu,v}(s')}.
\Label{defs2}
\end{eqnarray}
Then we can easily show that $s:=t=\alpha_{\mu,v}$ and the above $s'$ satisfy (I), (II) and (III) in ($\star$) of Lemma \ref{lem.converse} as follows.
%
%
The conditions (I) and (II) in ($\star$) are obtained by the definitions (\ref{threshold1}) and  (\ref{defs2}).
The condition (III) in ($\star$) is verified since $\frac{\phi(x)}{\phi_{\mu,v}(x)}$ is monotonically decreasing on $(\alpha_{\mu,v}, s')$ from Lemma \ref{monotone} and Lemma \ref{sol1}.

Thus, all we have to do is to show (\ref{defs2}).
We have 
\begin{eqnarray}
0~<~
\frac{{1-\Phi}(\lambda)}{1-{\Phi}_{\mu,v}(\lambda)}
&=&\lim_{w\to\infty}\frac{{\Phi}(w)-{\Phi}(\lambda)}{{\Phi}_{\mu,v}(w)-{\Phi}_{\mu,v}(\lambda)}\nonumber\\
&=&\lim_{w\to\infty}\frac{\phi(s'_w)}{\phi_{\mu,v}(s'_w)}\Label{mean3}\\
&\le&\frac{\phi(\lambda)}{\phi_{\mu,v}(\lambda)}
\Label{phiine3}
\end{eqnarray}
where the  existence of $s'_w\in(\lambda,w)$ in (\ref{mean3}) is guaranteed by the mean value theorem and the inequality (\ref{phiine3}) holds since $\frac{\phi(x)}{\phi_{\mu,v}(x)}$ is monotonically decreasing on $(\lambda,\infty)$ from Lemma \ref{monotone}
and $\frac{\mu}{1-v}<0<\lambda$.
Moreover, the equation
\begin{eqnarray}
\lim_{x\to\infty}\frac{\phi(x)}{\phi_{\mu,v}(x)}=0
\Label{phieq3}
\end{eqnarray}
holds due to $v>1$.
%
Since $\frac{\phi(x)}{\phi_{\mu,v}(x)}$ is continuous and strictly monotonically decreasing on $(\lambda,\infty)$, 
there uniquely exists a real number $s'\ge \lambda$ which satisfies (\ref{defs2})
by  (\ref{phiine3}), (\ref{phieq3}) and the intermediate value theorem.
%
%
Thus, the proof is completed for the case (iii).

\vspace{0.5em}

\noindent{\it Proof of the case (iv):}\quad
We take a constant $\lambda<0$ which satisfies $\lambda<\beta_{\mu,v}$ and 
\begin{eqnarray}
\sqrt{\Phi(\lambda)}\sqrt{\Phi_{\mu,v}(\lambda)}<\epsilon.
\Label{epine4}
\end{eqnarray}
To use Lemma \ref{lem.converse}, 
we verify that $t:=\lambda$ and $t':=\beta_{\mu,v}$ satisfy the condition ($\star$) of Lemma \ref{lem.converse} and (\ref{lem.con.ineq2}) in the following.

We have (\ref{lem.con.ineq2}) as follows:
\begin{eqnarray}
&&\sqrt{\Phi(\lambda)}\sqrt{\Phi_{\mu,v}(\lambda)}
+\int_{\lambda}^{\beta_{\mu,v}}\sqrt{\phi(x)}\sqrt{\phi_{\mu,v}(x)}dx
+\sqrt{1-\Phi(\beta_{\mu,v})}\sqrt{1-\Phi_{\mu,v}(\beta_{\mu,v})}\nonumber\\
&\le&\int_{-\infty}^{\beta_{\mu,v}}\sqrt{\phi(x)}\sqrt{\phi_{\mu,v}(x)}dx
+\sqrt{1-\Phi(\beta_{\mu,v})}\sqrt{1-\Phi_{\mu,v}(\beta_{\mu,v})}+\epsilon, \nonumber\\
&=&{\cal F}\left(\frac{dA_{\mu,v}}{dx}, \phi_{\mu,v}\right)+\epsilon.
\end{eqnarray}

Then, we show the condition in ($\star$) of Lemma \ref{lem.converse} for $t=\lambda$ and $t'=\beta_{\mu,v}$.
To do so,
we assume the the existence of $s\le \lambda$ which satisfies
\begin{eqnarray}
\frac{{\Phi}(\lambda)}{{\Phi}_{\mu,v}(\lambda)}=\frac{\phi(s)}{\phi_{\mu,v}(s)}.
\Label{defs4}
\end{eqnarray}
Then we can easily show that the above $s$ and $s':=t'=\beta_{\mu,v}$ satisfy (I), (II) and (III) in ($\star$) of Lemma \ref{lem.converse} as follows.
The conditions (I) and (II) in ($\star$) are obtained by the definitions (\ref{threshold2}) and  (\ref{defs4}).
The condition (III) in ($\star$) is verified since $\frac{\phi(x)}{\phi_{\mu,v}(x)}$ is monotonically decreasing on $(s,\beta_{\mu,v})$ from Lemma \ref{monotone} and Lemma \ref{sol2}.

Thus, all we have to do is to show (\ref{defs4}).
We have 
\begin{eqnarray}
\frac{{\Phi}(\lambda)}{{\Phi}_{\mu,v}(\lambda)}
&=&\lim_{w\to-\infty}\frac{{\Phi}(\lambda)-{\Phi}(w)}{{\Phi}_{\mu,v}(\lambda)-{\Phi}_{\mu,v}(w)}\nonumber\\
&=&\lim_{w\to\infty}\frac{\phi(s_w)}{\phi_{\mu,v}(s_w)}\Label{mean4}\\
&\ge&\frac{\phi(\lambda)}{\phi_{\mu,v}(\lambda)}
\Label{phiine4}
\end{eqnarray}
where the  existence of $s_w\in(-\infty, \lambda)$ in (\ref{mean4}) is guaranteed by the mean value theorem and the inequality (\ref{phiine4}) holds since $\frac{\phi(x)}{\phi_{\mu,v}(x)}$ is monotonically decreasing on $(-\infty,\lambda)$ from Lemma \ref{monotone} and $\lambda<0<\frac{\mu}{1-v}$.
Moreover, the equation
\begin{eqnarray}
\lim_{x\to-\infty}\frac{\phi(x)}{\phi_{\mu,v}(x)}=\infty
\Label{phieq4}
\end{eqnarray}
holds due to $v<1$.
%
Since $\frac{\phi(x)}{\phi_{\mu,v}(x)}$ is continuous and strictly monotonically decreasing on $(-\infty, \lambda)$, 
there uniquely exists a real number $s\le \lambda$ which satisfies (\ref{defs4})
by  (\ref{phiine4}), (\ref{phieq4}) and the intermediate value theorem.
Thus, the proof is completed for the case (iv).
\endproof


~

\subsection{Proof of Propositions: Properties of The Rayleigh-normal Distributions}
\Label{app.properties}
We prove Propositions \ref{Sym}, \ref{F3lim} and \ref{cum} which shows basic properties of the Rayleigh-normal distributions.

~

\noindent{\bf Proof of Proposition \ref{Sym}:}
We first show
\begin{eqnarray}
\alpha_{\mu,v}
~=~-\sqrt{v}\beta_{\frac{\mu}{\sqrt{v}},\frac{1}{v}} + \mu
\Label{relation}
\end{eqnarray}
for $v>1$.
We define a function $g$ on $\R$ as $g(x):=\sqrt{v}x-\mu$.
Since the solution $\alpha_{\mu,v}$ of the equation (\ref{threshold1}) is unique,
we only have to show that $-g(\beta_{\frac{\mu}{\sqrt{v}},\frac{1}{v}})$ is a solution of the equation (\ref{threshold1}).
Note that the function $g$ is represented as 
\begin{eqnarray}
g(x)
~=~\Phi_{-\mu,v}^{-1}\circ\Phi(x)
~=~\Phi^{-1}\circ\Phi_{\frac{\mu}{\sqrt{v}},\frac{1}{v}}(x).
\Label{g}
\end{eqnarray}
Then, the following holds
\begin{eqnarray*}
\frac{\Phi(-g(\beta_{\frac{\mu}{\sqrt{v}},\frac{1}{v}}))}{\Phi_{\mu,v}(-g(\beta_{\frac{\mu}{\sqrt{v}},\frac{1}{v}}))}
&=\frac{1-\Phi(g(\beta_{\frac{\mu}{\sqrt{v}},\frac{1}{v}}))}{1-\Phi_{-\mu,v}(g(\beta_{\frac{\mu}{\sqrt{v}},\frac{1}{v}}))}
&=\frac{1-\Phi_{\frac{\mu}{\sqrt{v}},\frac{1}{v}}(\beta_{\frac{\mu}{\sqrt{v}},\frac{1}{v}})}{1-\Phi(\beta_{\frac{\mu}{\sqrt{v}},\frac{1}{v}})}\\
&=\frac{\phi_{\frac{\mu}{\sqrt{v}},\frac{1}{v}}(\beta_{\frac{\mu}{\sqrt{v}},\frac{1}{v}})}{\phi(\beta_{\frac{\mu}{\sqrt{v}},\frac{1}{v}})}
&=\frac{\frac{d\Phi\circ g}{dx}(\beta_{\frac{\mu}{\sqrt{v}},\frac{1}{v}})}{\frac{d\Phi_{-\mu,v}\circ g}{dx}(\beta_{\frac{\mu}{\sqrt{v}},\frac{1}{v}})}\\
&=\frac{\phi(g(\beta_{\frac{\mu}{\sqrt{v}},\frac{1}{v}}))}{\phi_{\mu,v}(g(\beta_{\frac{\mu}{\sqrt{v}},\frac{1}{v}}))}
&=\frac{\phi(-g(\beta_{\frac{\mu}{\sqrt{v}},\frac{1}{v}}))}{\phi_{\mu,v}(-g(\beta_{\frac{\mu}{\sqrt{v}},\frac{1}{v}}))}.
\end{eqnarray*}
Thus, we have (\ref{relation}).
%
From (\ref{g}),
we have
\begin{eqnarray}
\Phi(\alpha_{\mu,v})
&=&1-\Phi_{\frac{\mu}{\sqrt{v}},\frac{1}{v}}(\beta_{\frac{\mu}{\sqrt{v}},\frac{1}{v}}),\\
\Phi_{\mu,v}(\alpha_{\mu,v})
&=&1-\Phi(\beta_{\frac{\mu}{\sqrt{v}},\frac{1}{v}}).
\end{eqnarray}
From direct calculation,
\begin{eqnarray}
I_{\mu,v}(\infty) - I_{\mu,v}(\alpha_{\mu,v})
&=& I_{\frac{\mu}{\sqrt{v}},\frac{1}{v}}(\beta_{\frac{\mu}{\sqrt{v}},\frac{1}{v}})
\Label{I-eq}
\end{eqnarray}
holds, and thus,
we obtain (\ref{sym}).
\endproof

~

\noindent{\bf Proof of Proposition \ref{F3lim}:}
Since the first equation obviously holds because of Proposition \ref{Sym},
we show only $\lim_{v\to 0}Z_{v}(\mu)=\Phi\left(\mu\right)$ in the following.

First, 
the definition in (\ref{Ix}) implies 
\begin{eqnarray}
0
~\le~ I_{\mu,v}(\beta_{\mu,v})
~\le~ I_{\mu,v}(\infty)
\overset{v\to 0}{\longrightarrow}0.
\Label{limI2}
\end{eqnarray}
Moreover, as shown below,
\begin{eqnarray}
\lim_{v\to 0}\Phi(\beta_{\mu,v}) &=&\Phi\left(\mu\right)
\Label{lim3'} \\
\lim_{v\to 0}\Phi_{\mu,v}(\beta_{\mu,v})&=&0.
\Label{lim3}
\end{eqnarray}
The relations (\ref{limI2}), (\ref{lim3'}) and (\ref{lim3}) yield the equation $\lim_{v\to 0}Z_{v}(\mu)=\Phi\left(\mu\right)$.
%

Thus, all we have to do is to show (\ref{lim3'}) and (\ref{lim3}).
First, we will show (\ref{lim3'}).
In order to show (\ref{lim3'}), it is enough to prove that $\lim_{v\to0}\beta_{\mu,v}=\mu$. 
First, we have $\beta_{\mu,v}<\frac{\mu}{1-v}$ from Lemma \ref{sol2}.
Since $\lim_{v\to0}\frac{\mu}{1-v}=\mu$, we obtain ${\rm limsup}_{v\to0}\beta_{\mu,v}\le\mu$.
Next, we set the function $f_{\mu,v}(x)$ as $(1-\Phi_{\mu,v}(x))-(1-\Phi(x))\frac{\phi_{\mu,v}(x)}{\phi(x)}$ and take an arbitrary $x\in\R$ such that $x<\mu$.
Since $\lim_{v\to0}\phi_{\mu,v}(x)=0$ and $\lim_{v\to0}\Phi_{\mu,v}(x)\le\frac{1}{2}$, $\lim_{v\to0}f_{\mu,v}(x)>0$ holds, in other words, $x$ is not a zero point of $f_{\mu,v}$ when $v$ is close to $0$.
Thus, we obtain $\lim_{v\to0}\beta_{\mu,v}\ge \mu$ since $\beta_{\mu,v}$ is the unique zero point of $f_{\mu,v}$.
Therefore, $\lim_{v\to0}\beta_{\mu,v}=\mu$ holds.

Next, we will show (\ref{lim3}).
In order to show (\ref{lim3}), it is enough to prove that $\lim_{v\to0}\frac{\beta_{\mu,v}-\mu}{\sqrt{v}}=-\infty$ by the definition of $\Phi_{\mu,v}$.
Since $\beta_{\mu,v}<\frac{\mu}{1-v}$ and $\lim_{v\to0}\frac{\mu}{1-v}=\mu$, $\beta_{\mu,v}$ is bounded above by some constant $\gamma$ as $\beta_{\mu,v}<\gamma$ when $v$ is close to $0$, and then, we have the following inequality:
\begin{eqnarray}
\frac{\phi(\beta_{\mu,v})}{\phi_{\mu,v}(\beta_{\mu,v})}
~=~\frac{1-\Phi(\beta_{\mu,v})}{1-\Phi_{\mu,v}(\beta_{\mu,v})} 
~\ge~ \Phi(-\gamma).
\end{eqnarray}
Thus, the following holds
\begin{eqnarray}
2\log \Phi(-\gamma)
&\le&-\mathrm{log}\frac{\phi(\beta_{\mu,v})}{\phi_{\mu,v}(\beta_{\mu,v})}\nonumber\\
&=&(1-v)
\left(
\frac{\beta_{\mu,v}-\mu(1-v)^{-1}}{\sqrt{v}}
\right)^2
+\mathrm{log}v
-\frac{\mu^2}{1-v}.
\Label{inequality12}
\end{eqnarray}
Therefore, we have 
\begin{eqnarray}
\lim_{v\to0}\left(\frac{\beta_{\mu,v}-\mu(1-v)^{-1}}{\sqrt{v}}\right)^2
~=~\lim_{v\to0}(1-v)^{-1}\left(2\log \Phi(-\gamma)-\log v + \frac{\mu^2}{1-v}\right)
~=~\infty.
\end{eqnarray}
Since Lemma \ref{sol2} guarantees that
$\beta_{\mu,v}-\frac{\mu}{1-v}<0$, 
we obtain
\begin{eqnarray}
\lim_{v\to0}\frac{\beta_{\mu,v}-\mu}{\sqrt{v}}
~=~\lim_{v\to0}\frac{\beta_{\mu,v}-\mu(1-v)^{-1}}{\sqrt{v}}
~=~ -\infty.
\end{eqnarray}
\endproof

~

\noindent{\bf Proof of Proposition \ref{cum}:}
From Proposition \ref{Sym}, it is enough to treat the case when $0\le v<1$.
This proposition is obvious for $v=0$ by the definition $Z_0:=\Phi$. 
In the following, we fix $0<v<1$
and  show that $Z_v$ satisfies the definition of a cumulative distribution function, that is,
the right continuous, monotonically increasing, 
\begin{eqnarray}
\displaystyle\lim_{\mu\to-\infty}Z_{v}(\mu)=0,
\Label{-infty}
\end{eqnarray}
and 
\begin{eqnarray}
\displaystyle\lim_{\mu\to\infty}Z_{v}(\mu)=1.
\Label{+infty}
\end{eqnarray}

First, we show that $Z_v(\mu)$ is continuous.
From Lemma \ref{sol2}, $\beta_{\mu,v}$ is differentiable, especially continuous, with respect to $\mu$.
Thus, 
$Z_v(\mu)$ is continuous from Theorem \ref{Zform}.

Next we show (\ref{-infty}).
From Lemma \ref{sol2}, 
the inequality 
\begin{eqnarray*}
\beta_{\mu,v}<\frac{\mu}{1-v}
\end{eqnarray*}
holds and thus 
\begin{eqnarray*}
\displaystyle\lim_{\mu\to-\infty}\beta_{\mu,v}=-\infty.
\end{eqnarray*}
Similarly, from Lemma \ref{sol1},
the inequality 
\begin{eqnarray*}
\alpha_{\frac{\mu}{\sqrt{v}},\frac{1}{v}}>\frac{\mu}{\sqrt{v}(1-v^{-1})}
\end{eqnarray*}
holds and thus 
\begin{eqnarray*}
\displaystyle\lim_{\mu\to-\infty}\alpha_{\frac{\mu}{\sqrt{v}},\frac{1}{v}}=\infty.
\end{eqnarray*}
Therefore, we obtain 
\begin{eqnarray*}
&&\displaystyle\lim_{\mu\to-\infty}\Phi(\beta_{\mu,v})~=~0,\\
&&\displaystyle\lim_{\mu\to-\infty}\Phi_{\mu,v}(\beta_{\mu,v})
~=~ \displaystyle\lim_{\mu\to-\infty}\Phi(-\alpha_{\frac{\mu}{\sqrt{v}},\frac{1}{v}})
~=~0.
\end{eqnarray*}
Since 
\begin{eqnarray*}
\displaystyle\lim_{\mu\to-\infty}I_{\mu,v}(\beta_{\mu,v})
~\le~ \displaystyle\lim_{\mu\to-\infty}I_{\mu,v}(\infty)
~=~ 0,
\end{eqnarray*}
we have (\ref{-infty}) from Theorem \ref{Zform}.

Next we show (\ref{+infty}).
From (\ref{relation}), 
the equality 
\begin{eqnarray*}
\beta_{\mu,v}=-\sqrt{v}\alpha_{\frac{\mu}{\sqrt{v}},\frac{1}{v}}+\mu
\end{eqnarray*}
holds.
Since $\alpha_{\mu,v}$ monotonically decreases with respect to $\mu$, 
we have 
\begin{eqnarray*}
\displaystyle\lim_{\mu\to\infty}\beta_{\mu,v}=\infty.
\end{eqnarray*}
Therefore, we obtain 
\begin{eqnarray*}
\displaystyle\lim_{\mu\to\infty}\Phi(\beta_{\mu,v})=1.
\end{eqnarray*}
Since 
\begin{eqnarray*}
\displaystyle\lim_{\mu\to-\infty}I_{\mu,v}(\beta_{\mu,v})
~\le~ \displaystyle\lim_{\mu\to-\infty}I_{\mu,v}(\infty)
~=~ 0,
\end{eqnarray*}
we have (\ref{+infty}) from Theorem \ref{Zform}.

Finally, we show that $Z_{v}(\mu)$ is monotonically increasing.
We define a shift operator $S_{\mu}$ for a map $A:\R\to\R$ by 
\begin{eqnarray*}
(S_{\mu}A)(x):=A(x-\mu).
\end{eqnarray*}
Then we have 
\begin{eqnarray*}
{\cal F}(S_{\mu}p,S_{\mu}q)={\cal F}(p,q).
\end{eqnarray*}
Thus when we set as 
\begin{eqnarray*}
{\cal A}(\mu):=\left\{A:\R\to[0,1]\Big|
\begin{array}{l}
A \text{ is a continuous differentiable monotone increasing}\\
\text{function such that} ~\Phi_{\mu,1}\le A\le 1
\end{array}
\right\},
\end{eqnarray*}
we obtain the following form of the Rayleigh-normal distribution function
\begin{eqnarray*}
Z_{v}(\mu)
&:=&1-\sup_{A\in{\cal A}(0)}{\cal F}\left(\frac{dA}{dx}, \phi_{\mu,v}\right)^2
~=~1-\sup_{A\in{\cal A}(0)}{\cal F}\left(S_{-\mu}\frac{dA}{dx}, S_{-\mu}\phi_{\mu,v}\right)^2\\
&=&1-\sup_{A\in{\cal A}(0)}{\cal F}\left(\frac{d(S_{-\mu}A)}{dx}, \phi_{0,v}\right)^2
~=~1-\sup_{\tilde{A}\in{\cal A}(-\mu)}{\cal F}\left(\frac{d\tilde{A}}{dx}, \phi_{0,v}\right)^2.
\end{eqnarray*}
For $\mu<\tau$,  ${\cal A}(-\mu)\supset{\cal A}(-\tau)$ holds, and thus we obtain $Z_{v}(\mu)\le Z_{v}(\tau)$.
\endproof

\subsection{Proof of Proposition: Optimal Majorization Conversion in Non-Asymptotic Theory} 
\Label{app.OPT}
We prove Proposition \ref{opt con} which gives the maximum fidelity of  the optimal majorization conversion from a non-uniform distribution  to the uniform distribution.

~

\noindent{\bf Proof of Proposition \ref{opt con}:}
We first verify that $J_{P,m}$ is well defined, i.e., the set in the definition of $J_{P,m}$
\begin{eqnarray}
\left\{j \in \{2, \ldots, m\} 
~\left|~ \frac{\sum_{i=j}^{|\mathcal{X}|} P^{\downarrow}(i)}{m+1-j}<P^{\downarrow}(j-1)\right.\right\}
\Label{set220}
\end{eqnarray}
is not empty when $P^{\downarrow}(1)> \frac{1}{m}$.
Indeed, $j=2$ is included in the set (\ref{set220}) as follows: 
\begin{eqnarray}
\frac{\sum_{i=2}^{|\mathcal{X}|} P^{\downarrow}(i)}{m+1-2}
~=~ \frac{1-P^{\downarrow}(1)}{m-1}
~<~ P^{\downarrow}(1)
~=~ P^{\downarrow}(2-1)
\end{eqnarray}
where the inequality follows from $P^{\downarrow}(1)> \frac{1}{m}$.

We show the followings:
\begin{eqnarray}
P&\prec& {\cal C}_m(P),
\Label{precC}\\
F^{\mathcal{M}}(P\to U_m)
&=&F(\mathcal{C}_m(P), U_m).
\Label{eqU}
\end{eqnarray} 

We prove (\ref{precC}).
It is enough to show that $P^{\downarrow}(j) \le {\cal C}_m(P)(j)$ for $J_{P,m}\le j \le m$.
Since 
\begin{eqnarray}
P^{\downarrow}(J_{P,m})
~\le~ \frac{\sum_{i=J_{P,m}+1}^{|\mathcal{X}|} P^{\downarrow}(i)}{m-J_{P,m}}
\nonumber
\end{eqnarray}
holds by the definition of $J_{P,m}$,
we have
\begin{eqnarray}
P^{\downarrow}(j)
~\le~ P^{\downarrow}(J_{P,m})
~\le~ \frac{\sum_{i=J_{P,m}}^{|\mathcal{X}|} P^{\downarrow}(i)}{m+1-J_{P,m}}
~=~{\cal C}_m(P)(j)
\nonumber
\end{eqnarray}
for $J_{P,m}\le j \le m$.

Next, we show (\ref{eqU}).
To do so,
we take a probability distribution $Q=(Q(1), ..., Q(m))$ which satisfies $P\prec Q$, $Q(i)\ge Q(i+1)$ and 
\begin{eqnarray}
F^{\mathcal{M}}(P\to U_m)
=F(Q, U_m).\Label{Qmax}
\end{eqnarray}
In the following, we assume that $Q\ne \mathcal{C}_m(P)$ and derive contradiction.
Then, we will have $Q=\mathcal{C}_m(P)$, 
and thus,
$\mathcal{C}_m(P)$ satisfies (\ref{eqU}) by (\ref{Qmax}).

Note that $Q^{\downarrow}(i)=Q(i)$ by the definition of $Q$.
Since $Q\ne \mathcal{C}_m(P)$, the set $\{1\le k\le L ~|~Q(k)>\mathcal{C}_m(P)(k)\}$ is not empty.
We introduce the integer 
\begin{eqnarray}
k_0:=\min\{1\le k\le m ~|~ Q(k)>\mathcal{C}_m(P)(k)\}.
\end{eqnarray}
We separately give proofs for two cases; $k_0\ge J_{P,m}$ and $k_0< J_{P,m}$.

We first treat the case that $k_0\ge J_{P,m}$.
Then, $Q(j)\le \mathcal{C}_m(P)(j)=P^{\downarrow}(j)$ for $j=1, ..., J_{P,m}-1$.
Since $P\prec Q$ by the definition of $Q$,
the equation $Q(j)=\mathcal{C}_m(P)=P^{\downarrow}(j)$ holds for $j=1, ..., J_{P,m}-1$.
Then we have 
\begin{eqnarray}
\sum_{j=J_{P,m}}^{m} Q(j)
~=~1-\sum_{j=1}^{J_{P,m}-1} Q(j)
~=~1-\sum_{j=1}^{J_{P,m}-1} P(j)
~=~\sum_{j=J_{P,m}}^{|{\cal X}|} P(j).
\nonumber
\end{eqnarray}
Since $\mathcal{C}_m(P)$ is uniform on $\{J_{P,m}, ..., L\}$ and $\mathcal{C}_m(P)\ne Q$, 
$Q$ is not uniform on $\{J_{P,m}, ..., m\}$.
Thus, the following strict inequality holds by the Schwarz inequality:
\begin{eqnarray}
F(Q, U_m)
&=&\sum_{j=1}^{J_{P,m}-1}\sqrt{\mathcal{C}_m(P)(j)}\sqrt{U_m(j)} + \sum_{j=J_{P,m}}^{m}\sqrt{Q(j)}\sqrt{U_m(j)}\nonumber\\
&<&\sum_{j=1}^{J_{P,m}-1}\sqrt{\mathcal{C}_m(P)(j)}\sqrt{U_m(j)} + \sqrt{\sum_{j=J_{P,m}}^{m} Q(j)}\sqrt{\sum_{j=J_{P,m}}^{m} U_m(j)}\nonumber\\
&=&\sum_{j=1}^{J_{P,m}-1}\sqrt{\mathcal{C}_m(P)(j)}\sqrt{U_m(j)} +\sqrt{\sum_{j=J_{P,m}}^{|{\cal X}|} P(j)}\sqrt{(m+1-J_{P,m})U_m(j)}\nonumber\\
&=&\sum_{j=1}^{J_{P,m}-1}\sqrt{\mathcal{C}_m(P)(j)}\sqrt{U_m(j)} +\sum_{j=J_{P,m}}^{m}\sqrt{\frac{\sum_{j=J_{P,m}}^{|{\cal X}|} P(j)}{m+1-J_{P,m}}}\sqrt{U_m(j)}\nonumber\\
&=&\sum_{j=1}^{J_{P,m}-1}\sqrt{\mathcal{C}_m(P)(j)}\sqrt{U_m(j)} + \sum_{j=J_{P,m}}^{m}\sqrt{\mathcal{C}_m(P)(j)}\sqrt{U_m(j)}\nonumber\\
&=&F(\mathcal{C}_m(P), U_m).
\Label{fid>1}
\end{eqnarray}
On the other hand, since $P\prec \mathcal{C}_m(P)$, the following holds:
\begin{eqnarray}\Label{fidle1}
F(\mathcal{C}_m(P), U_m)
~\le~ F^{\mathcal{M}}(P\to U_m)
~=~ F(Q, U_m).
\end{eqnarray}
The inequalities (\ref{fid>1}) and (\ref{fidle1}) are contradictory to each other. 

Next, we treat the case that $k_0< J_{P,m}$. 
In the following, we show that there exists a probability distribution $Q'$ such that  the following contradictory inequalities hold:
\begin{eqnarray}\Label{fidle2}
F(Q', U_m)\le F^{\mathcal{M}}(P\to U_m)
\end{eqnarray}
and
\begin{eqnarray}\Label{fid>2}
F(Q', U_m)> F^{\mathcal{M}}(P\to U_m).
\end{eqnarray}

We first define a probability distribution $Q'$.
Let 
\begin{eqnarray}
l_0:={\rm max}\{1\le k\le m ~|~ Q(k_0)=Q(k)\}.
\end{eqnarray}
Then, $Q(l_0)>Q(l_0+1)$ holds.
For 
\begin{eqnarray}
\epsilon:=\min\left\{\frac{1}{2}(Q(l_0)-Q(l_0+1)),~ Q(k_0)-\mathcal{C}_m(P)(k_0)\right\}>0,
\nonumber
\end{eqnarray} 
we define a probability distribution $Q'$ as 
\begin{eqnarray}
Q'(j)
=\left\{
\begin{array}{ll}
Q(l_0)-\epsilon & \hbox{ if }~j=l_0 \\
Q(l_0+1)+\epsilon &\hbox{ if }~j=l_0+1 \\
Q(j) & \hbox{ otherwise}. 
\end{array}
\right.
\end{eqnarray}
Note that $Q'(j)\ge Q'(j+1)$ and thus $Q'^{\downarrow}=Q'$.

We show (\ref{fidle2}) for $Q'$.
To do so,
we show $P\prec Q'$, 
that is,
\begin{eqnarray}
\sum_{j=1}^l P^{\downarrow}(j) 
~\le~ \sum_{j=1}^l Q'(j)
\Label{Q'ineq}
\end{eqnarray}
for any $l\in\N$.
For $l\ne l_0$,
$\sum_{j=1}^l Q'(j) = \sum_{j=1}^l Q(j)$ holds by the definition of $Q'$,
and thus, (\ref{Q'ineq}) holds by the majorization condition $P\prec Q$.
For $l= l_0$,
we obtain (\ref{Q'ineq}) as follows:
\begin{eqnarray*}
\sum_{j=1}^{l_0} Q'(j)
&
\ge& \sum_{j=1}^{l_0} Q(j) - (Q(k_0)-\mathcal{C}_m(P)(k_0))
~=~ \sum_{j=1}^{l_0} Q(j) - (Q(l_0)-\mathcal{C}_m(P)(k_0))
\\
&\ge& \sum_{j=1}^{l_0-1} P^{\downarrow}(j) + \mathcal{C}_m(P)(k_0)
~=~ \sum_{j=1}^{l_0-1} P^{\downarrow}(j) + P(k_0)\\
&\ge&\sum_{j=1}^{l_0} P^{\downarrow}(j),
\end{eqnarray*}
where the second inequality comes from that
$\mathcal{C}_m(P)(k_0)=P(k_0)$ holds by $k_0<J_{P,m}$ and the definition $\mathcal{C}_m(P)$.
Since $P\prec Q'$, (\ref{fidle2}) holds by the definition of $F^{\mathcal{M}}$.

Then we show (\ref{fid>2}).
We note taht $Q(j)=\mathcal{C}_m(P)(j)=P(j)$ for $j<k_0$ by the definition of $k_0$ and $\mathcal{C}_m(P)$.
Thus we obtain  (\ref{fid>2}) as follows:
\begin{eqnarray}
F(Q', U_m)
&=&\sum_{j\ne l_0, l_0+1}\sqrt{Q'(j)}\sqrt{U_m(j)} + \sqrt{Q'(l_0)}\sqrt{U_m(l_0)} + \sqrt{Q'(l_0+1)}\sqrt{U_m(l_0+1)}\nonumber\\
&=&\sum_{j\ne l_0, l_0+1}\sqrt{Q(j)}\sqrt{U_m(j)} + \sqrt{Q(l_0)-\epsilon}\sqrt{U_m(l_0)} + \sqrt{Q(l_0+1)+\epsilon}\sqrt{U_m(l_0+1)}\nonumber\\
&=&\sum_{j\ne l_0, l_0+1}\sqrt{Q(j)}\sqrt{U_m(j)} + (\sqrt{Q(l_0)-\epsilon} + \sqrt{Q(l_0+1)+\epsilon})\sqrt{1/m}\nonumber\\
&>&\sum_{j\ne l_0, l_0+1}\sqrt{Q(j)}\sqrt{U_m(j)} + (\sqrt{Q(l_0)} + \sqrt{Q(l_0+1)})\sqrt{1/m}\nonumber\\
&=&\sum_{j\ne l_0, l_0+1}\sqrt{Q(j)}\sqrt{U_m(j)} + \sqrt{Q(l_0)}\sqrt{U_m(l_0)} + \sqrt{Q(l_0+1)}\sqrt{U_m(l_0+1)}\nonumber\\
&=&F(Q, U_m)\nonumber\\
&=&F^{\mathcal{M}}(P\to U_m)
\Label{fid>2}
\end{eqnarray}
where the inequality is obtained by a simple calculation\footnote{A function $f(x):=\sqrt{Q(l_0)-x} + \sqrt{Q(l_0+1)+x}$ is strictly increasing when $-Q(l_0+1)<x< \frac{1}{2}(Q(l_0)-Q(l_0+1))$.}.
\endproof

\subsection{Proof of Lemmas: Limit of Tail Probability}
\Label{app.tail}

We prove Lemmas \ref{lem.central} and \ref{lem.central2} which gives the upper tail probability of $Q^{n\downarrow}$.

~

\noindent{\bf Proof of Proposition \ref{lem.central}:}
Let  $\tilde{S}_n^Q(x):=\{i\in\N~|~{Q^n}^{\downarrow}(i) ~\ge~ 2^{-H(Q)n-x\sqrt{n}}\}$ and $\tilde{S}_n^Q(x, x'):= \tilde{S}_n(x')\setminus \tilde{S}_n(x)$.
Then, the followings are obtained by the central limit theorem:
\begin{eqnarray}
\displaystyle\lim_{n\to\infty} Q^{n\downarrow}(\tilde{S}_n^Q(x))&=&\Phi\left(\frac{x}{\sqrt{V(Q)}}\right).\Label{central1'}
\end{eqnarray}
Next, we will show that the following holds for an arbitrary $\delta>0$:
\begin{eqnarray}
\tilde{S}_n^Q(x) ~\subset~ S_n^Q(x) ~\subset~ \tilde{S}_n^Q(x+\delta)
\Label{hougan}
\end{eqnarray}
when $n\in \N$ is large enough.
Since $\tilde{S}_n^Q(x)\subset S_n^Q(x)$ obviously holds for any $n\in\N$, it is enough to show $S_n^Q(x)\subset\tilde{S}_n^Q(x+\delta)$.

We note that ${Q^n}^{\downarrow}(i)- 2^{-H(Q)n-(x+\frac{\delta}{2})\sqrt{n}}\ge0$ holds if and only if $i\in \tilde{S}_n^Q(x+\frac{\delta}{2})$.
Thus, we have the following inequality for an arbitrary subset $\mathcal{S}\subset\N$:
\begin{eqnarray}
&&{Q^n}^{\downarrow}\left(\tilde{S}_n^Q\left(x+\frac{\delta}{2}\right)\right)- 2^{-H(Q)n-(x+\frac{\delta}{2})\sqrt{n}}\left|\tilde{S}_n^Q\left(x+\frac{\delta}{2}\right)\right|\nonumber\\
&\ge&
{Q^n}^{\downarrow}(\mathcal{S})- 2^{-H(Q)n-(x+\frac{\delta}{2})\sqrt{n}}|\mathcal{S}|.
\Label{subset}
\end{eqnarray}
In particular, when $\mathcal{S}=\tilde{S}_n^Q(x+\delta)$, we obtain 
\begin{eqnarray}
\left|\tilde{S}_n^Q\left(x+\frac{\delta}{2},x+\delta\right)\right|
~\ge~
2^{H(Q)n+(x+\frac{\delta}{2})\sqrt{n}}{Q^n}^{\downarrow}\left(\tilde{S}_n^Q\left(x+\frac{\delta}{2},x+{\delta}\right)\right)
\Label{subset2}
\end{eqnarray}
from (\ref{subset}).
Since 
\begin{eqnarray}
\lim_{n\to\infty}{Q^n}^{\downarrow}\left(\tilde{S}_n^Q\left(x+\frac{\delta}{2},x+{\delta}\right)\right)
~=~\Phi\left(\frac{x+\delta}{\sqrt{V(Q)}}\right)-\Phi\left(\frac{x+\frac{\delta}{2}}{\sqrt{V(Q)}}\right)
~>~0
\Label{subset3}
\end{eqnarray}
 from (\ref{central1'}), the following holds for large enough $n\in\N$:
\begin{eqnarray}
2^{H(Q)n+(x+\frac{\delta}{2})\sqrt{n}}{Q^n}^{\downarrow}\left(\tilde{S}_n^Q\left(x+\frac{\delta}{2},x+{\delta}\right)\right)
~\ge~
2^{H(Q)n+x\sqrt{n}}.
\Label{subset4}
\end{eqnarray}
Therefore, the inequality 
\begin{eqnarray}
\left|\tilde{S}_n^Q\left(x+\delta\right)\right|
~\ge~ \left|\tilde{S}_n^Q\left(x+\frac{\delta}{2},x+\delta\right)\right|
~\ge~ 2^{H(Q)n+x\sqrt{n}}
~=~ |S_n^Q(x)|
\Label{subset5}
\end{eqnarray}
holds from (\ref{subset2}) and (\ref{subset4}) for large enough $n\in\N$, and it implies (\ref{hougan}).
Thus, (\ref{central1}) is derived by (\ref{central1'}) and (\ref{hougan}).
\endproof

~

\noindent{\bf Proof of Proposition \ref{lem.central2}:}
For an arbitrarily small $\epsilon>0$, 
the following holds for large enough $n$:
\begin{eqnarray}
S_{\frac{H(P)}{H(Q)}n+b\sqrt{n}}^Q\left(\frac{x-bH(Q)}{\sqrt{\frac{H(P)}{H(Q)}}}-\epsilon\right)
~\subset~ S_n^P(x)
~\subset~ S_{\frac{H(P)}{H(Q)}n+b\sqrt{n}}^Q\left(\frac{x-bH(Q)}{\sqrt{\frac{H(P)}{H(Q)}}}+\epsilon\right).\Label{PtoQ}
\end{eqnarray}
Thus, we obtain  (\ref{central2}) as follows
\begin{eqnarray}
\displaystyle\lim_{n\to\infty} Q^{\frac{H(P)}{H(Q)}n+b\sqrt{n}\downarrow}(S_n^P(x))
&=&\displaystyle\lim_{n\to\infty} Q^{\frac{H(P)}{H(Q)}n+b\sqrt{n}\downarrow}\left(S_{\frac{H(P)}{H(Q)}n+b\sqrt{n}}^Q\left(\frac{x-bH(Q)}{\sqrt{\frac{H(P)}{H(Q)}}}\right)\right)\nonumber\\
&=&\Phi\left(\frac{x-bH(Q)}{\frac{H(P)}{H(Q)}\sqrt{V(Q)}}\right)\\
&=&\Phi_{P,Q,b}\left(\frac{x}{\sqrt{V(P)}}\right),
\end{eqnarray}
where the second equation is obtained from (\ref{central1}).
When $P$ is also a non-uniform distribution,
$V(P)$ is non-zero and thus $C_{P,Q}$ is well-defined.
Then, the right hand side of (\ref{central2}) coincides with the right hand side of (\ref{central3}) from the definition.
\endproof

\subsection{Proof of Lemma: The Existence of A Suitable Deterministic Conversion}\Label{app.Exist}


\noindent{\bf Proof of Lemma \ref{Wlem}:}
Let $g:S_1\to \N$ and $h:S_2\to \N$ be injective maps such that $W_g(B)(i)\ge W_g(B)(i+1)$ and $W_h(C)(j)\ge W_j(C)(j+1)$.
Then, we denote $W_g(B)$ and $W_h(C)$ by $B^{\downarrow}$ and $C^{\downarrow}$.
Let $k_0=0$ and  $k_i$ be a non-negative integer for $i\ge1$ such that 
\begin{eqnarray}
\sum_{k=k_{i-1}+1}^{k_i} C^{\downarrow} (k)
~\le~ B^{\downarrow}(i)
~\le~ \sum_{k=k_{i-1}+1}^{k_i+1} C^{\downarrow} (k).
\end{eqnarray}
The above $\{k_i\}_{i=1}^{\infty}$ always exist but may not be unique. 
For $i\ge1$ and $k_{i-1}+1\le k \le k_i$,
a map $f':\N\to\N$ defined by $g'(k)=i$  satisfies the following inequalities:
\begin{eqnarray}
B^{\downarrow}(i) 
&\le& \sum_{k=k_{i-1}+1}^{k_i+1} C^{\downarrow} (k)\nonumber\\
&=& \sum_{k=k_{i-1}+1}^{k_i} C^{\downarrow} (k) + C^{\downarrow} (k_i+1)\nonumber\\
&=& W_{g'}(C^{\downarrow})(i) + C^{\downarrow} (k_i+1)\nonumber\\
&\le& W_{g'}(C^{\downarrow})(i) + \max_{j\in S_2}C(j).
\Label{Wmap3}
\end{eqnarray}

In the following, we construct the map $f$ which satisfies (\ref{Wmap2}).
For $j\in S_2$ in which $g'\circ h(j)$ is in the image of $g$,
we define as $f(j):=g^{-1}\circ W_{g'}\circ h(j)$.
For other $j\in S_2$, 
there is no constraint for the value of $f(j)$ as long as $f(j)$ is in $S_2$.  
Then we obtain (\ref{Wmap2}) as follows:
\begin{eqnarray*}
B(i) 
&=&B^{\downarrow}(g(i)) \\
&\le& W_{g'}(C^{\downarrow})(g(i)) + \max_{j\in S_2}C(j)\\
&=& W(C)(i) + \max_{j\in S_2}C(j),
\end{eqnarray*}
where the inequality comes from (\ref{Wmap3}).
\endproof

\subsection{Proof of Lemma: Converse Part of Non-Uniform Case in Asymptotic Theory}
\Label{app.Converse}

\noindent{\bf Proof of Lemma \ref{20}:}
\noindent{\it Sketch of proof:}\quad
It is enough to show that 
\begin{eqnarray}
&&{\underset{n\to\infty}{\rm limsup}}F(P'^{\downarrow}_n, Q^{\frac{H(P)}{H(Q)}n+b\sqrt{n}\downarrow})\nonumber\\
&\le&\sqrt{\Phi(t)}\sqrt{\Phi_{P, Q, b}(t)}
+\int_{t}^{t'}\sqrt{\phi(x)} \sqrt{\phi_{P, Q, b}(x)}dx\nonumber\\
&&+\sqrt{1-{\Phi}(t')} \sqrt{1-{\Phi}_{P, Q, b}(t')}
\Label{Fineq}
\end{eqnarray}
for an arbitrary sequence $\{P'_n\}_{n=1}^{\infty}$ of probability distributions such that $P'_n\succ P_n$.

To show (\ref{Fineq}), we choose  $x_i^I:=t+\frac{t'-t}{I}i$ for a natural number $I$.
We set as 
\begin{eqnarray}
a_0^I:={\Phi}({x}_{0}^I), 
&a_i^I:={\Phi}({x}_i^I)-{\Phi}({x}_{i-1}^I),& 
a_{I+1}^I=1-{\Phi}({x}_{I}^I),\\
b_0^I={\Phi}_{P, Q, b}({x}_{0}), 
&b_i^I={\Phi}_{P, Q, b}({x}_{i}^I)-{\Phi}_{P, Q, b}({x}_{i-1}^I),&
b_{I+1}^I=1-{\Phi}_{P,Q,b}({x}_{I}^I).
\end{eqnarray}
Then, 
we will show that there is a sequence ${\bf c}:=\{c_i^I\}_{i=0}^{I+1}$
such that
\begin{eqnarray}
{\underset{n\to\infty}{\rm limsup}}F(P'^{\downarrow}_n, Q^{\downarrow}_n)
~\le~ \sum_{i=0}^{I+1}\sqrt{c_i^I} \sqrt{b_i^I}
\Label{cineq}
\end{eqnarray}
holds and 
${\bf a}:=\{a_i^I\}_{i=0}^{I+1}$, ${\bf b}:=\{b_i^I\}_{i=0}^{I+1}$ and ${\bf c}$ satisfy
the assumptions of Lemma \ref{naiseki2}.
Then, using Lemma \ref{naiseki2},
we have (\ref{Fineq}) as follows:
\begin{eqnarray}
{\underset{n\to\infty}{\rm limsup}}F(P'^{\downarrow}_n, Q^{\downarrow}_n)
&\le& \lim_{I\to\infty}\sum_{i=0}^{I+1}\sqrt{c_i^I} \sqrt{b_i^I}\nonumber\\
&\le&\lim_{I\to\infty}\sum_{i=0}^{I+1}\sqrt{a_i^I} \sqrt{b_i^I}\nonumber\\
&=&\sqrt{{\Phi}(t)}\sqrt{{\Phi}_{P, Q, b}(t)}\Label{nai3}\\
&&+\lim_{I\to\infty}\sum_{i=1}^{I}\sqrt{{\Phi}({x}_i^I)-{\Phi}({x}_{i-1}^I)} \sqrt{{\Phi}_{P, Q, b}({x}_{i}^I)-{\Phi}_{P, Q, b}({x}_{i-1}^I)}\nonumber\\
&&+\sqrt{1-{\Phi}(t')} \sqrt{1-{\Phi}_{P, Q, b}(t')}\nonumber\\
&=&\sqrt{{\Phi}(t)}\sqrt{{\Phi}_{P, Q, b}(t)}
+\int_{t}^{t'}\sqrt{\phi(x)}\sqrt{\phi_{P, Q, b}(x)}dx\nonumber\\
&&+\sqrt{1-{\Phi}(t')} \sqrt{1-{\Phi}_{P, Q, b}(t')}.\nonumber 
\end{eqnarray}

\noindent{\it Detailed proof:}\quad
From the above sketch of proof,
all we have to do is to show the existence of ${\bf c}:=\{c_i^I\}_{i=0}^{I+1}$
such that (\ref{cineq}) holds and ${\bf a}$, ${\bf b}$ and ${\bf c}$ satisfy the assumptions of Lemma \ref{naiseki2}.

First, we construct the sequence ${\bf c}=\{c_i^I\}_{i=0}^{I+1}$
such (\ref{cineq}) holds.
For simplicity, we denote $\sqrt{V(P)}x$ by $\tilde{x}$ for an arbitrary real number $x$ in the following.
Let a map $g_n^I:\N\to\{0,1,...I,I+1\}$ satisfy 
$g_n^I(l)=0$ for $l\in S_n^P(\tilde{x}_{0}^I)$,
$g_n^I(l)=i$ for $l\in S_n^P(\tilde{x}_{i-1}^I, \tilde{x}_{i}^I)$,
and 
$g_n^I(l)=I+1$ for $l\in \N\setminus S_n^P(\tilde{x}_{I}^I)$.
Then we have the following inequality  by the monotonicity of the fidelity for $g_n^I$: 
\begin{eqnarray}
&&F(P'^{\downarrow}_n, Q^{\frac{H(P)}{H(Q)}n+b\sqrt{n}\downarrow})\nonumber\\
&\le&F(g_n^I(P'^{\downarrow}_n), g_n^I(Q^{\frac{H(P)}{H(Q)}n+b\sqrt{n}\downarrow}))\nonumber\\
&=&\sqrt{P'^{\downarrow}_n( S_n^P(\tilde{x}_{0}^I))} \sqrt{Q^{\frac{H(P)}{H(Q)}n+b\sqrt{n}\downarrow}( S_n^P(\tilde{x}_{0}^I))}\Label{H6-18-1}\\
&&+\sum_{i=1}^{I}\sqrt{P'^{\downarrow}_n( S_n^P(\tilde{x}_{i-1}^I, \tilde{x}_{i}^I))} \sqrt{Q^{\frac{H(P)}{H(Q)}n+b\sqrt{n}\downarrow}( S_n^P(\tilde{x}_{i-1}^I, \tilde{x}_{i}^I))}\nonumber\\
&&+\sqrt{1-P'^{\downarrow}_n( S_n^P(\tilde{x}_{I}^I))} \sqrt{1-Q^{\frac{H(P)}{H(Q)}n+b\sqrt{n}\downarrow}(S_n^P(\tilde{x}_{I}^I))}.\nonumber
\end{eqnarray}
Here, we denote the right hand side of (\ref{H6-18-1}) by $R_I(n)$.
Then, we can choose a subsequence $\{n_l\}_l\subset\{n\}$ 
such that 
$\lim_{l \to \infty} R_I(n_l)
={\underset{n\to\infty}{\rm limsup}} R_I(n)$
and the limits 
\begin{eqnarray}
c_0^I&:=&\displaystyle\lim_{l\to\infty} P'^{\downarrow}_{n_l}( S_{n_l}(\tilde{x}_{0}^I)),\\
c_i^I&:=&\displaystyle\lim_{l\to\infty} P'^{\downarrow}_{n_l}( S_{n_l}(\tilde{x}_{i-1}^I, \tilde{x}_{i}^I)),\\
c_{I+1}^I&:=&1-\displaystyle\lim_{l\to\infty} P'^{\downarrow}_{n_l}( S_{n_l}(\tilde{x}_{I}^I))
\end{eqnarray}
exist for $i=1, \ldots, I$.
Hence, we obtain (\ref{cineq}) as follows:
\begin{eqnarray}
{\underset{n\to\infty}{\rm limsup}}F(P'^{\downarrow}_n, Q^{\downarrow}_n)
&\le&{\underset{n\to\infty}{\rm limsup}}R_I(n)
=\lim_{l\to\infty} R_I(n_l)\nonumber\\
&=&\sqrt{c_0^I} \sqrt{{\Phi}_{P, Q, b}({x}_{0})}\Label{c_0eq}\\
&&+\sum_{i=1}^{I}\sqrt{c_i^I} \sqrt{{\Phi}_{P, Q, b}({x}_{i}^I)-{\Phi}_{P, Q, b}({x}_{i-1}^I)}\nonumber\\
&&+\sqrt{c_{I+1}^I} \sqrt{1-{\Phi}_{P, Q, b}({x}_{I}^I)},\nonumber\\
&=&\sum_{i=0}^{I+1}\sqrt{c_i^I} \sqrt{b_i^I}\nonumber
\end{eqnarray} 
where the equality (\ref{c_0eq}) follows from Lemma \ref{lem.central}.

Next,
we show that ${\bf a}$, ${\bf b}$ and ${\bf c}$ satisfy the assumptions of Lemma \ref{naiseki2}.
First,
${\bf a}$, ${\bf b}$ and ${\bf c}$ are probability distributions by the definitions. 
Second, (\ref{gijimajo}) is obtained as follows:
\begin{eqnarray}
&&\sum_{i=0}^k a_i
~=~\Phi({x}_{k}^I)
~=~\lim_{l\to\infty} P^{n_l\downarrow}(S_{n_l}^P(\tilde{x}_{k}^I))
~\le~ \lim_{l\to\infty} P'^{\downarrow}_{n_l}(S_{n_l}^P(\tilde{x}_{k}^I))
~=~\sum_{i=0}^k c_i^I
\end{eqnarray} 
holds for $k=0,1,...,I$ since $P^n\prec P'_n$, and $\sum_{i=0}^{I+1} a_i
=1=\sum_{i=0}^{I+1} c_i^I$ holds.
Finally, we show (\ref{ratio}) for $i=1,2,...,I$.
The equations $a_0/b_0=\phi(s)/\phi_{P,Q,b}(s)$ and $a_{I+1}/b_{I+1}=\phi(s')/\phi_{P,Q,b}(s')$ hold by the assumption (II).
Moreover, there exist $z_i\in[x_{i-1}^I,x_i^I]$ for $i=1,...,I$ such that 
 $a_i/b_i=\phi(z_i)/\phi_{P,Q,b}(z_i)$ for $i=1,...,I$ due to the mean value theorem. 
Then $z_i\in(s,s')$ holds because of the relation $t=x_{0}^I\le x_{i-1}^I\le z_i\le x_{i}^I \le x_I^I= t' $ and the assumption (I).
Since $\phi(x)/\phi_{P,Q,b}(x)$ is monotonically decreasing on $(s,s')$ by the assumption (III), we have (\ref{ratio}) for $i=1,...,I+1$.
\endproof

\vspace{0.5em}


\end{document}